\newcommand{\R}{\mathbb{R}}
\providecommand{\R}{} \renewcommand{\R}{{\mathbb R}}
\newcommand{\PP}{{\mathbb P}}
\newcommand{\QQ}{{\mathbb Q}}
\newcommand{\EE}{{\mathbb E}}
\newcommand{\HH}{{\mathcal H}}
\newcommand{\UU}{\mathbb{U}}
\newcommand{\MM}{{\mathcal M}}
\newcommand{\EN}{{\mathcal E}}
\newcommand{\NA}{\mathcal{N\!A}}
\newcommand{\Var}{\operatorname{\mathbb{V}ar}}
\newcommand{\scal}[2]{\ensuremath{\left\langle #1, #2 \right\rangle}}
\newcommand{\sbrac}[1]{\ensuremath{\left[#1\right]}}
\newcommand{\cbrac}[1]{\ensuremath{\left\{#1\right\}}}
\def\E{\mathbb{E}}
\def\Q{\mathbb{Q}}
\def\P{\mathbb{P}}
\def\R{\mathbb{R}}
\def\F{\mathcal{F}}
\def\bF{\mathbf{F}}
\def\Rd{\mathbb{R}^d}
\def\ud{\mathrm{d}}
\def\dx{\mathrm{d}x}
\def\dsdx{(\mathrm{d}s,\mathrm{d}x)}
\def\1{1}
\def\e{\mathrm{e}}
\def\lev{L\'evy\xspace}
\def\lk{L\'evy--Khintchine\xspace}
\def\lito{L\'evy--It\^o\xspace}
\def\procs{processes\xspace}
\newcommand{\prozess}[1][L]{{\ensuremath{#1=(#1_t)_{t\in[0,T]}}}\xspace}
\numberwithin{equation}{section}
\DeclareMathAlphabet{\mathpzc}{OT1}{pzc}{m}{it}
\newcommand{\UUU}{\mathcal{U}}
\newcommand{\QQQ}{\mathrm{q}}%{\mathnormal{q}}
\newcommand{\sss}{\mathpzc{s}}
\begin{document}

\title{An equilibrium model for spot and forward prices of 
commodities\tnoteref{t}}
\tnotetext[t]{We are grateful to Ulrich Horst, Kasper Larsen, Dilip Madan, 
Christoph Mainberger, Steven Shreve, Ronnie Sircar and Stathis Tompaidis for 
helpful discussions and suggestions. We thank two anonymous referees for their 
valuable comments that have significantly improved the paper. We also thank 
seminar participants at Carnegie Mellon University, the 11th German Probability 
and Statistics Days in Ulm, the Workshop in New Directions in Financial 
Mathematics and Mathematical Economics in Banff, the Workshop on Stochastic 
Methods in Finance and Physics in Crete, the International Conference in 
Advanced Finance and Stochastics in Moscow, the Conference on Frontiers in 
Financial Mathematics in Dublin, the Conference on Stochastics of Environmental 
and Financial Economics in Oslo, the 8th World Congress of the Bachelier Finance 
Society in Brussels, and the Seminar in Stochastic Analysis and Stochastic 
Finance in Berlin for their comments. Financial support from the IKYDA project 
``Stochastic Analysis in Finance and Physics'' is gratefully acknowledged.}

\author[a,1]{Michail Anthropelos}
\author[b,2]{Michael Kupper}
\author[c,3]{Antonis Papapantoleon}

\address[a]{University of Piraeus, 80 Karaoli and Dimitriou Str., 18534
        Piraeus, Greece}
\address[b]{University of Konstanz, Universit\"atstra\ss e 10, 78464 Konstanz}
\address[c]{Institute of Mathematics, TU Berlin, Stra\ss e des 17. Juni 136,
        10623 Berlin, Germany}

\eMail[1]{anthropel@unipi.gr}
\eMail[2]{kupper@uni-konstanz.de}
\eMail[3]{papapan@math.tu-berlin.de}

%\myThanks[s]{}

\date{}

\abstract{We consider a market model that consists of financial investors and 
producers of a commodity. Producers optionally store some production for future 
sale and go short on forward contracts to hedge the uncertainty of the future 
commodity price. Financial investors take positions in these contracts in order 
to diversify their portfolios. The spot and forward equilibrium commodity prices 
are endogenously derived as the outcome of the interaction between producers 
and investors. Assuming that both are utility maximizers, we first prove the 
existence of an equilibrium in an abstract setting. Then, in a framework where 
the consumers' demand and the exogenously priced financial market are 
correlated, we provide semi-explicit expressions for the equilibrium prices and
analyze their dependence on the model parameters. The model can explain why 
increased investors' participation in forward commodity markets and higher 
correlation between the commodity and the stock market could result in higher 
spot prices and lower forward premia.}

\keyWords{Commodities, equilibrium, spot and forward prices, forward premium,
stock and commodity market correlation.}

\keyJELClassification{Q02, G13, G11, C62}
\keyAMSClassification{91B50, 90B05}

\maketitle\frenchspacing

\section{Introduction}\label{sec:intro}

Since the early 2000s, the futures and forward contracts written on commodities 
have been a widely popular investment asset class for many financial 
institutions. As indicatively reported in \cite{CFTC08}, the value of 
index-related futures' holdings in commodities grew from \$15 billion in 2003
to more than \$200 billion in 2008.\footnote{According to a recent estimation
by Barclays Capital, the total commodity-linked assets were around \$325
billion at the end of June 2014 (see Barclays Investment Bank 2014 and
\cite{HendPeaWan14}).} This significant inflow of funds has coincided, up to
2008, with a steep increase in the spot and futures prices of the majority of
commodities, especially the ones included in popular commodity indices. The
comovement of amounts invested in commodity-linked securities and the prices of
the associated commodities continued even during the prices' bust in 2008 and
their recovery which started in 2009; see e.g. the empirical studies presented
in \citet{Singleton14}, \citet{TangXio2012} and \citet{BuyRob14}.\footnote{In
\cite{Singleton14}, it is shown that investors' index positions in crude oil
are highly correlated with crude oil prices, while \cite{BuyRob14} provides
statistical evidence which indicates that the excess speculation in U.S.
Commodity Futures Markets increased from 11\% in 2000 to more than 40\% in
2008. In theoretical terms, this correlation affects heavily the agents'
optimization problems (see among others \cite{CadHau06}).} Furthermore, several
statistical studies have found that the correlation between commodity prices
and the stock market has grown during the last years. As an example,
\cite{BuyRob14} argues that the correlation of the U.S. stock market (weekly)
returns and the returns of the GSCI commodity indices varies from -38\% to 40\%
depending on the period, and stays positive and away from zero after 2009;
further statistical evidence on the increased correlation are given in
\cite{TangXio2012}, \cite{Singleton14} and in \citet{SilThorp13}. Therefore,
the investment strategies of financial institutions on stock and commodity 
markets should be considered in the same optimization problem and not 
independently.

The booms and busts of the prices of major commodities during the last decade
has naturally captured the interest of the academic community. The main question
addressed is whether the behavior of commodities' prices is caused by the
(enhanced by the financialization) positions of speculators or by the
fluctuations of fundamental economic factors (i.e. increased demand and weakend
supply).\footnote{The opinion that speculative forces are the main reason for
the booms and busts of commodity prices (especially in oil and gas markets) is
supported by the empirical studies in
\cite{HendPeaWan14,Singleton14,TangXio2012}, while
\cite{BuyRob14,HamWu12,RouTang12,StollWha2010} provide statistical tests that
are in favor of the fundamental economic reasoning (see also
\cite{JuvPetre12,KilMur14}). For a more detailed literature survey on this
debate, we refer the reader to \cite{HendPeaWan14} and \cite{RouTang12}.} Even
though there exist several empirical studies, the theoretical approaches that
link spot and forward prices of commodities with the rest of the investment
assets are scarce.

The main goal of this paper, is to establish an equilibrium model that allows
to endogenously derive both the spot and the forward price of a commodity, and
is flexible and general enough to include not only the randomness of the
commodity demand and the commodity holders' storage option, but also risk averse
agents and correlation between the stock and the commodity market. In our model,
equilibrium commodity prices are formed as the outcome of the interaction
between market participants, and simultaneously clear out the spot and the
forward market. The forces that lead to the market equilibrium are the
producers' goal to maximize their spot revenues and optimally hedge the risk of
the future commodity price, and the investors' goal to achieve an optimal
portfolio strategy that, besides the stock market, includes also a position in
the commodity's forward contracts. The model offers new insights on how specific
model inputs, such as the agents' risk aversion, the correlation of the stock
and commodity market and the uncertainty of the future commodity price,
influence the equilibrium prices and the related risk premia.

\subsection{Model description}

We consider a model of two points in time: the initial one and a given
(short-term) future horizon $T$. We assume that the main market participants are
the representative agents of the commodity's holders/suppliers and the financial
investors/speculators, who shall hereafter be called
\textit{producers}\footnote{We refer to the commodity sellers as producers by
following the related literature (see e.g.
\cite{AchaLochRama13,Baker12,RouTang12}). Some authors impose that the
representative agent of the supply side is the commodity refiners or storage
managers who hold the production and in some cases control its supply in the
market (see for instance \cite{EkeLauVil2014,Pirrong11}). In our model, the
production schedule is a given input and the spot revenues of the producers come
only from the commodity sales, thus our findings apply directly in case the
refiners are the ones that distribute the commodity in the market.} and
\textit{investors}\footnote{As highlighted in \cite{KniPin13}, it is rather
difficult to identify whether the long position in the commodity forward
contracts is taken by investors who just want to diversify or by speculators who
invest based on specific predictions about the move of commodity prices. As a
matter of fact, the holders of a long position in the forward contract can also
be called insurers, since they undertake some of the producers' risk. Without
attempting to enter in this debate, we will call the producers' counterparties
in the forward contract investors.} respectively.

The producers' source of income are the revenues from spot and future sales.
While the commodity spot price could be determined by the spot commodity demand
function, the future price is subject to demand shocks. Assuming that producers
are risk averse, their goal is not only to maximize their spot revenues, but
also to reduce their risk exposure to the future commodity price by maximizing
their expected utility.\footnote{The massive use of derivatives by natural gas
and crude oil producers presented in Table 1 of \cite{AchaLochRama13} is a
clear evidence that commodity producers are indeed risk averse regarding their
future revenues.} If the production schedule at the initial and future time is
a predetermined pair of units, producers have two decisions to make: what
amount of the production to supply in the spot market (inventory management) and
what position to take in the forward contract (hedging strategy). Provided they
know the demand function of the commodity's consumers at the initial time, they
can determine the commodity spot price by choosing the amount of the inventory
they will hold up to the terminal time. However, random demand shocks at time
$T$ will shift the whole demand function to lower or higher levels (for
instance, in Section \ref{model-ctdm} we suppose that the random shift of the
demand function is driven by a vector of stochastic market factors). Producers 
hedge the risk which stems from the future time demand function by taking a 
short position in forward contracts written on the same commodity and with 
maturity equal to $T$. The fact that the inventory will also be sold at time $T$ 
makes the forward hedging position even more important for the 
producers.\footnote{The significance of the inventory policy in the spot and 
forward price fluctuations has been highlighted by many authors, see e.g. 
\cite{Hamilton09,KilMur14,RouSeppSpat00,Singleton14} for detailed discussions 
and statistical evidence, especially in the popular example of crude oil 
prices.}

The producers' hedging demand is covered by financial investors, who take the
opposite position in the forward commodity contracts and thus share some of the
future price uncertainty risk, possibly against a premium. They invest
optimally in an exogenously priced stock market\footnote{Here, exogenously
means that investors in the commodity forward contract are price takers in the
stock market. This implies that the volume in the commodity forward contracts
is not so large to influence the price of the stock market, an assumption that
is supported by the corresponding market volumes.} and are willing to take the
future commodity price risk in order to better diversify their portfolio.
Indeed, as mentioned above, the correlation between commodity and stock market
indices has been shown to be away from zero. This correlation could be
incorporated in a model where the stock market price is driven by the same
stochastic factors that drive the evolution of the commodity demand function.
Given this correlation, the optimal investment strategy in the stock and the
commodity market should be considered in the same optimization problem. As in
the producers' side, we assume that the investors are represented by an agent
who is a utility maximizer and whose investment choices are the (possibly
dynamic) trading strategy in the stock market and the position in the forward
commodity contract.

The optimization choices of both producers and investors clearly depend on
the forward commodity price. We define as \textit{equilibrium forward price} the
price that clears the forward market and at which both participants' expected
utilities are maximized. Given the forward price, the producers optimally choose
the inventory policy, which in turn gives the initial commodity supply and thus
determines the spot price through the initial demand function of the consumers.
Therefore, by deriving the equilibrium forward commodity price, the equilibrium
spot price as well as the producers' optimal inventory policy are also
endogenously derived in our model.

\subsection{Findings and contributions}

The main mathematical result of this work is to prove, under CARA preferences 
and upon some minor technical assumptions, that equilibrium spot and forward 
commodity prices exist. In this proof, we use standard duality arguments to show 
that both producers' and investors' optimization problems are well-defined and 
admit finite solutions. The existence of the forward commodity equilibrium price 
is first proved for every fixed producers' storage choice. Then, we show that 
this equilibrium is \textit{stable} with respect to the storage choice, which in 
turn guarantees the existence of the commodity market equilibrium. In fact, this 
stability result is interesting in its own right, since it shows that the 
market clearing is stable with respect to any control variable which belongs to 
a closed set of real numbers. 

The constructive nature of the aforementioned proof allows us to derive
implicit formulas for spot and forward prices, which can be used to investigate
how the main model parameters influence the commodity market. We illustrate this
using two examples with factors driven by \lev processes; first a Brownian
motion and then a jump-diffusion process. The main results are given below.

Focusing first on the equilibrium spot price, our results imply it is monotonic
with respect to the agents' risk aversion coefficient; it is increasing for
producers and decreasing for investors (see Figures \ref{fig:BM-1} and
\ref{fig:BM-2}). When producers are more concerned about the commodity future
price uncertainty, they increase their position in the forward contract and
hence lock the selling price at the terminal time. As long as the hedging
position is counterpartied by the investors, high risk averse producers can
increase their certain revenues today and at the same time hedge their future
price risk. What is very important in this monotonicity is the correlation
between the stock market and the demand random shocks. As it is illustrated in
Section \ref{sec:illu}, when correlation is away from zero, the equilibrium spot
price is pushed upwards. This is mainly because higher correlation means that
investors are able to better hedge their risk exposure in the commodity market
by adjusting their investments in the stock market. Hence, they are willing to
receive a lower forward premium, thus making hedging cheaper for the producers.
In particular, when the investors are more risk averse, they reduce their
share in the future commodity price risk;  thus, producers cannot hedge their
future price risk which forces them to increase their supply in the spot market.
Therefore, the spot price is a decreasing function of investors' risk aversion
(all else equal). Also, it follows that the producers' ability to hedge their 
risk tends to increase the spot equilibrium prices. Moreover, our model yields 
that the existence of a forward contract in the commodity market stabilizes 
spot prices when there is scarcity of the commodity at the terminal time; in 
particular, the presence of forward contracts increases the current spot and 
decreases the expected future spot price of the commodity.

The monotonicity of the spot price with respect to risk aversions could also be
used to explain how the participation of investors in the commodity forward
markets could result in an increase of spot commodity prices. Indeed, under CARA
preferences the more investors participate in the market, the higher the
aggregate risk tolerance becomes or, equivalently, the lower the representative
risk aversion' coefficient becomes (see, among others, \citet{Wilson68}). As
discussed above, this implies higher spot commodity price, a result that is
consistent with the observed market data (see e.g. \cite{BuyRob14} and
\citet{HendPeaWan14}). Similarly, we verify that more producers (of the same
total production) implies lower spot price. 

Besides equilibrium commodity spot prices, our model allows to endogenously
derive quantities that characterize the two major, and not mutually exclusive,
theories of forward commodity markets: the \textit{theory of storage} and the
\textit{theory of normal backwardation}. Based on the ideas introduced in
\citet{Kaldor39,Working49} and \citet{Brennan58}, the theory of storage states
that the holders of the commodity inventories get an implicit benefit, called
\textit{convenience yield}, which implies the value of the spot commodity
consumption. This yield can be approximated by the difference between the spot
and the forward price minus the cost of storage. Our equilibrium model verifies
that the convenience yield is increasing with respect to the producers' risk
aversion, meaning that the more sensitive about the risk the producers are, the
more commodity forward units they hedge depressing the forward price (see Figure
\ref{fig:BM-3} for the Brownian motion example). A similar increasing relation
holds for the investors (these relations, in particular, generalize the
results of Proposition 1 in \citet{AchaLochRama13}). However, the convenience
yield is not always monotonic with respect to the correlation coefficient. As
discussed in Section \ref{model-ctdm}, there are two effects of opposing
direction on the convenience yield, one coming from the decrease of the
effective investors' risk aversion and other from the corresponding increase
on the spot price. The total effect mainly depends on the level of the agents'
risk aversions and the (uneven) production levels at initial and terminal time 
(see Figures \ref{fig:BM-3} and \ref{fig:BM-6}).

On the other hand, the theory of normal backwardation (see the seminal works by
\citet{Keynes30} and \citet{Hicks39}), states that there is a positive premium
required by the investors in order to satisfy the producers' hedging demand in
forward contracts. This premium, usually called \textit{forward} or 
\textit{insurance premium}, is given as the percentage difference between the 
expected commodity price at maturity and its forward price. As expected, this 
premium is increasing (decreasing) with respect to investors' (producers') risk 
aversion.

In contrast to the existing literature, our model includes as an input the
correlation between the stock market and the commodity demand shock. Several
empirical studies have shown that this correlation is indeed non-zero and, as
our results demonstrate, it does influence the equilibrium prices heavily. In
particular, as it is shown in Section \ref{sec:illu}, the effective
investors' risk aversion coefficient is decreasing in the presence of
non-zero correlation. This simply reflects the fact that higher correlation
means better hedging of the forward contract position by trading in the stock
market (provided there are no short-selling constraints on the investors'
trading strategies). Hence, non-zero correlation has in principle the same
effect on the equilibrium as a decrease in the investors' risk aversion (see
Figures \ref{fig:BM-1}, \ref{fig:BM-2} and \ref{fig:LJD-1}). For instance,
higher correlation (in absolute values) means higher spot commodity price, a
result that is also consistent with the observed market data (see e.g.
\cite{TangXio2012}). A similar effect is caused by an increased variance of the
demand shock, which can be due to the presence of jumps (see Figure
\ref{fig:LJD-1}).

\subsection{Relation with the existing literature}

Equilibrium pricing models in markets that consist of utility maximizing agents
have been recently addressed by a number of authors in mathematical finance;
see, among others, \citet{Anthropelos_Zitkovic:2010,Barrieu_ElKaroui_2009},
\citet{Cheridito_Horst_Kupper_Pirvu:2011}, \citet{Filipovic_Kupper_2008b,HorMul07} and
\citet{KarLehShre90}. The results in this literature however do not cover the
case of commodity forward contracts, not only because a commodity has a
consumption value which is reflected by the consumers' demand function, but
also due to the producers' specific storage choice. To the best of our
knowledge, this paper is the first to apply a utility maximization criterion
for spot and forward equilibrium prices of commodities, while considering also
the existence of a correlated stock market.

Theoretical studies of the equilibrium relationship between spot and forward
commodity prices go back to \citet{Stoll79,AndDan83} and
\citet{Hirsh88,Hirsh89}. The results of these seminal works are limited
regarding the agents' risk preferences, which are assumed to be mean-variance,
while recent extensions of this setting have followed different approaches than
ours. For instance in \citet{Baker12}, mean-variance optimization problems are
imposed in a discrete time dynamic model, where investors are the ones that
have the storage option and the consumers (the households) get utility from
consumption and the wealth (num\'{e}raire units). In \citet{RouSeppSpat00} and
\citet{Pirrong11}, investors are assumed to be risk neutral and without
access to other financial markets, while forward prices are simply the
expectations of future spot prices. The interaction between the optimal storage
and the investors' optimal position in the forward contract and its effect to
spot and forward equilibrium prices are also studied in \citet{EkeLauVil2014}.
However, in contrast to our model the investors trade only in forward
contracts, while the preferences are mean-variance, which means that they are
not monotonic with respect to futures revenues. More recently, endogenous
commodity supply under asymmetric information and limited participation has been
developed in \citet{LeclPraz14}. Static mean-variance models have been also
studied and statistically tested in \citet{AchaLochRama13} and
\citet{GorHayRou12}, however neither the investors nor the producers trade in
any other market outside of the commodity market\footnote{In
\cite{AchaLochRama13} investors are assumed risk neutral, but the imposed
capital constraints eventually lead to a mean-variance optimization criterion,
while in \cite{GorHayRou12} there is a random supply shock at the terminal
time, which however does not change the general idea of the equilibrium
setting.}. Hence, their theoretical results cover only a very special case of
our model, namely, when the stock and the commodity market are uncorrelated
and the demand random shift is normally distributed.\footnote{Continuous time
dynamic models with random demand shocks and exogenously given spot prices have
been developed in \cite{BasakPavlova13} and \cite{CasCollRou08}.}

The main novelties of our approach compared to the related literature are the
consideration of an exogenous stock market available in the investors' trading
set, the risk aversion of the agents' preferences and the much richer family of
processes that model the market factors.\footnote{The seminal works
\cite{Stoll79} and \cite{Hirsh88} also include a correlated risky asset in the
investors' set of strategies, forming an equilibrium framework. However, our
results are more general regarding not only the utility preferences and the
stochasticity of the market model, but also the set of investors' trading
strategies.} Indeed, as has already been discussed, both the correlation between
the stock and commodity market and the jump component do
influence the equilibrium prices.

% \bigskip

This paper is structured as follows: Section \ref{sec:model} sets up the
general framework for our equilibrium model. The well-posedness of the agents'
optimization problems and the existence of an equilibrium are proved in Section
\ref{sec:main}. Section \ref{model-ctdm} studies a model with continuous trading
under \lev dynamics, where semi-explicit formulas for equilibrium quantities are
derived and discussed. Finally, Section \ref{sec:illu} focuses on two examples
that permit the illustration and a further economic interpretation of the
results. Technical proofs of Section \ref{sec:illu} are placed in Appendix
\ref{app-proofs}.

\section{A general framework for commodity prices}
\label{sec:model}

We start by describing a general modeling framework where the interaction of 
market participants determines the spot and forward prices of commodities. The 
model consists of a pair of representative agents\footnote{The representation by 
a unique agent is widely used in this literature, see 
\cite{AchaLochRama13,BasakPavlova13,EkeLauVil2014,GorHayRou12, Singleton14} 
among others.}: the \textit{producers} produce the commodity, supply part of the 
production at the spot market and store the rest, while they hedge their 
exposure to price fluctuations using forward contracts on the commodity. The 
\textit{investors} invest in financial markets and, in order to diversify their 
portfolio, they also invest in the commodities forward market. Moreover, the 
model includes \textit{consumers} who consume the commodity at the spot market. 
The goal is to determine the price of the commodity that makes the forward 
market clear out, assuming that both producers and investors are
utility maximizers.

More specifically, the producers produce $\pi_0$ units of the commodity at the 
initial time 0 and $\pi_T$ units at the terminal time $T$; both $\pi_0$ and 
$\pi_T$ are assumed to be deterministic\footnote{This assumption means that the 
producers control the supply of their commodity only through the inventory 
management and not by changing their production plans, which can be 
prohibitively costly in the short-term (see also the related comment in 
\cite{AchaLochRama13}).}. They offer $\pi_0-\alpha$ units at the spot market at 
time $0$ and store the rest for time $T$. Furthermore, they hedge their exposure 
by investing in the forward market. Therefore, their position at time $T$ is
\begin{align}\label{eq:producers_position}
\underline{w}(\alpha,h^p)
 &= P_0(\pi_0-\alpha)(1+R)+P_T(\pi_T+\alpha(1-\varepsilon))+h^p(P_T-F),
\end{align}
where $P_0$ and $P_T$ denote the spot price at times $0$ and $T$ respectively, 
$R$ the discretely compounded interest rate, $\varepsilon\in[0,1]$ the cost of 
storage considered as percentage of the stored units\footnote{The representation 
of the storage cost as percentage is common in the related literature, see e.g. 
\cite{AchaLochRama13} and \cite{GorHayRou12}. The constant cost rate 
$\varepsilon$ is usually referred to as the \textit{depreciation rate}.}, $F$ 
the forward price and $h^p$ the amount of forward contracts held by the 
producers. A positive $h^p$ indicates a long position in the forward contract, 
while a negative $h^p$ amounts to a short one. The producers' utility is assumed 
to be exponential, henceforth their preferences are described by
\begin{align}
\UU_p(v)=-\frac{1}{\gamma_p}\log \E\sbrac{\e^{-\gamma_p v}},
\end{align}
where $\gamma_p>0$. As in \citet{AndDan80,AndDan83}, their problem is to find 
an optimal storage strategy $\alpha\in[0,\pi_0]$ and an optimal hedging strategy 
$h^p\in\R$ that maximize the utility of their position 
\eqref{eq:producers_position}. Therefore, their utility maximization problem is
\begin{equation}\label{eq:producers_problem}
\Pi^p
:= \underset{\alpha\in[0,\pi_0], \, h^p\in\R}{\sup}
    \UU_p\big(\underline{w}(\alpha,h^p)\big).
\end{equation}

The spot price of the commodity is the price at which the consumers' demand 
equals the producers' supply. The consumers' demand at the initial time is given 
by a strictly decreasing and linear function\footnote{The linearity of the 
demand function is imposed to facilitate the analysis. The limitation of this 
assumption does not exclude from our study the main characteristics of the 
demand, namely its elasticity and its random nature at terminal time. Let us 
also mention, that for a short time horizon a first order approximation of the 
demand function should suffice (see also the related discussion in 
\citep{AchaLochRama13, EkeLauVil2014}).}
\begin{align}\label{eq:def_lin-dem}
\psi_0(x) = \mu - mx,
\end{align}
where $\mu\in\R$ and $m\in\R_+$, while $x$ denotes the price. The parameter $m$ is a measure of the 
elasticity of demand for the commodity. The demand at the terminal time is 
random and depends on the factors driving the commodities market, which are 
incorporated in a random variable $X$. The demand function at the terminal time 
is of the form
\begin{equation}\label{eq:demand}
\psi_T(x) = \psi_0(x) + X.
\end{equation}
In other words, we assume that the shape and the elasticity of the demand 
function remain the same, however there is a random shift\footnote{A similar 
random shift has already been used in the literature, see for instance 
\cite{Pirrong11}.} acting on it. This shift may be, for example, the result of 
an increase or decrease in the prices of the competitive commodities, of 
fluctuations in a dominated currency, or of an exogenous increase in the demand 
for every price level. Since the demand function is linear, the inverse demand 
function is also linear and equals
\begin{align}\label{eq:inv_lin_dem}
\phi_0(y) = \frac{\mu-y}{m}
 \quad\text{ and }\quad
\phi_T(y) = \frac{\mu+X-y}{m}.
\end{align}
Henceforth, if the producers store $\alpha$ units at the initial time, the spot 
price of the commodity, determined by the equilibrium condition between demand 
and supply, equals
\begin{align}\label{eq:ini_price}
P_0 &= \phi_0(\pi_0-\alpha)
     = \phi_0(\pi_0)+\frac{\alpha}{m},
\end{align}
while the commodity spot price at the terminal time is
\begin{align}\label{eq:fin_price}
P_T &= \phi_T\big(\pi_T+\alpha(1-\varepsilon)\big)
     = \phi_0(\pi_T)-\frac{\alpha(1-\varepsilon)}{m}+\frac{X}{m}.
\end{align}
The producers control the spot price by choosing the inventory policy. By 
storing more commodity units they increase the spot price, but they also 
increase their exposure to the variation of the future spot price since the 
stored units will be supplied at the next time period.

The investors take a position $h^s$ in the forward contract and invest in an 
exogenously\footnote{In other words, the investors are price-takers when they 
invest in the financial market.} priced financial market. Their position at time 
$T$ equals
\begin{align}\label{eq:investors_position}
\overline{w}(G,h^s) &= h^s(P_T-F) + G,
\end{align}
for $G\in\mathcal{G}$, where $\mathcal{G}$ is a set of random variables that 
models discounted trading outcomes attainable with zero initial wealth. This 
general formulation allows to consider different scenarios simultaneously.

\begin{example}
The simplest scenario is $\mathcal{G}=\{0\}$, whence the investors can only 
invest in the forward contract. Another scenario is to consider an asset price 
process $S$ and denote by $G(\theta) = \int_0^\cdot \theta_u\ud S_u$ the gains 
process for a trading strategy $\theta$. In that case, the set of trading 
outcomes $\mathcal{G}$ is given by
\begin{align*}
\mathcal{G} = \{G_T(\theta): \theta \in \Theta \},
\end{align*}
for a set $\Theta$ of admissible, self-financing trading strategies. 
Transaction costs can be easily incorporated as well by setting
\begin{align*}
\mathcal{G} = \{G_T(\theta)-k(\theta): \theta \in \Theta \},
\end{align*}
where $k:\Theta\to\R$ is a concave function.
\end{example}

\noindent We assume that the investors' utility is also exponential with 
$\gamma_s>0$, that is, their preferences are described by
\begin{align}\label{eq:investors_utility}
 \UU_s (v)= -\frac{1}{\gamma_s} \log \E\sbrac{\e^{-\gamma_s v}},
\end{align}
therefore their utility maximization problem reads as
\begin{equation}\label{eq:investors_problem}
\Pi^s := \sup_{h^s\in\R,\,G\in\mathcal{G}} \UU_s\big(h^s(P_T-F) + G\big).
\end{equation}

The maximization problem of both participants depends on the forward price $F$. 
This price is determined by the equilibrium in the forward market, which is 
defined below.

\begin{definition}\label{def:static_equilibrium}
A triplet $(\hat\alpha,\hat{h},\hat{F})$ is called an \textit{equilibrium} if
it satisfies the following conditions:
\begin{itemize}
\item \textit{Market clearing}: the forward market clears out in the sense that
  \begin{equation}\label{eq:static_equilibrium_condition}
    \hat{h} := h^p(\hat{F}) = - h^s(\hat{F}).
  \end{equation}
\item \textit{Optimality}: the pair $(\hat\alpha,\hat{h})$ is optimal for the
      producers' problem $\Pi^p$ and $\hat{h}$ is optimal for the investors'
      problem $\Pi^s$.
\end{itemize}
The price $\hat{F}=F(\hat{\alpha})$ is called the \textit{equilibrium commodity
forward price} at maturity $T$. The induced price $\hat{P}_0:=P_0(\hat\alpha)$
derived by \eqref{eq:ini_price} is called the \textit{equilibrium commodity
spot price} at $0$.
\end{definition}

\begin{remark}
The utility maximization problems of both agents are equivalent to risk 
minimization problems relative to the entropic risk measure; see e.g. 
\citet{Barrieu_ElKaroui_2009}. The risk measure point of view is more natural 
for certain agents, such as a corporation managing its risk exposure.
\end{remark}

\section{Equilibrium in the general framework}
\label{sec:main}

The aim of this section is to show that an equilibrium exists in the general
modeling framework described above, under mild assumptions on the random
variable $X$ and the set of trading outcomes $\mathcal{G}$. Let $(\Omega,{\cal
F},\mathbb{P})$ be a probability space where $\F=\F_T$. In the sequel all
equalities and inequalities between random variables are understood in the
$\mathbb{P}$-almost sure sense. The interior and the boundary of a set $K$ are
denoted by $K^\circ$ and $\partial K$, respectively, and the domain of a
function $f$ by $\mathrm{dom}f$.

We denote the set of exponential moments of $X$ by $\UUU_X=\{u\in\R: \E[\e^{u
X}]<\infty\}$ and define the cumulant generating function of $X$ by
\begin{align}\label{defn:cgf}
\kappa_X(u)= \log\E\left[\e^{u X}\right],\qquad u\in\UUU_X.
\end{align}
The following conditions will be used throughout this work:
\begin{enumerate}[label={$(\mathbb{EM})$},leftmargin=40pt]
\item $0\in\UUU^\circ_X.$ \label{cond:EM}
\end{enumerate}
\begin{enumerate}[label={$(\mathbb{COE})$},leftmargin=40pt]
\item \textit{If $\partial\UUU_X=\pm\infty$ then the following limit holds:}
      \label{cond:COE}
      \begin{align*}
      \lim_{z\to\pm\infty}\frac{\kappa_X(z)}{|z|} =+\infty.
      \end{align*}
\end{enumerate}

\noindent The next lemma summarizes some useful properties of the cumulant
generating function.
\begin{lemma}\label{lem:kappa-properties}
The cumulant generating function $\kappa_X$ is convex and lower semicontinuous.
\end{lemma}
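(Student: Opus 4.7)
The plan is to establish the two properties separately, extending $\kappa_X$ to all of $\R$ by setting $\kappa_X(u) = +\infty$ for $u \notin \UUU_X$, which is the natural convention for convex analysis.

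For convexity, I would fix $u, v \in \R$ and $\lambda \in [0,1]$ and apply H\"older's inequality with conjugate exponents $1/\lambda$ and $1/(1-\lambda)$:
\begin{align*}
\E\bigl[\e^{(\lambda u + (1-\lambda)v) X}\bigr]
 = \E\bigl[\e^{\lambda u X}\cdot \e^{(1-\lambda)v X}\bigr]
 \leq \E\bigl[\e^{u X}\bigr]^{\lambda}\,\E\bigl[\e^{v X}\bigr]^{1-\lambda}.
\end{align*}
Taking logarithms yields $\kappa_X(\lambda u + (1-\lambda)v)\le \lambda \kappa_X(u)+(1-\lambda)\kappa_X(v)$. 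If either $u$ or $v$ is outside $\UUU_X$ the right-hand side is $+\infty$ and the inequality holds trivially, while the boundary cases $\lambda\in\{0,1\}$ are immediate. The cases $\lambda\in\{0,1\}$ together with $u=v$ also take care of the interpretation of $0\cdot\infty$ in this context.

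For lower semicontinuity, I would take a sequence $u_n\to u$ in $\R$. Since $\e^{u_n X}\ge 0$ and $\e^{u_n X}\to \e^{u X}$ almost surely by continuity of the exponential, Fatou's lemma gives
\begin{align*}
\E\bigl[\e^{u X}\bigr]
 \le \liminf_{n\to\infty}\E\bigl[\e^{u_n X}\bigr].
\end{align*}
Applying the logarithm, which is continuous and non-decreasing on $(0,\infty]$, then yields $\kappa_X(u)\le \liminf_{n\to\infty}\kappa_X(u_n)$, which is precisely lower semicontinuity. If $u_n\notin\UUU_X$ for infinitely many $n$ the liminf is $+\infty$ and there is nothing to check.

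Neither step presents a genuine obstacle; both are standard facts about log-moment generating functions. The only subtlety worth flagging is the bookkeeping of the value $+\infty$ outside $\UUU_X$, which is handled uniformly by the conventions of convex analysis. Conditions \ref{cond:EM} and \ref{cond:COE} are not needed here and will only be invoked later, when finer properties of $\kappa_X$ (such as essential smoothness or coercivity) enter the equilibrium analysis.
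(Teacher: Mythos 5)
Your proof is correct and follows essentially the same route as the paper: convexity via H\"older's inequality and lower semicontinuity via Fatou's lemma applied to the nonnegative sequence $\e^{u_n X}$. Your extra bookkeeping of the extended value $+\infty$ outside $\UUU_X$ is a harmless (and slightly more careful) elaboration of the convention the paper uses implicitly.
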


\begin{proof}
Convexity follows directly from H\"older's inequality; for $p,q\in(0,1)$
conjugate, we have that
\begin{align*}
\kappa_X(pu+qv)
 &= \log \E\big[ \e^{p u X} \e^{q v X} \big] \\
 &\le \log \big\{ \big(\E\big[\e^{uX}]\big)^p
                   \big(\E\big[\e^{vX}]\big)^q \big\}
  = p \kappa_X(u) + q \kappa_X(v).
\end{align*}
In order to show lower semicontinuity, consider a sequence $u_n\rightarrow u$;
then $\e^{u_nx}$ is a sequence of positive functions. Applying Fatou's lemma,
we get
\begin{align*}
\liminf_{u_n\to u} \kappa_X(u_n)
 &= \liminf_{u_n\to u} \log\E\big[\e^{u_n X}\big] \\
 &\ge \log\E\big[\liminf_{u_n\to u} \e^{u_n X}\big]
  = \kappa_X(u).
\end{align*}
\end{proof}

\subsection{Producers' optimization problem}

The first step is to consider the producers' optimization problem and show that 
it admits a maximizer under mild assumptions. The producers' position, using the 
spot market equilibrium conditions \eqref{eq:ini_price} and 
\eqref{eq:fin_price}, can be written as
\begin{align}\label{eq:pop-1}
\underline{w}(\alpha,h^p)
 &= P_0(\pi_0-\alpha)(1+R)+P_T(\pi_T+\alpha(1-\varepsilon))+h^p(P_T-F)
\nonumber\\
 &\stackrel[\eqref{eq:fin_price}]{\eqref{eq:ini_price}}{=}
    \left(\phi_0(\pi_0)+\frac{\alpha}{m}\right)(\pi_0-\alpha)(1+R)
  - h^pF \nonumber\\
 &\quad + \big(\pi_T+\alpha(1-\varepsilon)+h^p\big)
   \left(\phi_0(\pi_T)-\frac{\alpha(1-\varepsilon)}{m}+\frac{X}{m}\right)
\nonumber\\
 &=: \QQQ(\alpha,h^p) + \ell(\alpha,h^p) X,
\end{align}
where $\QQQ$ is a quadratic function\footnote{It follows directly from 
\eqref{eq:pop-1} (see also the associated formulas in subsection \ref{subsec:bm 
model}), that if the parameter $\mu$ is sufficiently small, then producers may 
have motive to discard the commodity, in the sense that the total optimal supply 
is less than the total production, even if the demand function is deterministic. 
We can avoid such cases, by assuming that the parameter $\mu$ is sufficiently 
large. Note that $\mu$ could be considered as the consumers' demand when the 
commodity has zero price, hence assuming large values for $\mu$ is a reasonable 
assumption.} in $\alpha$ and $h^p$ of the form
\begin{align}\label{eq:QQQ}
\QQQ(\alpha,h^p)
&= - \alpha^2 \frac{1+R+(1-\varepsilon)^2}{m}
   + \alpha \frac{2(1+R)\pi_0-2(1-\varepsilon)\pi_T-(R+\varepsilon)\mu}{m}
\nonumber\\
&\quad - \alpha h^p\frac{1-\varepsilon}{m} -
h^p\left(F-\frac{\mu-\pi_T}{m}\right)
    + \pi_T \phi_0(\pi_T) + \pi_0\phi_0(\pi_0)(1+R), 
\end{align}
while $\ell$ is a bilinear function in $\alpha$ and $h^p$ given by
\begin{equation}\label{eq:ell}
\ell(\alpha,h^p) = \frac{\alpha(1-\varepsilon)+h^p+\pi_T}{m}.
\end{equation}

Using the translation invariance of the exponential utility function, the 
producers' utility takes the form
\begin{align}\label{eq:prods-utility}
\UU_p\big(\underline{w}(\alpha,h^p)\big)
 &= -\frac{1}{\gamma_p} \log \EE\Big[\exp \Big( -\gamma_p \big\{\QQQ(\alpha,h^p)
      + \ell(\alpha,h^p) X \big\} \Big)\Big] \nonumber\\
 &= \QQQ(\alpha,h^p) - \frac{1}{\gamma_p}\log
    \EE\Big[ \exp\Big( -\gamma_p \ell(\alpha,h^p) X\Big)\Big] \nonumber\\
 &= \QQQ(\alpha,h^p) - \frac{1}{\gamma_p}
    \kappa_X\big( -\gamma_p \ell(\alpha,h^p) \big),
\end{align}
assuming that $-\gamma_p\ell(\alpha,h^p)\in\UUU_X$. In the sequel, we will work 
with the extended producers' utility
$\widetilde\UU_p(\underline{w}(\alpha,h^p))$ which is defined as follows:
\begin{align}\label{eq:extended_prod_utility}
\widetilde\UU_p\big(\underline{w}(\alpha,h^p)\big)
=\left\{%
\begin{array}{ll}
  \UU_p\big(\underline{w}(\alpha,h^p)\big), &  \text{if }
(\alpha,h^p)\in\widetilde{\UUU}_X,\\
 -\infty, & \text{otherwise}, \\
\end{array}%
\right.
\end{align}
where $\widetilde{\UUU}_X = \cbrac{(x_1,x_2)\in\R^2: 
-\gamma_p\ell(x_1,x_2)\in\UUU_X }$. The producers' optimization problem 
\eqref{eq:producers_problem} can then be written as follows
\begin{align}\label{eq:producers_problem_conju}
\Pi^p &= \underset{\alpha\in[0,\pi_0]}{\sup}
         \underset{h^p\in\R}{\sup}
         \widetilde\UU_p\big(\underline{w}(\alpha,h^p)\big) \nonumber\\
      &= \underset{\alpha\in[0,\pi_0]}{\sup}
         \underset{h^p\in\R}{\sup}\{u_p(\alpha,h^p)-h^pF\}
       = \underset{\alpha\in[0,\pi_0]}{\sup} \{ -u_p^*(\alpha,F) \},
\end{align}
where
\begin{align}\label{eq:def-up}
u_p(\alpha,h^p)
=\left\{%
 \begin{array}{ll}
 \QQQ(\alpha,0) - \frac{1}{\gamma_p}
  \kappa_X\big( -\gamma_p \ell(\alpha,h^p) \big)
 - h^p \ell(\alpha,-\mu),
 & \hbox{if $(\alpha,h^p)\in\widetilde{\UUU}_X$,} \\
  -\infty, & \hbox{otherwise,} \\
\end{array}%
\right.
\end{align}
while $u_p^*(\alpha,\cdot)$ denotes the conjugate function of
$u_p(\alpha,\cdot)$, for every $\alpha\in[0,\pi_0]$.

\begin{proposition}\label{pro:producers_problem}
Assume that conditions \ref{cond:EM} and \ref{cond:COE} hold. Then, for every 
$F\in\R$ there exists a maximizer $(\hat{\alpha},\hat{h}^p)$ for the producers' 
problem $\Pi^p$ such that $(\hat{\alpha},\hat{h}^p)\in\widetilde{\UUU}_X$.
\end{proposition}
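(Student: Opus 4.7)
The plan is to obtain existence by the direct method: establish upper semicontinuity of the objective and coercivity of its super-level sets, then extract a convergent maximizing sequence. Upper semicontinuity of $(\alpha,h^p)\mapsto\widetilde\UU_p(\underline w(\alpha,h^p))$ on $\R^2$ is immediate from \eqref{eq:prods-utility}: the polynomial $\QQQ$ is continuous, $\ell$ is affine, and $\kappa_X$ is convex lower semicontinuous on $\R$ (with value $+\infty$ off $\UUU_X$) by Lemma \ref{lem:kappa-properties}, so the composition $-\frac{1}{\gamma_p}\kappa_X(-\gamma_p\ell(\alpha,h^p))$ is upper semicontinuous and equals $-\infty$ precisely off $\widetilde{\UUU}_X$. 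Moreover, taking $\alpha=0$ and $h^p=-\pi_T$ gives $\ell=0$, and condition \ref{cond:EM} together with $\kappa_X(0)=0$ yields $\widetilde\UU_p(\underline w(0,-\pi_T))>-\infty$, so $\Pi^p>-\infty$.

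The key step is to establish coercivity of the objective in $h^p$ uniformly over $\alpha\in[0,\pi_0]$. Since $\ell(\alpha,h^p)=(\alpha(1-\varepsilon)+h^p+\pi_T)/m$ has slope $1/m>0$ in $h^p$, one has $|\ell(\alpha,h^p)|\ge |h^p|/m-C_0$ for a constant $C_0=C_0(\pi_0,\pi_T,\varepsilon)$. If $\UUU_X$ is bounded, then for $|h^p|$ large enough $-\gamma_p\ell(\alpha,h^p)$ leaves $\UUU_X$ and $\widetilde\UU_p=-\infty$, uniformly in $\alpha\in[0,\pi_0]$. If instead $\UUU_X$ is unbounded on at least one side, condition \ref{cond:COE} provides $\kappa_X(z)/|z|\to+\infty$ along that direction; since $\QQQ(\alpha,h^p)$ grows at most linearly in $h^p$ with slope bounded uniformly in $\alpha\in[0,\pi_0]$, the negative term $-\kappa_X(-\gamma_p\ell)/\gamma_p$ dominates $\QQQ$ and forces $\widetilde\UU_p\to-\infty$ as $|h^p|\to\infty$, again uniformly in $\alpha$. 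In either case, the super-level set $\{(\alpha,h^p)\in[0,\pi_0]\times\R:\widetilde\UU_p\ge c\}$ is bounded for every $c\in\R$.

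Now choose a maximizing sequence $(\alpha_n,h^p_n)\in[0,\pi_0]\times\R$ with $\widetilde\UU_p(\underline w(\alpha_n,h^p_n))\nearrow\Pi^p$. Since $\Pi^p>-\infty$, the coercivity bound forces $(h^p_n)$ to be bounded, so together with $\alpha_n\in[0,\pi_0]$ a subsequence converges to some $(\hat\alpha,\hat h^p)\in[0,\pi_0]\times\R$. Upper semicontinuity then gives
\[
\Pi^p=\lim_k \widetilde\UU_p(\underline w(\alpha_{n_k},h^p_{n_k}))\le \widetilde\UU_p(\underline w(\hat\alpha,\hat h^p))\le\Pi^p,
\]
so $(\hat\alpha,\hat h^p)$ is a maximizer; finiteness of $\widetilde\UU_p(\underline w(\hat\alpha,\hat h^p))=\Pi^p$ forces $(\hat\alpha,\hat h^p)\in\widetilde{\UUU}_X$, as required.

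I expect the main technical hurdle to be the unbounded case of the coercivity step: condition \ref{cond:COE} only provides a single limit, which has to be converted into an explicit lower bound $\kappa_X(z)\ge K|z|$ valid for $|z|$ large, with $K$ chosen to strictly exceed the linear growth constant of $\QQQ$ in $h^p$ (which is itself uniformly bounded in $\alpha$ through $\pi_0,\pi_T,\mu,F,m,R,\varepsilon$). Once this estimate is carefully tracked, the rest of the argument reduces to standard direct-method reasoning.
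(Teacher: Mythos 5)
Your proof is correct and follows essentially the same route as the paper's: upper semicontinuity of the objective via Lemma \ref{lem:kappa-properties}, a case split on whether $\UUU_X$ is bounded, and uniform-in-$\alpha$ coercivity in $h^p$ from \ref{cond:COE} (using that $\ell$ is affine with slope $1/m$ and $\QQQ$ grows only linearly in $h^p$), followed by the observation that finiteness of the maximum forces $(\hat\alpha,\hat h^p)\in\widetilde{\UUU}_X$. You are, if anything, slightly more careful than the paper, since you explicitly verify $\Pi^p>-\infty$ (via $\alpha=0$, $h^p=-\pi_T$, so $\ell=0$) and carry out the direct-method extraction in detail, whereas the paper compresses these steps.
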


\begin{proof}
The function $\widetilde\UU_p(\underline{w}(\alpha,h^p))$ in 
\eqref{eq:prods-utility} is upper semicontinuous, strictly concave in $\alpha$ 
and concave in $h^p$, since $\QQQ$ is quadratic, $\ell$ is linear and $\kappa_X$ 
is convex and lower semicontinuous in its arguments; see Lemma 
\ref{lem:kappa-properties} and \eqref{eq:QQQ}--\eqref{eq:ell}. Observe that 
$\alpha$ takes values in a bounded set. If the set $\mathcal{U}_X$ is also 
bounded, then the existence of a maximizer follows by the concavity and the 
upper semicontinuity of $\UU_p(\underline{w}(a,h^p))$. Otherwise, if 
$\mathcal{U}_X$ is unbounded, using Assumption \ref{cond:COE}, the linearity of 
$\ell$ in $\alpha$ and that $\alpha$ belongs to a bounded set, we get that
\begin{align}
\lim_{h^p\to\pm\infty} \inf_{\alpha\in[0,\pi_0]}
 \frac{\kappa_X\big(-\gamma_p \ell(\alpha,h^p)\big)}{|h^p|} = +\infty.
\end{align}
Therefore, $\UU_p(\underline{w}(\alpha,h^p))$ is coercive in $h^p$ resulting in 
the existence of a maximizer. Finally, if $(\hat{\alpha},\hat{h}^p)$ does not 
belong to $\widetilde{\UUU}_X$, then the utility of the producers is not 
maximized, see \eqref{eq:extended_prod_utility}.
\end{proof}

\begin{corollary}\label{corr:spec_fun_properties}
Assume that conditions \ref{cond:EM} and \ref{cond:COE} hold. Then, the
function $u_p(\alpha,\cdot)$ is concave and upper semicontinuous for every
$\alpha\in[0,\pi_0]$. In addition, it is coercive uniformly in $\alpha$, that
is
\begin{align}\label{eq:spec_unif_coercivity}
\lim_{h^p\to\pm\infty} \sup_{\alpha\in[0,\pi_0]} \frac{u_p(\alpha,h^p)}{|h^p|}
= - \infty.
\end{align}
\end{corollary}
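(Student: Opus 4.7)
The plan is to handle concavity, upper semicontinuity, and uniform coercivity separately. For the first two, I fix $\alpha \in [0, \pi_0]$ and note that $\ell(\alpha, \cdot)$ is affine in $h^p$ by \eqref{eq:ell}. Hence the composition $h^p \mapsto \kappa_X(-\gamma_p \ell(\alpha, h^p))$ is convex and lower semicontinuous as the composition of the convex, lower semicontinuous $\kappa_X$ (Lemma \ref{lem:kappa-properties}) with an affine map, with the convention that it equals $+\infty$ outside $\widetilde{\UUU}_X$. Consequently $-\gamma_p^{-1}\kappa_X(-\gamma_p \ell(\alpha, \cdot))$ is concave and upper semicontinuous, and adding the constant-in-$h^p$ term $\QQQ(\alpha, 0)$ and the linear term $-h^p \ell(\alpha, -\mu)$ preserves both properties, with the extended value $-\infty$ outside $\widetilde{\UUU}_X$.

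For the uniform coercivity, I would decompose
\begin{align*}
\frac{u_p(\alpha, h^p)}{|h^p|}
 = \frac{\QQQ(\alpha, 0)}{|h^p|}
 - \frac{h^p}{|h^p|}\,\ell(\alpha, -\mu)
 - \frac{1}{\gamma_p}\cdot\frac{\kappa_X\bigl(-\gamma_p \ell(\alpha, h^p)\bigr)}{|h^p|},
\end{align*}
and bound each term uniformly in $\alpha \in [0, \pi_0]$. Since $\QQQ(\cdot, 0)$ and $\ell(\cdot, -\mu)$ are continuous on the compact interval $[0, \pi_0]$ and hence uniformly bounded there, the first summand tends to zero and the second remains uniformly bounded as $|h^p|\to\infty$. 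It therefore suffices to show that $\kappa_X(-\gamma_p \ell(\alpha, h^p))/|h^p| \to +\infty$ uniformly in $\alpha$.

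Setting $z(\alpha, h^p) := -\gamma_p \ell(\alpha, h^p) = -(\gamma_p/m)\bigl(h^p + \alpha(1-\varepsilon) + \pi_T\bigr)$, the coefficient of $h^p$ is the nonzero constant $-\gamma_p/m$, independent of $\alpha$. Therefore $|z(\alpha, h^p)|/|h^p| \to \gamma_p/m > 0$ uniformly in $\alpha \in [0, \pi_0]$ as $|h^p|\to\infty$, and in particular $|z(\alpha,h^p)|\to\infty$ uniformly in $\alpha$. Two cases arise: if $\UUU_X$ is bounded in the direction into which $z(\alpha,h^p)$ drifts, then for $|h^p|$ large enough $z(\alpha, h^p)\notin\UUU_X$ for every $\alpha \in [0,\pi_0]$, hence $u_p(\alpha, h^p) = -\infty$ and the supremum is $-\infty$. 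Otherwise, $\partial\UUU_X = \pm\infty$ on that side and condition \ref{cond:COE} applies; combining $\kappa_X(z)/|z|\to+\infty$ with the uniform lower bound $|z(\alpha,h^p)|/|h^p| \geq \gamma_p/(2m)$ valid for $|h^p|$ large enough, one sees that $\kappa_X(z(\alpha, h^p))/|h^p|\to +\infty$ uniformly in $\alpha$, yielding \eqref{eq:spec_unif_coercivity}.

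The main obstacle is ensuring the \emph{uniformity} in $\alpha$ of the coercivity estimate rather than just a pointwise limit; this is exactly what the independence of $\alpha$ of the leading coefficient of $\ell$ in $h^p$, together with the compactness of $[0,\pi_0]$, provides.
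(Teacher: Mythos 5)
Your proof is correct and takes essentially the same route as the paper, whose argument for this corollary is contained in the proof of Proposition \ref{pro:producers_problem}: concavity and upper semicontinuity of $u_p(\alpha,\cdot)$ from composing the convex, lower semicontinuous $\kappa_X$ (Lemma \ref{lem:kappa-properties}) with the affine map $h^p\mapsto-\gamma_p\ell(\alpha,h^p)$, and uniform coercivity from condition \ref{cond:COE} together with the linearity of $\ell$ and the boundedness of $[0,\pi_0]$, with the case of $\UUU_X$ bounded in the relevant direction handled via $u_p=-\infty$ for large $|h^p|$. Your write-up simply makes explicit the uniformity-in-$\alpha$ estimates that the paper states tersely as $\lim_{h^p\to\pm\infty}\inf_{\alpha\in[0,\pi_0]}\kappa_X\bigl(-\gamma_p\ell(\alpha,h^p)\bigr)/|h^p|=+\infty$.
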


\begin{remark}\label{rem: storage}
It follows from \eqref{eq:QQQ}, that small production at time $T$ raises the 
producers' desire to store, even when the future demand function is 
deterministic. This occurs because a possible scarcity of the commodity at time 
$T$ would result in higher future spot prices, hence producers would be better 
off storing some production and selling it at time $T$. On the other hand, 
higher future production decreases the optimal storage choice. Hence, the 
producers' desire to balance uneven productions is an important feature that 
influences the optimal storage choice.
\end{remark}

\subsection{Investors' optimization problem}

The second step is to analyze the structure and properties of the investors' 
optimization problem. Although we cannot prove the existence of a maximizer at 
this level of generality, the results we obtain are sufficient to show the 
existence of an equilibrium in the next subsection.

Let $\Q$ be a probability measure on $(\Omega,\F)$. The relative entropy 
$\mathcal{H}(\Q|\P)$ of $\Q$ with respect to $\P$ is defined by
\begin{align*}
\mathcal{H}(\QQ|\PP)
 =\left\{%
  \begin{array}{ll}
   \EE_{\QQ}\big[\ln\big( \frac{\ud\QQ}{\ud\PP}\big)\big],
        & \text{if } \QQ\ll\PP, \\
   +\infty, & \text{otherwise}. \\
\end{array}%
\right.
\end{align*}
Given $\alpha\in[0,\pi_0]$, the spot price of the commodity $P_T=P_T(\alpha)$ 
is provided by \eqref{eq:fin_price}. Define the function
\begin{align}\label{eq:def-us}
u_s(\alpha,h^s)
:=\sup_{G\in{\cal G}} \UU_s \big( h^s P_T + G \big),
\end{align}
for a convex set ${\cal G}$ of $\F_T$-measurable random variables that contains 
$0$. In order to prove the existence of an equilibrium we will make use of the 
following assumption:
\begin{enumerate}[label={$(\mathbb{USC})$},leftmargin=40pt]
\item The function $h^s\mapsto u_s(\alpha,h^s)$ is upper semicontinuous for
      every $\alpha\in[0,\pi_0]$. \label{cond:USC}
\end{enumerate}
The function $h^s\mapsto u_s(\alpha,h^s)$ is also concave for every 
$\alpha\in[0,\pi_0]$, while the investors' optimization problem can be
expressed as follows
\begin{align}\label{eq:specs_prob-1}
\Pi^s
 = \sup_{h^s\in\mathbb{R}}\sup_{G\in{\cal G}}
   \UU_s \big( h^s (P_T-F) + G\big)
 = \sup_{h^s\in\mathbb{R}} \left\{ u_s(\alpha,h^s)-h^s F \right\}.
\end{align}

\noindent Throughout this section, we will also make use of the sets
\[ \MM_{\cal G} := \big\{\QQ\ll\P :\mathcal{H}(\QQ|\PP)<\infty\ \text{ and }\
  \E_{\mathbb Q}[G]\leq 0\ \text{ for all }G\in{\cal G} \big\} \]
and
\[ {\cal Q }_X := \big\{\QQ\ll\P :\E_{\mathbb{Q}}[|X|]<\infty\}.\]
The financial market is free of arbitrage if $\MM_{\cal G}\neq\emptyset$. This 
is a sufficient condition, but not necessary, since it also requires the entropy 
to be finite. In the sequel, we also need the existence of at least one 
probability measure in $\MM_{\cal G}$ that belongs to ${\cal Q }_X$\footnote{As 
we will see later on, this assumption is needed in order to guarantee that the 
commodity spot price $P_T(\alpha)\in L^1(\QQ)$ for at least one $\QQ\in\MM_{\cal 
G}$ and $\alpha\in[0,\pi_0]$, which eventually implies that the investors' 
utility is bounded from above. In the market model of Section \ref{model-ctdm}, 
this assumption implies, in particular, that the investor's indifference price 
of the commodity is bounded from above; see also Remark \ref{rem:NA 
condition}.}. We state these requirements in the following condition:
\begin{enumerate}[label={$(\mathbb{NA})$},leftmargin=40pt]
\item $\MM_{\cal G}\cap{\cal Q}_X\neq\emptyset$.\label{cond:NA}
\end{enumerate}

\begin{proposition}\label{pro:investors_problem}
Assume that \ref{cond:NA} holds. Then, for each $\alpha\in[0,\pi_0]$ there
exists $F=F(\alpha)\in \mathbb{R}$ such that
\begin{equation}\label{r1}
\limsup_{h^s\to\pm\infty} \frac{u_s(\alpha,h^s)}{|h^s|}<+\infty,
\end{equation}
and
\begin{equation}\label{r2}
-u^\ast_s(\alpha,F)
 := \sup_{h^s\in\R}\left\{ u_s(\alpha,h^s)-h^s F\right\} < +\infty.
\end{equation}
\end{proposition}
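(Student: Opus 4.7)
The plan is to bound $u_s(\alpha,\cdot)$ from above by an affine function of $h^s$ and then take $F(\alpha)$ equal to its slope. The main tool is the classical variational (dual) representation of the entropic utility: for every random variable $v$ with $\E[\e^{-\gamma_s v}]<\infty$,
\[
\UU_s(v)=\inf_{\QQ\ll\PP}\bigl\{\E_\QQ[v]+\gamma_s^{-1}\HH(\QQ|\PP)\bigr\},
\]
so in particular $\UU_s(v)\leq\E_\QQ[v]+\gamma_s^{-1}\HH(\QQ|\PP)$ for every $\QQ\ll\PP$ with $\HH(\QQ|\PP)<\infty$ for which the right-hand side is well-defined.

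By assumption \ref{cond:NA} I would fix some $\QQ^*\in\MM_{\cal G}\cap{\cal Q}_X$. From \eqref{eq:fin_price} one has $P_T(\alpha)=\phi_0(\pi_T)-\alpha(1-\varepsilon)/m+X/m$ and, since $\E_{\QQ^*}[|X|]<\infty$, it follows that $P_T(\alpha)\in L^1(\QQ^*)$. By definition of $\MM_{\cal G}$, the entropy $\HH(\QQ^*|\PP)$ is finite and $\E_{\QQ^*}[G]\leq 0$ for every $G\in{\cal G}$. Applying the above one-sided bound to $v=h^s P_T(\alpha)+G$ and then taking the supremum over $G\in{\cal G}$ yields
\[
u_s(\alpha,h^s)\;\leq\; h^s\,\E_{\QQ^*}[P_T(\alpha)]+\sup_{G\in{\cal G}}\E_{\QQ^*}[G]+\gamma_s^{-1}\HH(\QQ^*|\PP)\;\leq\; h^s\,\E_{\QQ^*}[P_T(\alpha)]+\gamma_s^{-1}\HH(\QQ^*|\PP).
\]

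With this affine upper bound in hand, I would set $F(\alpha):=\E_{\QQ^*}[P_T(\alpha)]$. Then, uniformly in $h^s\in\R$,
\[
u_s(\alpha,h^s)-h^s F(\alpha)\;\leq\;\gamma_s^{-1}\HH(\QQ^*|\PP)\;<\;+\infty,
\]
which is \eqref{r2}; dividing the same bound by $|h^s|$ and letting $h^s\to\pm\infty$ yields $\limsup_{h^s\to\pm\infty}u_s(\alpha,h^s)/|h^s|\leq |\E_{\QQ^*}[P_T(\alpha)]|<+\infty$, which is \eqref{r1}.

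The only real conceptual point is recognizing that condition \ref{cond:NA} performs exactly the double duty required here: the finiteness of $\HH(\QQ^*|\PP)$ is what makes the dual inequality for $\UU_s$ informative, while the integrability $X\in L^1(\QQ^*)$ ensures that $P_T(\alpha)$ has a finite $\QQ^*$-expectation that can serve simultaneously as the slope of the upper bound and as the candidate forward price $F(\alpha)$. Everything else is a direct manipulation of the entropic-risk-measure duality; note also that we do not need a maximizer of the investors' problem at this stage, only the finiteness of the conjugate $u_s^\ast(\alpha,F)$, which is why \ref{cond:USC} plays no role in the present proposition and will only enter in the subsequent equilibrium argument.
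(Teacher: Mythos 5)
Your proposal is correct and takes essentially the same route as the paper: fix $\QQ\in\MM_{\cal G}\cap{\cal Q}_X$, bound $u_s(\alpha,\cdot)$ from above by the affine function $h^s\mapsto h^s\,\E_{\QQ}[P_T(\alpha)]+\gamma_s^{-1}\HH(\QQ|\PP)$ via the entropic duality inequality, and define $F(\alpha):=\E_{\QQ}[P_T(\alpha)]$, from which \eqref{r1} and \eqref{r2} follow at once. The only difference is one of detail: where you invoke the variational bound directly for the unbounded position $h^sP_T+G$ (with the caveat that the right-hand side be well-defined), the paper justifies exactly this step by applying \citet[Lemma 3.29]{Foellmer_Schied_2004} to the truncations $[h^sP_T+G]\vee(-n)$ and passing to the limit by monotone convergence, using $(P_T+G)^+\in L^1(\QQ)$ and $\E_{\QQ}[G]\le 0$.
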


\begin{proof}
Fix $\QQ\in\MM_{\cal G}\cap{\cal Q }_X$. Using \ref{cond:NA} and 
\eqref{eq:fin_price} we get that $P_T=P_T(\alpha)\in L^1(\QQ)$ for all
$\alpha\in[0,\pi_0]$. According to \citet[Lemma 3.29]{Foellmer_Schied_2004}, 
for each $G\in{\cal G}$, $h^s\in\mathbb{R}$ and $n\in\mathbb{N}$ it holds that
\begin{align}
\UU_s\big( [h^s P_T + G] \vee (-n) \big)
 &= -\frac{1}{\gamma_s}\log \E\big[ \exp \big\{
   - \gamma_s \big( [h^s P_T + G] \vee (-n) \big)\big\}\big] \nonumber\\
 &\le \E_{\mathbb{Q}} \big[ (h^s P_T + G )\vee (-n) \big]
   + \frac{1}{\gamma_s}{\cal H}(\mathbb{Q}|\mathbb{P}).
\end{align}
Since $(P_T+G)^+\in L^1(\mathbb{Q})$ and $\E_{\mathbb{Q}}[G]\le 0$, monotone
convergence implies that
\begin{align*}
u_s(\alpha,h^s)
 = \sup_{G\in {\cal G}} \Big\{ -\frac{1}{\gamma_s} \log \E \big[
    \exp\big\{ -\gamma_s (h^s P_T + G) \big\}\big] \Big\}
\le h^s \E_{\mathbb{Q}}[P_T]
  + \frac{1}{\gamma_s}\mathcal{H}(\mathbb{Q}|\mathbb{P}),\label{bound:us}
\end{align*}
which yields \eqref{r1}. Finally, defining $F:= \E_{\mathbb{Q}}[P_T]$ we obtain
that
\begin{align}\label{boundH}
\sup_{h^s\in\mathbb{R}} \big\{u_s(\alpha,h^s)-h^s F \big\}
  \le \frac{1}{\gamma_s}{\cal H}(\mathbb{Q}|\mathbb{P})<+\infty.
\end{align}
\end{proof}

\subsection{Existence of equilibrium}

We are now ready to show that under mild assumptions an equilibrium exists in 
the general modeling framework described in Section \ref{sec:model}. Explicit, 
and easily verifiable, conditions for the uniqueness of the equilibrium are also 
provided. We start with some preparatory results from convex analysis before 
stating and proving the main theorem.

According to \eqref{eq:producers_problem_conju}, the producers' optimization
problem is described by
\begin{align}\label{op1}
\Pi^p
 &= \sup_{\alpha\in[0,\pi_0]}\sup_{h^p\in\R}
    \left\{u_p(\alpha,h^p)-h^p F\right\}
  =-\inf_{\alpha\in[0,\pi_0]} \inf_{h^p\in\R}
    \left\{ h^p F - u_p(\alpha,h^p)\right\} \notag\\
 &= \sup_{\alpha\in[0,\pi_0]} \{ -u_p^\ast(\alpha,F) \}.
\end{align}
Similarly, from \eqref{eq:specs_prob-1} and \eqref{r2} the investors'
optimization problem is described by
\begin{align}\label{op2}
\Pi^s &= \sup_{h^s\in\R} \left\{u_s(\alpha,h^s)-h^s F\right\}
       =-u_s^\ast(\alpha,F).
\end{align}
In the sequel we will make use of several results from convex analysis; we
refer the reader to \citet{Rockafellar_1997} for a comprehensive introduction.
We define the sup-convolution of $u_p$ and $u_s$ via
\begin{equation}\label{convolution-definition}
u(\alpha,h):=\sup_{h^p+h^s=h} \left\{u_p(\alpha,h^p)+u_s(\alpha,h^s)\right\},
\end{equation}
and we know that its conjugate function satisfies
\begin{align}\label{convolution-property}
u^*(\alpha,F)
 &= \inf_{h\in\mathbb{R}} \{hF-u(\alpha,h)\}
  = u_p^*(\alpha,F) + u_s^*(\alpha,F);
\end{align}
cf. \cite[Theorem 16.4]{Rockafellar_1997}. Moreover, it holds that
\begin{align}\label{convolution-simple}
u(\alpha,h) = \inf_{F\in\mathbb{R}} \{hF-u^*(\alpha,F)\}
\end{align}
and we know that $F$ belongs to the supergradient of $u(\alpha,h)$, denoted by
$\partial u(\alpha,h)$, if the equality
\begin{align}\label{supergradient}
u(\alpha,h) = hF-u^*(\alpha,F)
\end{align}
is satisfied; see \cite[Theorem 23.5]{Rockafellar_1997}

\begin{theorem}\label{thm:main}
Assume that conditions \ref{cond:EM}, \ref{cond:COE}, \ref{cond:USC} and
\ref{cond:NA} hold, and suppose that
\begin{equation}\label{cond:boundedF}
-\gamma_p\ell(\pi_0,0) \in \UUU^\circ_X.
\end{equation}
Then there exists an equilibrium $(\hat\alpha,\hat{h},\hat{F})$.
\end{theorem}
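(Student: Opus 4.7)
The strategy follows the two-stage approach outlined in the introduction: first, for every fixed storage level $\alpha\in[0,\pi_0]$, construct a forward price $F(\alpha)$ that clears the market at that $\alpha$; then, use the stability of this map in $\alpha$ to find a storage level $\hat\alpha$ consistent with the producer's own optimization over $\alpha$ at the forward price $\hat F=F(\hat\alpha)$. The bridge between the two stages is the convex-analytic identity \eqref{convolution-definition}--\eqref{supergradient}: market clearing at $\alpha$ is equivalent to $F(\alpha)\in\partial u(\alpha,0)$, where $u(\alpha,\cdot)$ is the sup-convolution of $u_p(\alpha,\cdot)$ and $u_s(\alpha,\cdot)$.

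For the first stage I would show that $0\in\mathrm{int}\,\mathrm{dom}\,u(\alpha,\cdot)$, so that Rockafellar's subdifferential theorem yields $\partial u(\alpha,0)\neq\emptyset$. On the investor side, the bound $u_s(\alpha,0)\geq\UU_s(0)=0$ together with \ref{cond:USC} and \eqref{r1} gives finiteness of $u_s(\alpha,\cdot)$ in a neighborhood of $h^s=0$. On the producer side, \ref{cond:EM} gives $0\in\UUU^\circ_X$ and \eqref{cond:boundedF} gives $-\gamma_p\ell(\pi_0,0)\in\UUU^\circ_X$; since $\UUU_X$ is a convex interval and $\ell$ is affine in $(\alpha,h^p)$, these two facts combine with the compactness of $[0,\pi_0]$ to produce a $\delta>0$ such that $-\gamma_p\ell(\alpha,h^p)\in\UUU^\circ_X$ for every $\alpha\in[0,\pi_0]$ and every $|h^p|\leq\delta$. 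Hence $(\alpha,h^p)\in\widetilde\UUU_X$, $u_p(\alpha,\cdot)$ is finite near $h^p=0$, and $0$ belongs to the interior of $\mathrm{dom}\,u(\alpha,\cdot)$. Propositions \ref{pro:producers_problem} and \ref{pro:investors_problem} furnish the maximizers that realise the sup-convolution, so that $h^p(\alpha,F(\alpha))+h^s(\alpha,F(\alpha))=0$ for any selection $F(\alpha)\in\partial u(\alpha,0)$.

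For the second stage, define the producer's best-response $\alpha^\ast(F):=\argmax_{\alpha\in[0,\pi_0]}\{-u_p^\ast(\alpha,F)\}$. This correspondence is nonempty and compact-valued, and convex-valued by concavity of $\alpha\mapsto-u_p^\ast(\alpha,F)$; it reduces to a singleton when the strict concavity of $u_p$ in $\alpha$ (from the $-\alpha^2$ coefficient in $\QQQ$) propagates through the sup in $h^p$. By Berge's maximum theorem, $\alpha^\ast$ is upper hemicontinuous in $F$. A measurable selection $\alpha\mapsto F(\alpha)$ of $\partial u(\alpha,0)$ is likewise upper hemicontinuous in $\alpha$---the stability claim highlighted in the introduction---so the composition $\Psi:=\alpha^\ast\circ F$ is a nonempty, convex-, compact-valued, upper hemicontinuous self-correspondence on the compact convex set $[0,\pi_0]$. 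Kakutani's fixed-point theorem yields $\hat\alpha\in\Psi(\hat\alpha)$, and setting $\hat F:=F(\hat\alpha)$ and $\hat h:=h^p(\hat\alpha,\hat F)$ closes the loop: market clearing at $\hat\alpha$ comes from $\hat F\in\partial u(\hat\alpha,0)$, joint optimality of $(\hat\alpha,\hat h)$ for the producer at $\hat F$ follows from $\hat\alpha\in\alpha^\ast(\hat F)$, and optimality of $-\hat h$ for the investor is automatic from the duality encoded in \eqref{convolution-property}--\eqref{supergradient}.

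The main obstacle I anticipate is the stability claim in the second stage. Its proof reduces to continuity of the two conjugates $(\alpha,F)\mapsto u_p^\ast(\alpha,F)$ and $(\alpha,F)\mapsto u_s^\ast(\alpha,F)$, from which one obtains closed graph of the subdifferential correspondence of $u^\ast=u_p^\ast+u_s^\ast$ and hence upper hemicontinuity of $\alpha\mapsto\partial u(\alpha,0)$. The two ingredients needed for Berge-type continuity of these conjugates are upper semicontinuity and coercivity of $u_p(\alpha,\cdot)$ and $u_s(\alpha,\cdot)$ \emph{uniformly in $\alpha$}; these are supplied by the uniform coercivity \eqref{eq:spec_unif_coercivity} on the producer side and by the bound \eqref{boundH} on the investor side, the latter being uniform in $\alpha\in[0,\pi_0]$ because \ref{cond:NA} delivers a single $\QQ\in\MM_{\cal G}\cap{\cal Q}_X$ for which $P_T(\alpha)\in L^1(\QQ)$ for every such $\alpha$ by virtue of \eqref{eq:fin_price}.
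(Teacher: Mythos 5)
Your Stage 1 is essentially the paper's own Step 1: finiteness of $u_p(\alpha,\cdot)$ and $u_s(\alpha,\cdot)$ near $0$ via \ref{cond:EM}, \eqref{cond:boundedF} and \ref{cond:NA}, hence $0\in(\mathop{\rm dom}u(\alpha,\cdot))^\circ$ and $\partial u(\alpha,0)\neq\emptyset$ by Rockafellar, with market clearing from attainment of the sup-convolution. One inaccuracy there: Proposition \ref{pro:investors_problem} does \emph{not} furnish a maximizer for the investors (it only yields \eqref{r1}--\eqref{r2}); attainment must be argued, as the paper does, by applying \ref{cond:USC}, \eqref{eq:spec_unif_coercivity} and \eqref{r1} to the single concave function $h\mapsto u_p(\alpha,h)+u_s(\alpha,-h)$.

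Stage 2, however, contains a genuine gap: Kakutani's theorem needs convex values, and neither $\alpha^\ast(F)$ nor your composition $\Psi$ is convex-valued in general. The function $u_p$ is strictly concave in $\alpha$ and concave in $h^p$ \emph{separately} (this is exactly what the proof of Proposition \ref{pro:producers_problem} asserts), but not jointly concave: the quadratic $\QQQ$ in \eqref{eq:QQQ} has the bilinear cross term $-\alpha h^p(1-\varepsilon)/m$ and no $(h^p)^2$ term, and \eqref{eq:def-up} adds the further bilinear term $-h^p\ell(\alpha,-\mu)$, so the quadratic part has indefinite Hessian; only $-\tfrac{1}{\gamma_p}\kappa_X(-\gamma_p\ell(\alpha,h^p))$ can restore joint concavity, and it does so only for suitable parameters — in the Brownian example joint concavity amounts to $4\ud_1\ud_4>\ud_3^2$, which reduces to $2\gamma_p\Var[P_T](1+R)\,m>(1-\varepsilon)^2$ and fails when $\gamma_p\Var[P_T]$ is small. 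Consequently $\alpha\mapsto -u_p^\ast(\alpha,F)=\sup_{h^p}\{u_p(\alpha,h^p)-h^pF\}$ is a supremum of strictly concave functions of $\alpha$, which need not be concave; its argmax $\alpha^\ast(F)$ need not be convex or a singleton, and your claim that strict concavity ``propagates through the sup in $h^p$'' is unjustified. Even granting convex-valued $\alpha^\ast(F)$, the composed map $\Psi(\alpha)=\bigcup_{F\in\partial u(\alpha,0)}\alpha^\ast(F)$ is a union over an interval of prices and is typically non-convex; and an upper hemicontinuous, compact- but non-convex-valued self-map of $[0,\pi_0]$ can fail to have any fixed point (a jump correspondence taking value $\{\pi_0\}$ to the left of some $\alpha_0$, $\{0\}$ to its right and $\{0,\pi_0\}$ at $\alpha_0$ is a counterexample). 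A related slip: a \emph{selection} of $\partial u(\cdot,0)$ inherits no continuity — only the correspondence is upper hemicontinuous. The paper avoids fixed-point machinery entirely: it takes an optimizing sequence $(\alpha^n)$ for $\alpha\mapsto-u_p^\ast(\alpha,F(\alpha))$, proves the associated prices $F^n$ and positions $h^n$ are bounded (this is where your ingredients \eqref{boundH} and \eqref{eq:spec_unif_coercivity} actually enter), and passes to the limit to show $F=\lim F^n\in\partial u(\alpha,0)$ with clearing and optimality preserved — no convexity in $\alpha$ is ever needed. Note finally that the two routes target subtly different optimality notions: the paper's limit argument certifies optimality of $\hat\alpha$ for $\alpha\mapsto -u_p^\ast(\alpha,F(\alpha))$, i.e.\ with the price response anticipated, whereas your fixed point would deliver the price-taking best response $\hat\alpha\in\alpha^\ast(\hat F)$ at fixed $\hat F$; as written, though, your construction does not go through.
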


\begin{remark}[Uniqueness]\label{rem:uniq}
The equilibrium in Theorem \ref{thm:main} is not unique in general, since the 
supergradient $\partial u(\alpha,0)$ is not a singleton. However, if the 
functions $h^p\mapsto u_p(\alpha,h^p)$ and $h^s\mapsto u_s(\alpha,h^s)$ are 
differentiable for all $\alpha\in[0,\pi_0]$, then the equilibrium commodity 
forward price is unique. Indeed, the proof of Theorem \ref{thm:main} yields that 
any equilibrium commodity forward price $\hat F$ satisfies $\hat F\in \partial 
u(\hat\alpha,0)$ for the unique optimizer $\hat \alpha\in[0,\pi_0]$. If 
$u_p(\hat\alpha,\cdot)$ and $u_s(\hat\alpha,\cdot)$ are both differentiable then 
it follows, for instance from Lemma 1.6.5 in \citet{Cheridito_2013}, that 
$u(\hat\alpha,\cdot)$ is differentiable at $0$, in which case  $\partial 
u(\hat\alpha,0)$ is a singleton. Moreover, if $h^p\mapsto u_p(\hat\alpha,h^p)$ 
and $h^s\mapsto u_s(\hat\alpha,h^s)$ are strictly concave, then also the optimal 
strategy $\hat h$ is unique. These conditions can be easily verified in the 
examples; see Sections \ref{model-ctdm} and \ref{sec:illu}.
\end{remark}

\begin{proof}
The proof of this theorem is carried out in three steps and the strategy is
represented by the following diagram:
\begin{center}
\begin{tikzpicture}
\matrix (m) [matrix of math nodes,row sep=4em,column sep=5em,minimum width=3em]
{
     \alpha^n & \alpha^{\phantom{n}} \\
     F(\alpha^n) = F^n & F = F(\alpha) \\};
  \path[-stealth]
    (m-1-1) edge node [left] {S1} (m-2-1)
            edge node [below] {}  (m-1-2)
    (m-2-1.east|-m-2-2) edge node [below] {S2} (m-2-2)
    (m-1-2) edge node [right] {S1} (m-2-2)
    (m-2-2) edge  [loop] node[left=5mm] {S3} (m-2-2);
\end{tikzpicture}
\end{center}
The first step is to show that for every fixed $\alpha$ there exists an
equilibrium. Then, we consider a sequence $(\alpha^n)$ maximizing the producers'
utility that converges to some $\alpha$. The previous step yields the existence
of equilibrium prices $F(\alpha^n)=F^n$ and $F(\alpha)$ corresponding to
$\alpha^n$ and $\alpha$, respectively. The second step is to show that the
equilibrium prices $F^n$ converge to some limit, denoted by $F$. The final step
is to show that $F$ equals $F(\alpha)$.

\textit{Step 1:} Fix $\alpha\in[0,\pi_0]$. According to Propositions
\ref{pro:producers_problem} and \ref{pro:investors_problem}, there exists a
price $F=F(\alpha)\in\mathbb{R}$ such that
\begin{equation}\label{boundabove}
u(\alpha,h) \le \sup_{h^p} \left\{u_p(\alpha,h^p)-h^p F\right\}
+ \sup_{h^s} \left\{u_s(\alpha,h^s)-h^s F\right\} + hF <\infty.
\end{equation}
Using \eqref{cond:boundedF} and conditions \ref{cond:EM} and \ref{cond:NA} we
get that $u_p(\alpha,\cdot)>-\infty$ and $u_s(\alpha,\cdot)>-\infty$ in a
neighborhood of $0$. Hence $u(\alpha,\cdot)>-\infty$ on a neighborhood of $0$,
therefore $0$ belongs to the interior of $\mathop{\rm dom} u(\alpha,\cdot)$,
which by \cite[Theorem 23.4]{Rockafellar_1997} implies that $\partial
u(\alpha,0) \neq \emptyset$. Let $F(\alpha)$ be an element of the supergradient
$\partial u(\alpha,0)$. Then
\begin{align}\label{conv2}
u(\alpha,0)
 &\le \sup_{h^p} \left\{u_p(\alpha,h^p)-h^p F(\alpha)\right\}
    + \sup_{h^s} \left\{u_s(\alpha,h^s)-h^s F(\alpha)\right\} \nonumber\\
 &= -u_p^\ast(\alpha,F(\alpha))-u_s^\ast(\alpha,F(\alpha)) \nonumber\\
 &= u(\alpha,0)= \sup_{h^p+h^s=0}
    \left\{u_p(\alpha,h^p)+u_s(\alpha,h^s)\right\},
\end{align}
where the second to last equality follows from \eqref{convolution-property} and
\eqref{supergradient} using that $h=0$. By means of Corollary
\ref{corr:spec_fun_properties} and Proposition \ref{pro:investors_problem} we
deduce that the function $h\mapsto u_p(\alpha,h)+u_s(\alpha,-h)$ is concave and
tends to $-\infty$ as $h\to\pm\infty$; see in particular
\eqref{eq:spec_unif_coercivity} and \eqref{r1}. Therefore, the supremum in
\eqref{conv2} is attained for $h^p(\alpha),h^s(\alpha)\in\mathbb{R}$ with
$h^p(\alpha)+h^s(\alpha)=0$. Moreover, it follows from \eqref{conv2} that
\[
h^p(\alpha)=\mathop{\rm argmax}\left\{ u_p(h^p,\alpha)-h^p F(\alpha)\right\}
  \quad\mbox{and}\quad
h^s(\alpha)=\mathop{\rm argmax}\left\{ u_s(h^s,\alpha)-h^s F(\alpha)\right\}.
\]
In other words, for every fixed $\alpha\in[0,\pi_0]$ there exists an
equilibrium.

\textit{Step 2:} Consider an optimizing sequence $(\alpha^n)$ for the
producers' utility converging to $\alpha$, then
\begin{equation}\label{eqmax}
-u_p^\ast(\alpha^n,F^n) \xrightarrow[n\to\infty]{}
 \sup_\alpha \{-u_p^\ast(\alpha,F(\alpha))\},
\end{equation}
where $F^n=F(\alpha^n)$ is the sequence of equilibrium prices corresponding to
$\alpha^n$. Let us now prove that both the equilibrium prices $F^n$ and the
optimal strategies $h^n=h^p(\alpha^n)=-h^s(\alpha^n)$ are bounded; henceforth
$h^n\to h$ and $F^n\to F$ by possibly passing to a subsequence.

The upper semicontinuity of $u_p$, condition \ref{cond:USC} and the definition
of the sup-convolution yield that
\begin{align*}
\limsup_{n\to+\infty} u(\alpha^n,0)
 \le \limsup_{n\to+\infty}\{u_p(\alpha^n,h^n)+u_s(\alpha^n,-h^n)\}
 \le u_p(\alpha,h)+u_s(\alpha,-h)
 \le u(\alpha,0),
\end{align*}
which is finite by \eqref{boundabove}. Moreover, due to condition
\eqref{cond:boundedF} there exists a neighborhood ${\cal V}$ of 0 such that
\begin{align*}
\inf_{h^p\in{\cal V}}\inf_{n\in\mathbb{N}} u_p(\alpha^n,h^p)>-\infty.
\end{align*}
Hence, there exist constants $c_1\in\mathbb{R}$ and $c_2>0$ such that
\begin{align*}
-u^\ast_p(\alpha^n,F^n)
 =\sup_{h^p\in\mathbb{R}}\left\{ u_p(\alpha^n,h^p)-F^n h^p\right\}
 \ge c_1+c_2|F^n|
\end{align*}
and similarly $-u^\ast_s(\alpha^n,F^n)\ge c_1+c_2|F^n|$. Therefore,
$2c_1+2 c_2|F^n|\le \sup_{n\in\mathbb{N}} u(\alpha^n,0)<+\infty$ showing
that $(F^n)$ is bounded. Since
\[-u^\ast_p(\alpha^n,F^n)=u_p(\alpha^n,h^n)-h^n F^n,\]
it follows from Corollary \ref{corr:spec_fun_properties} that $(h^n)$ is also
bounded.

\textit{Step 3:} Finally, the goal is to identify $F$ as the desired
equilibrium price, that is, prove that $F=F(\alpha)$. We start by showing that
\begin{align}\label{step1}
u_p^\ast(\alpha^n,F^n)\xrightarrow[n\to\infty]{} u^\ast_p(\alpha,F).
\end{align}
Indeed, by continuity of $u_p(\alpha,h^p)$ in $\alpha$ we get that
$h^pF^n-u_p(\alpha^n,h^p)\to h^p F-u_p(\alpha,h^p)$. Thus, by the definition of
the conjugate $u^\ast_p(\alpha,F)=\inf_{h^p}\{ h^p F-u_p(\alpha,h^p)\}$, it
follows that
\begin{align*}
\limsup_{n\to\infty} u^\ast_p(\alpha^n,F^n)\le u^\ast_p(\alpha,F).
\end{align*}
Moreover, equilibrium prices belonging to the supergradient of $u$, ensures that
\begin{align*}
\liminf_{n\to\infty} u^\ast_p(\alpha^n,F^n) 
	= \liminf_{n\to\infty}\{ h^n F^n-u_p(\alpha^n,h^n) \}
	\ge h F-u_p(\alpha,h)
	\ge u^\ast_p(\alpha,F)
\end{align*}
and thus $\liminf_{n\to\infty} u^\ast_p(\alpha^n,F^n)\ge u^\ast_p(\alpha,F)$.
Hence \eqref{step1} holds. The same argumentation implies that
$u^\ast_s(\alpha^n,F^n)\to u^\ast_s(\alpha,F)$.

Next, we show that $u(\alpha^n,0)\to u(\alpha,0)$. On the one hand, there
exists an $h'\in\mathbb{R}$ such that
\begin{align*}
u(\alpha,0)
 = u_p(\alpha,h')+u_s(\alpha,-h')
 = \lim_{n\to\infty} \{u_p(\alpha^n,h')+u_s(\alpha^n,-h')\}
 \le \liminf_{n\to\infty} u(\alpha^n,0).
\end{align*}
The first equality holds since the supremum is attained, the second follows
from the continuity of $u_p$ and $u_s$ in $\alpha$ and the last one by
\eqref{convolution-definition}. On the other hand, $(h^n)$ converging to $h$
implies
\begin{align*}
\limsup_{n\to\infty} u(\alpha^n,0)
  = \limsup_{n\to\infty} \{u_p(\alpha^n,h^n)+u_s(\alpha_n,-h^n)\}
  \le u_p(\alpha,h)+u_s(\alpha,-h)
  \le u(\alpha,0),
\end{align*}
making use of the same argumentation for each equality as above.

Summarizing, using the convergence of the sup-convolutions,
\eqref{supergradient}, \eqref{convolution-property} and the convergence of the
conjugates, we arrive at
\begin{equation}\label{eq:sum}
u(\alpha,0)
 = \lim_{n\to\infty} u(\alpha^n,0)
 = \lim_{n\to\infty} \{-u^\ast_p(\alpha^n,F^n)-u^\ast_s(\alpha^n,F^n)\}
 = -u^\ast_p(\alpha,F)-u^\ast_s(\alpha,F).
\end{equation}
In particular, the sup-convolution $u(\alpha,0)$ is attained at $h^p(\alpha)=h$
and $h^s(\alpha)=-h\in\mathbb{R}$ for which $h^p(\alpha)+h^s(\alpha)=0$.
Therefore, according to \eqref{eqmax} and \eqref{eq:sum} the pair
$(\alpha,h^p(\alpha))$ and $h^s(\alpha)$ are optimal trading strategies for the
price $F$ which satisfy the clearing condition. Hence $(\alpha,h^p(\alpha),F)$
is an equilibrium.
\end{proof}

\section{A model with continuous trading and dependent markets}
\label{model-ctdm}

In this section, we consider a model where investors are allowed to trade 
continuously over time in the financial market, while the dynamics of the 
financial and the commodity markets are dependent and driven by \lev processes. 
The aim is to derive explicit representations for the optimization problems of 
the producers and the investors. 

\lev processes have been used for modeling variables in finance, such as stocks 
or interest rates, whose return distributions exhibit fat tails and skew, 
because they can combine realistic features with analytical tractability; see 
e.g. \citet{Eberlein01a,Carretal02,ContTankov03} and \citet{Schoutens03}. 
\citet{GorRou06} provide evidence that commodity futures exhibit similar 
behavior. Using \lev processes, we can easily combine diffusions with jump 
processes, while different types of dependence structures can also be 
incorporated. 

In the model considered in this section, the investors observe the evolution of 
the consumers' demand through time and adjust their trading strategy 
dynamically\footnote{It is implicitly assumed that the investors' investment 
choices are independent of their possible commodity consumption policy. This 
assumption has been imposed in the majority of the related literature (cf. 
\cite{AchaLochRama13,CasCollRou08,EkeLauVil2014,Hirsh88}) and implies that the 
consumers or corporations that use the commodity to produce other goods are only 
a small part of the investors' side and their possible joint optimization 
problem is negligible when the investors are considered as a whole.}. Moreover, 
the uncertainty in the evolution of the consumers' demand and the evolution of 
the financial market are dependent processes which can exhibit `shocks' (i.e. 
large jumps). The producers are trading in the forward market only at discrete 
time instances, associated with their production schedule\footnote{The fact that 
dynamic trading of the commodity forward contract is not considered implies that 
producers counterpart the position of investors only at the initial and the 
terminal time for hedging purposes. During the time period $(0,T)$, investors 
may trade in the forward market and form their representative agent's position. 
The focus of our analysis is how the interaction of producers and investors 
results in the equilibrium prices at times 0 and $T$.}. This setting reflects 
real-world situations, in the sense that the arrival of certain news can affect 
both the demand for a certain commodity as well as the financial market, these 
processes are observable over time and investors typically trade continuously in 
the financial market and adjust their portfolios according to new information.

Consider a complete stochastic basis $(\Omega,\F,\bF,\P)$ where
$\bF=(\F_t)_{t\in[0,T]}$ denotes the filtration (flow of information). Let
$\prozess[Z]$ be an $\R^d$-valued \lev process with characteristic triplet
$(b,c,\nu)$, where $b\in\Rd$, $c$ is a symmetric, non-negative definite $d\times
d$ matrix and $\nu$ is a \lev measure; see e.g. \citet{Applebaum09,Kyprianou06}
or \citet{Sato99} for more details on \lev processes. Denote the set of
exponential moments of $Z_t$, $t\in[0,T]$, by
\begin{align}\label{def:set-U}
\UUU_Z
 = \bigg\{ u\in\Rd: \E\big[ \e^{\scal{u}{Z_t}} \big] <\infty \bigg\}
 = \bigg\{ u\in\Rd: \int_{|x|>1} \e^{\scal{u}{x}} \nu(\dx) <\infty \bigg\}.
\end{align}
This set is convex and contains the origin, cf. \citet[Thm.~25.17]{Sato99}.
Assuming that $0\in\UUU_Z^\circ$, exponential moments exist and the \lito
decomposition takes the form
\begin{align}
Z_t = bt + \sqrt{c} W_t + \int_0^t\int_{\R^d} x (\mu^Z-\nu^Z)\dsdx,
\end{align}
where $\mu^Z$ is the random measure of jumps of the process $Z$ with compensator
$\nu^Z=\text{Leb}\otimes\nu$. The moment generating function of $Z_t$ is
well-defined for every $u\in\mathcal{U}_Z$ and we know from the \lk formula that
\begin{align}\label{eq:Levy-cumulant}
 \E\big[ \e^{\scal{u}{Z_t}} \big] = \exp\big(t\kappa(u)\big),
\end{align}
where $\kappa$ denotes the cumulant generating function of $Z_1$, that is
\begin{align}\label{def:cgf}
\kappa(u) = \scal{u}{b} + \frac{\scal{u}{cu}}{2}
          + \int_{\R^d} (\e^{\scal{u}{x}} -1 - \scal{u}{x}) \nu(\dx).
\end{align}
Moreover, if $0\in\UUU_Z^\circ$, then the cumulant generating function $\kappa$
is real analytic in the interior of $\UUU$ and thus smooth; cf. \citet[Lemma
2.1]{Eberlein_Glau_2014}.

The uncertainty in the financial and the commodity markets is modeled using
the \lev process $Z$ and a factor structure. More precisely, we consider vectors
$u_1,u_2\in\R^d$ that specify how $Z$ influences each market. We will
incorporate the financial market in a representative stock index whose
discounted price process $S$ is modeled by
\begin{align}\label{eq:model-spec}
S_t = S_0 \e^{Y_t}
    \quad \text{where} \quad
Y_t = \scal{u_1}{Z_t},
\end{align}
with $S_0\in\R_+$ and $t\in[0,T]$. Moreover, the random variable $X$ that
determines the consumers' demand function at the terminal time is modeled via
\begin{align}\label{eq:model-prod}
 X = \scal{u_2}{Z_T}.
\end{align}

\subsection{The producers' optimization problem revisited}

The cumulant generating function of the random variable $X=\scal{u_2}{Z_T}$ in
this setting, using \eqref{eq:Levy-cumulant}, takes the form
\begin{align}\label{eq:kappa_X-2}
\kappa_X(v) = \kappa(vu_2) T =: \kappa_2(v) T,
\end{align}
and the set of exponential moments equals $\UUU_X=\{v\in\R: vu_2\in\UUU_Z\}$.
Therefore, the function $u_p$ in the producers' optimization problem
\eqref{eq:producers_problem_conju}--\eqref{eq:def-up} can be rewritten as
\begin{align}\label{eq:up-Levy}
u_p(\alpha,h^p)
=\left\{%
 \begin{array}{ll}
 \QQQ(\alpha,0) - \frac{1}{\gamma_p}
  \kappa_2\big( -\gamma_p \ell(\alpha,h^p) \big) T
 - h^p \ell(\alpha,-\mu),
 & \hbox{if $(\alpha,h^p)\in\widetilde{\UUU}_X$,} \\
  -\infty, & \hbox{otherwise.} \\
\end{array}%
\right.
\end{align}
Moreover, if conditions \ref{cond:EM} and \ref{cond:COE} are satisfied, this
function is concave, upper semicontinuous and coercive; cf. Corollary
\ref{corr:spec_fun_properties}.

\begin{remark}
Let us briefly discuss for which \lev processes conditions \ref{cond:EM} and
\ref{cond:COE} are satisfied.  Condition \ref{cond:EM} is standard in
mathematical finance and is satisfied by the majority of \lev models, for
example, by the generalized hyperbolic, the CGMY and the Meixner processes. The
set $\UUU_Z$ is bounded for the majority of \lev models, in particular for the
aforementioned ones. The only exceptions popular in mathematical finance are
Brownian motion and Merton's jump-diffusion model. In these cases however the
existence of a Brownian part ensures that \ref{cond:COE} is satisfied.
\end{remark}

\subsection{The investors' optimization problem revisited}
\label{subs:spec_revisited}

The investors in this setting can trade continuously in the asset $S$ which
incorporates the financial market according to an admissible strategy $\theta$.
In other words, the set of trading outcomes equals
\begin{align*}
\mathcal{G}
  = \bigg\{ G_T(\theta) = \int_0^T \theta_u\ud S_u : \theta\in\Theta
    \bigg\},
\end{align*}
where the set of admissible trading strategies is defined by
\begin{equation}\label{def:Theta}
\Theta
  = \big\{ \theta \in L(S): G(\theta)
      \text{ is a } \QQ\text{-martingale for every } \QQ\in\MM_f \big\},
\end{equation}
while $L(S)$ denotes the set of predictable, $S$-integrable processes and
$\MM_f$ the set of absolutely continuous local martingale measures with finite
entropy, that is
\begin{equation}\label{eq:MMf}
\MM_f
  = \big\{ \QQ\ll\P \text{ on } \F_T:  S \text{ is a }
      \QQ\text{-local martingale and } \mathcal{H}(\QQ|\PP)<\infty \big\}.
\end{equation}
The \ref{cond:NA} condition is subsequently adjusted to the following one:
\begin{enumerate}[label={$(\mathbb{NA}')$},leftmargin=40pt]
\item $\MM_{f}\cap{\cal Q}_X\neq\emptyset$.\label{cond:NALevy}
\end{enumerate}
The investors' position \eqref{eq:investors_position} takes now the form
\begin{align}\label{eq:investors_position-Levy}
\overline{w}(\theta,h^s) &= h^s(P_T-F) + G_T(\theta),
\end{align}
and the aim is to derive an explicit expression for their optimization problem,
in particular for the function $u_s(\alpha,h^s)$ in \eqref{eq:def-us}.

Define the measure $\P_\mathpzc{s}$ via the Radon--Nikodym derivative
\begin{equation}\label{eq:measure_s}
\frac{\ud \P_{\sss}}{\ud \P}
 = \frac{\exp\left(-\gamma_s h^sP_T\right)}
    {\E\sbrac{\exp\left(-\gamma_s h^sP_T\right)}}
 \stackrel{\eqref{eq:fin_price}}{=}
   \frac{\exp\left(-\frac{\gamma_s h^s}{m}X\right)}
        {\EE\sbrac{\exp\left(-\frac{\gamma_sh^s}{m}X\right)}}
 \stackrel{\eqref{eq:model-prod}}{=}
    \frac{\exp\left(-\scal{\frac{\gamma_s h^s}{m}u_2}{Z_T}\right)}
        {\EE\sbrac{\exp\left(-\scal{\frac{\gamma_s h^s}{m}u_2}{Z_T}\right)}},
\end{equation}
for every $h^s$ such that $-\frac{\gamma_s h^s}{m}u_2\in\UUU_Z$. The following 
lemma provides the dynamics of the process $Z$ under $\P_\sss$.

\begin{lemma}\label{lem:triplet}
The process $Z$ remains a \lev process under $\P_\sss$ with cumulant generating
function provided by
\begin{align}
\kappa^{\mathpzc{s}}(v)
 = \kappa\left(v+\xi\right) - \kappa\left(\xi\right),
\end{align}
where $\xi:=-\frac{\gamma_sh^s}{m}u_2$, for all $v\in\R^d$ such that
$v+\xi\in\UUU_Z$. Moreover, the \lev triplet of the univariate \lev process
$\scal{u_i}{Z}$, $i=1,2$, under $\P_\mathpzc{s}$ is provided by
\begin{eqnarray*}
b_i^\sss &=& \scal{u_i}{b} + \scal{u_i}{c\xi}
     + \int\nolimits_{\R^d} \scal{u_i}{x}\big(\e^{\scal{\xi}{x}}-1\big)
    \nu(\dx)\\
c_i^\sss &=& \scal{u_i}{cu_i} \\
\nu_i^\sss(E) &=&
        \int\nolimits_{\R^d} 1_E(\scal{u_i}{x}) \, \e^{\scal{\xi}{x}}\nu(dx),
\qquad E\in\mathcal{B}(\R^d).
\end{eqnarray*}
\end{lemma}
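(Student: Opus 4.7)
The measure change \eqref{eq:measure_s} is an Esscher transform with parameter $\xi = -\frac{\gamma_s h^s}{m}u_2$. My plan is to exploit this structural fact in three steps: first establish that $Z$ remains a \lev process under $\P_\sss$ and compute its multidimensional cumulant generating function; then project to obtain the cumulant generating function of each $\scal{u_i}{Z_1}$ under $\P_\sss$; and finally read off the \lev triplet by matching terms in the \lk formula \eqref{def:cgf}.

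For the first step, write $\frac{\ud\P_\sss}{\ud\P} = \e^{\scal{\xi}{Z_T}}/\E[\e^{\scal{\xi}{Z_T}}]$ and, for any $0\le s<t\le T$ and Borel $f$, use the independence and stationarity of the increments of $Z$ under $\P$ together with \eqref{eq:Levy-cumulant} to factor the expectation of $f(Z_t-Z_s)\e^{\scal{\xi}{Z_T}}$ through the three disjoint blocks $[0,s]$, $[s,t]$, $[t,T]$. The factors from $[0,s]$ and $[t,T]$ cancel against the normalizing constant, leaving
\begin{align*}
\E_\sss\big[ f(Z_t-Z_s) \big]
 = \frac{\E\big[ f(Z_t-Z_s)\e^{\scal{\xi}{Z_t-Z_s}} \big]}
        {\E\big[ \e^{\scal{\xi}{Z_t-Z_s}}\big]},
\end{align*}
which depends only on $t-s$; an analogous factorization over finitely many disjoint increments yields their independence. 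Taking $f(y)=\e^{\scal{v}{y}}$ and using \eqref{eq:Levy-cumulant} with $t-s=1$ gives the multidimensional cumulant generating function
\begin{align*}
\kappa^\sss(v)
 = \log\E_\sss\big[\e^{\scal{v}{Z_1}}\big]
 = \kappa(v+\xi)-\kappa(\xi),
\end{align*}
valid whenever $v+\xi\in\UUU_Z$.

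For the second step, the univariate \lev process $\scal{u_i}{Z}$ under $\P_\sss$ inherits the \lev property from the projection, and its scalar cumulant generating function is $v\mapsto\kappa^\sss(vu_i)=\kappa(vu_i+\xi)-\kappa(\xi)$. Plugging \eqref{def:cgf} into this expression and collecting the Gaussian and jump contributions, the difference $\kappa(vu_i+\xi)-\kappa(\xi)$ becomes
\begin{align*}
v\scal{u_i}{b} + v\scal{u_i}{c\xi} + \tfrac{v^2}{2}\scal{u_i}{cu_i}
 + \int\nolimits_{\R^d}\big(\e^{v\scal{u_i}{x}}-1\big)\e^{\scal{\xi}{x}}\nu(\dx)
 - v\int\nolimits_{\R^d}\scal{u_i}{x}\,\nu(\dx).
\end{align*}
Adding and subtracting $v\scal{u_i}{x}\,\e^{\scal{\xi}{x}}$ inside the first integral splits the jump part into $\int(\e^{v\scal{u_i}{x}}-1-v\scal{u_i}{x})\e^{\scal{\xi}{x}}\nu(\dx)$ plus the correction $v\int\scal{u_i}{x}(\e^{\scal{\xi}{x}}-1)\nu(\dx)$, which merges with the linear-in-$v$ terms.

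For the third step, I match this expression term by term against the one-dimensional \lk formula. The pushforward change of variable $y=\scal{u_i}{x}$ identifies the \lev measure as $\nu_i^\sss(E)=\int \ind{E}(\scal{u_i}{x})\,\e^{\scal{\xi}{x}}\nu(\dx)$; the Gaussian coefficient is $\scal{u_i}{cu_i}$; and the drift picks up exactly the three pieces $\scal{u_i}{b}+\scal{u_i}{c\xi}+\int\scal{u_i}{x}(\e^{\scal{\xi}{x}}-1)\nu(\dx)$, producing the claimed formulas. The only routine caveat is finiteness of $\int\scal{u_i}{x}(\e^{\scal{\xi}{x}}-1)\nu(\dx)$, which follows from $\xi,\xi\pm\varepsilon u_i\in\UUU_Z^\circ$ (ensured on a neighbourhood by $0\in\UUU_Z^\circ$) and the analyticity of $\kappa$ on $\UUU_Z^\circ$ recalled after \eqref{def:cgf}; this is the only place where a small amount of care is needed, but it is not a genuine obstacle.
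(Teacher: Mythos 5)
Your proof is correct, but it takes a genuinely different route from the paper's: the paper dispenses with the lemma in one line, citing Theorem VII.3.1 of Shiryaev (1999) for the Esscher-type change of measure and Theorem 4.1 of Eberlein, Papapantoleon and Shiryaev (2008) for the triplet of $\scal{u_i}{Z}$ under $\P_\sss$, whereas you reprove both facts from scratch — the increment-factorization argument showing that $Z$ stays a \lev process under $\P_\sss$ with $\kappa^\sss(v)=\kappa(v+\xi)-\kappa(\xi)$, followed by term-by-term matching against the \lk formula \eqref{def:cgf} to read off $(b_i^\sss,c_i^\sss,\nu_i^\sss)$. What your route buys is self-containedness and an explicit view of exactly where the exponential-moment conditions enter; what the paper's route buys is brevity and the delegation of the routine checks to the references — in particular stochastic continuity and \cad paths under $\P_\sss$, which you leave implicit but which do follow, respectively, from absolute continuity of $\P_\sss$ with respect to $\P$ and from path regularity being a pathwise property.

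Two small points of care. First, your intermediate display splits off $v\int_{\R^d}\scal{u_i}{x}\,\nu(\dx)$ as a standalone integral; for an infinite-variation \lm this integral (and its companion $\int_{\R^d}(\e^{v\scal{u_i}{x}}-1)\e^{\scal{\xi}{x}}\nu(\dx)$) can diverge near the origin, so the split must be read as shorthand: keep the combined integrand $\bigl[(\e^{v\scal{u_i}{x}}-1)\e^{\scal{\xi}{x}}-v\scal{u_i}{x}\bigr]$ under a single integral and perform your add-and-subtract of $v\scal{u_i}{x}\e^{\scal{\xi}{x}}$ there; this yields exactly the two separately convergent pieces you state. Second, the finiteness of $\int_{\R^d}\scal{u_i}{x}(\e^{\scal{\xi}{x}}-1)\nu(\dx)$ is not ensured by $0\in\UUU_Z^\circ$ alone, as your last sentence suggests: it requires $\xi\in\UUU_Z^\circ$, so that $\xi\pm\eps u_i\in\UUU_Z$ for small $\eps$ and the bound $|\scal{u_i}{x}|\e^{\scal{\xi}{x}}\le\tfrac{1}{\eps}\bigl(\e^{\scal{\xi+\eps u_i}{x}}+\e^{\scal{\xi-\eps u_i}{x}}\bigr)$ applies on $\{|x|>1\}$. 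If $\xi\in\partial\UUU_Z$ the drift integral can genuinely be infinite; a one-dimensional example is $\nu(\dx)=\e^{-x}x^{-2}\ind{(1,\infty)}(x)\,\dx$ with $\xi=1$, where $0\in\UUU_Z^\circ$ but $\int x(\e^{x}-1)\e^{-x}x^{-2}\dx=\infty$. In the paper's applications $\UUU_Z=\R^2$, so the point is moot there, but in your write-up the hypothesis should read $\xi\in\UUU_Z^\circ$ rather than be attributed to $0\in\UUU_Z^\circ$; note the cumulant-generating-function part of the lemma is unaffected, since it holds for all $v$ with $v+\xi\in\UUU_Z$, boundary included.
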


\begin{proof}
See e.g. \citet[Theorem VII.3.1]{Shiryaev99} for the first part and
\citet[Theorem 4.1]{EberleinPapapantoleonShiryaev08} for the second.
\end{proof}

The exponential transform of the process $Y=\scal{u_1}{Z}$ is denoted by
$\widetilde{Y}$, that is $\EN(\widetilde{Y})=\e^{Y}$. The process
$\widetilde{Y}$ is again a \lev process and its triplet, relative to $\P_\sss$,
is given by
\begin{align}\label{eq:exp-trans-triplet}
\widetilde{b}^\sss_1
 &= b^\sss_1+\frac{c^\sss_1 }{2} + \int\nolimits_\R(\e^x-1-x)\nu^\sss_1(\dx)
  = \kappa_1^\sss(1) \nonumber\\
\widetilde{c}^\sss_1
 &= c^\sss_1 = c_1 \\ \nonumber
\widetilde{\nu}^\sss_1(E)
 &= \int\nolimits_\R 1_E(\e^x-1) \nu^\sss_1(\dx), \quad E\in\mathcal{B}(\R);
\end{align}
see \citet[Lemma 2.7]{KallsenShiryaev02}. Here, $\kappa_1^\sss$ denotes the
cumulant generating function of $Y$ under $\P_\sss$ and is given by
\eqref{eq:Levy-cumulant} using the triplet $(b^\sss_1,c^\sss_1,\nu^\sss_1)$.

Now, recalling \eqref{eq:investors_utility} and
\eqref{eq:investors_position-Levy}, the investors' utility takes the
following form:
\begin{align}\label{specs_ut_1}
\UU_s\big(\overline{w}(\theta,h^s)\big)
 &= -\frac{1}{\gamma_s} \log \E \Big[ \exp \Big(
      -\gamma_s \big[ h^s(P_T-F) + G_T(\theta) \big] \Big) \Big] \nonumber\\
 &\stackrel{\eqref{eq:fin_price}}{=}
    -\frac{1}{\gamma_s} \log \E \Big[ \exp \Big(
      -\gamma_s \Big[ \frac{h^s}{m}X + G_T(\theta) \Big] \Big) \Big]
    + C_1(h^s,\alpha,F)  \nonumber\\
 &\stackrel{\eqref{eq:measure_s}}{=}
    -\frac{1}{\gamma_s} \log \E_\sss
      \Big[ \exp \Big( -\gamma_s G_T(\theta) \Big) \Big]
    + C_1(h^s,\alpha,F) - C_2(h^s),
\end{align}
where
\begin{align}\label{specs_ut_2}
C_1(h^s,\alpha,F)
  := h^s\left(\phi_0(\pi_T)-\frac{\alpha(1-\varepsilon)}{m}-F\right)
 \quad \text{ and } \quad
C_2(h^s) := \frac{T}{\gamma_s}\kappa_2\left(-\frac{\gamma_sh^s}{m}\right).
\end{align}

The next result provides the solution of the optimization problem with respect
to the financial market. We will also make use of the following condition:
\begin{enumerate}[label={$(\mathbb{FE})$},leftmargin=40pt]
\item \textit{There exists $\eta_*\in\R$ such that} \label{cond:FE}
      \begin{align}
    \int_{\{x>1\}} \e^x\e^{\eta_*\e^x}\nu_1^\sss(\dx) < \infty,
      \end{align}
      \textit{which solves the equation}
      \begin{align}\label{def:theta*}
    \frac{\partial}{\partial v}
    \widetilde{\kappa}^\sss_1(v)\big|_{v=\eta_*} = 0,
      \end{align}
      \textit{where $\widetilde{\kappa}^\sss_1$ denotes the cumulant generating
      function of $\widetilde{Y}$ under $\P_\sss$.}
\end{enumerate}

\begin{proposition}\label{specs_fin_market}
Assume that \ref{cond:EM} and \ref{cond:FE} hold. Then
\begin{align}
\sup_{\theta\in\Theta} \left\{
 -\frac{1}{\gamma_s} \log \E_\sss
      \Big[ \exp \Big( -\gamma_s G_T(\theta) \Big) \Big] \right\}
 &= -\frac{1}{\gamma_s} \widetilde{\kappa}_1^\sss(\eta_*)T.
\end{align}
\end{proposition}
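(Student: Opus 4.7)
The strategy is the classical matching of an upper bound obtained by entropic/Fenchel duality with a lower bound attained by a specific ``constant-number-of-units-of-the-stochastic-logarithm'' strategy. The key object is the Esscher transform of $\P_\sss$ in the direction of $\widetilde Y$ with parameter $\eta_*$.

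\textbf{Step 1 (lower bound via an explicit strategy).} I would propose the candidate
\[
\theta^*_t := -\frac{\eta_*}{\gamma_s\, S_{t-}}, \qquad t\in[0,T].
\]
Since $S=S_0\,\EN(\widetilde Y)$, we have $\ud S_t = S_{t-}\,\ud\widetilde Y_t$, so $G_T(\theta^*) = -\tfrac{\eta_*}{\gamma_s}\,\widetilde Y_T$. I would first verify that $\theta^*\in\Theta$: by Lemma \ref{lem:triplet} and \eqref{eq:exp-trans-triplet}, $\widetilde Y$ is a \lev process under $\P_\sss$, and the integrability built into \ref{cond:FE} together with the standard \lk representation gives $\widetilde Y\in L^1(\Q)$ for every $\Q\in\MM_f$ with the required martingale property. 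Then, applying \eqref{eq:Levy-cumulant} under $\P_\sss$ with the cumulant $\widetilde\kappa_1^\sss$,
\[
\E_\sss\!\left[\exp\!\bigl(-\gamma_s G_T(\theta^*)\bigr)\right]
= \E_\sss\!\left[\exp(\eta_*\widetilde Y_T)\right]
= \exp\!\bigl(T\,\widetilde\kappa_1^\sss(\eta_*)\bigr),
\]
which yields $-\tfrac{1}{\gamma_s}\log\E_\sss[\exp(-\gamma_s G_T(\theta^*))] = -\tfrac{T}{\gamma_s}\widetilde\kappa_1^\sss(\eta_*)$ as a lower bound for the supremum.

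\textbf{Step 2 (upper bound via entropic duality).} I would introduce the Esscher measure
\[
\frac{\ud\Q^*}{\ud\P_\sss} := \exp\!\bigl(\eta_*\widetilde Y_T - T\,\widetilde\kappa_1^\sss(\eta_*)\bigr).
\]
A direct Esscher computation (as in Lemma \ref{lem:triplet}) shows $\widetilde Y$ is a \lev process under $\Q^*$ with cumulant $v\mapsto \widetilde\kappa_1^\sss(v+\eta_*)-\widetilde\kappa_1^\sss(\eta_*)$. Condition \ref{cond:FE}, namely $\tfrac{\partial}{\partial v}\widetilde\kappa_1^\sss(\eta_*)=0$, forces the $\Q^*$-mean of $\widetilde Y_1$ to vanish; combined with the integrability provided by the displayed exponential moment in \ref{cond:FE}, this makes $\widetilde Y$ a true $\Q^*$-martingale, and hence $S=S_0\EN(\widetilde Y)$ a $\Q^*$-local martingale. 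Moreover, since $\P_\sss\ll\P$ with finite entropy (by \ref{cond:EM} and the definition \eqref{eq:measure_s}) and $\Q^*\ll\P_\sss$ with $\HH(\Q^*|\P_\sss)<\infty$, the chain rule for entropy gives $\Q^*\in\MM_f$.

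\textbf{Step 3 (closing the duality gap).} For any $\theta\in\Theta$, the Fenchel--Young inequality for the convex conjugate pair $(\e^{-\gamma_s \cdot},\,\tfrac{1}{\gamma_s}\HH(\cdot|\P_\sss))$ gives
\[
-\frac{1}{\gamma_s}\log\E_\sss\!\left[\e^{-\gamma_s G_T(\theta)}\right]
\le \E_{\Q^*}\!\bigl[G_T(\theta)\bigr] + \frac{1}{\gamma_s}\HH(\Q^*|\P_\sss).
\]
Since $\Q^*\in\MM_f$, the admissibility condition \eqref{def:Theta} forces $\E_{\Q^*}[G_T(\theta)]=0$. Evaluating the entropy using $\E_{\Q^*}[\widetilde Y_T]=0$,
\[
\HH(\Q^*|\P_\sss)=\E_{\Q^*}\!\bigl[\eta_*\widetilde Y_T - T\widetilde\kappa_1^\sss(\eta_*)\bigr] = -T\,\widetilde\kappa_1^\sss(\eta_*),
\]
so the supremum is bounded above by $-\tfrac{T}{\gamma_s}\widetilde\kappa_1^\sss(\eta_*)$. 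Combined with Step 1 this yields equality.

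\textbf{Main obstacle.} The conceptual content is very classical (exponential utility maximization in an exponential-\lev market), so the real work is purely technical: establishing that $\theta^*\in\Theta$ and that $\Q^*\in\MM_f$ (equivalently, that $G(\theta)$ is a true $\Q^*$-martingale for every $\theta\in\Theta$). Both rely crucially on the integrability built into \ref{cond:FE}: the exponential moment $\int_{\{x>1\}}\e^x \e^{\eta_*\e^x}\nu_1^\sss(\ud x)<\infty$ is exactly what is needed for $\widetilde Y$ to have finite exponential moments in a neighbourhood of $\eta_*$ under $\P_\sss$, making the Esscher change of measure well-defined and $\widetilde Y$ a genuine $\Q^*$-martingale rather than merely a local one.
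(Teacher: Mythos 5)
Your overall architecture is sound and your dual object is exactly the paper's: your $\Q^*$ is the paper's $\P_*$, the Esscher martingale measure for $\widetilde Y$ with parameter $\eta_*$, and your entropy computation $\HH(\Q^*|\P_\sss)=\E_{\Q^*}[\eta_*\widetilde Y_T-T\widetilde\kappa_1^\sss(\eta_*)]=-T\widetilde\kappa_1^\sss(\eta_*)$ coincides with the paper's final step. But your route is genuinely different: you run a self-contained primal--dual sandwich (explicit candidate strategy $\theta^*_t=-\eta_*/(\gamma_s S_{t-})$ for the lower bound, Fenchel--Young against $\Q^*$ for the upper bound), whereas the paper outsources both halves to the literature --- the identity $\sup_{\theta\in\Theta}\{\cdot\}=\frac{1}{\gamma_s}\inf_{\Q\in\MM_f}\HH(\Q|\P_\sss)$ is taken from Fujiwara's Theorem 4.2, and the identification of the minimal entropy martingale measure with the Esscher measure for $\widetilde Y$ from Hubalek--Sgarra (Theorems 4 and 8), after verifying $\E_\sss[|\widetilde Y_T|\e^{\eta_*\widetilde Y_T}]<\infty$ from \ref{cond:FE} via submultiplicativity and Sato's moment criterion. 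Your approach buys transparency and avoids two nontrivial citations; the paper's buys exactly the technical verifications you gloss over.

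And that is where the genuine gap sits. In Step 1 you write that \ref{cond:FE} gives ``$\widetilde Y\in L^1(\Q)$ for every $\Q\in\MM_f$ with the required martingale property.'' Integrability of the terminal value does not upgrade the $\Q$-local martingale $\widetilde Y$ to a true $\Q$-martingale, and $\Q\in\MM_f$ is an arbitrary finite-entropy measure under which $\widetilde Y$ need not even be a \lev process, so \eqref{eq:Levy-cumulant} is unavailable there. One would need, say, $\E_\Q[\sup_{t\le T}|\widetilde Y_t|]<\infty$ for every such $\Q$, typically obtained from two-sided exponential moments of $\sup_t|\widetilde Y_t|$ under $\P_\sss$ paired with the dual Young inequality against finite entropy --- and \ref{cond:FE} only controls the exponential moment at the single point $\eta_*$, which when $\eta_*<0$ says nothing about positive exponential moments of $\widetilde Y_T$. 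This verification of $\theta^*\in\Theta$ is precisely the content delegated to Fujiwara and to Hubalek--Sgarra's Remark 4 in the paper's proof. A second, smaller wrinkle of the same kind: in Step 2, membership $\Q^*\in\MM_f$ requires finite entropy relative to $\P$ (see \eqref{eq:MMf}), not merely relative to $\P_\sss$; your chain-rule step then needs $\E_{\Q^*}[|X|]<\infty$, a joint integrability of the demand and market jump components that \ref{cond:FE} --- which constrains only the $\scal{u_1}{x}$ marginal of the tilted \lev measure $\nu_1^\sss$ --- does not literally deliver. Without $\Q^*\in\MM_f$ the admissibility condition \eqref{def:Theta} cannot be invoked to conclude $\E_{\Q^*}[G_T(\theta)]=0$ in Step 3 (where, additionally, the Fenchel inequality should be applied to the truncations $G_T(\theta)\vee(-n)$ and passed to the limit, as the paper does in Proposition \ref{pro:investors_problem}). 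So: correct strategy, correct computations, but the two membership claims $\theta^*\in\Theta$ and $\Q^*\in\MM_f$ are asserted rather than proved, and they are the actual mathematical substance of the statement.
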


\begin{proof}
According to \citet[Theorem 4.2]{Fujiwara_2006} and using condition
\ref{cond:FE}, we have that
\begin{align}\label{eq:entropy}
\sup_{\theta\in\Theta} \left\{
 -\frac{1}{\gamma_s} \log \E_\sss
      \Big[ \exp \Big( -\gamma_s G_T(\theta) \Big) \Big] \right\}
 &= -\frac{1}{\gamma_s} \log \inf_{\theta\in\Theta} \E_\sss
      \Big[ \exp \Big( -\gamma_s G_T(\theta) \Big) \Big] \nonumber\\
 &= \frac{1}{\gamma_s} \inf_{\Q\in\mathcal{M}_f} \mathcal{H}(\Q|\P_\sss)
  = \frac{1}{\gamma_s} \mathcal{H}(\P_*|\P_\sss),
\end{align}
where $\P_*$ denotes the measure minimizing the relative entropy with respect
to $\P_\sss$.

The function $x\mapsto|\e^x-1|\e^{\eta_*(\e^x-1)}$ is submultiplicative and
bounded by $\e^x\e^{\eta_*\e^x}$ on $\{x>1\}$, thus condition \ref{cond:FE} in
conjunction with \citet[Theorem 25.3]{Sato99} and \eqref{eq:exp-trans-triplet}
yield that
\[
\E_\sss\Big[|\widetilde{Y}_T|\e^{\eta_*\widetilde{Y}_T}\Big] < \infty.
\]
Applying \citet[Theorems 4 and 8]{Hubalek_Sgarra_2006}, we get that the minimal
entropy martingale measure for $\e^{Y}$ exists and coincides with the Esscher
martingale measure for $\widetilde{Y}$. The latter is provided by
\begin{align}
\frac{\ud \P_*}{\ud \P_\sss}
 = \frac{\e^{\eta_* \widetilde{Y}_T}}
        {\E_\sss\big[\e^{\eta_* \widetilde{Y}_T}\big]},
\end{align}
where $\eta_*$ is the root of equation \eqref{def:theta*}. Finally, using the
martingale property of $\widetilde{Y}$ (cf.
\cite[Remark~4]{Hubalek_Sgarra_2006}), we deduce that
\begin{align*}
\mathcal{H}(\P_*|\P_\sss)
 &= \E_*\big[ \eta_* \widetilde{Y}_T
    - \widetilde{\kappa}_1^\sss(\eta_*)T \big]
  = - \widetilde{\kappa}_1^\sss(\eta_*)T,
\end{align*}
which in turn implies the desired result.
\end{proof}

Therefore, using \eqref{specs_ut_1}--\eqref{specs_ut_2} and Proposition
\ref{specs_fin_market}, the investors' optimization problem can be written as
\begin{align}\label{eq:investors_problem_kappa_form}
\Pi^s
&= \sup_{\theta\in\Theta,\, h^s\in\R} \left\{ -\frac{1}{\gamma_s}
    \log \E_\sss \Big[ \exp \Big( -\gamma_s G_T(\theta) \Big) \Big]
    + C_1(h^s,\alpha,F) - C_2(h^s) \right\} \nonumber \\
&= \sup_{h^s\in\R} \left\{ -\frac{T}{\gamma_s} \left(
        \widetilde{\kappa}_1^\mathpzc{s}(\eta_*)
       +\kappa_2\left(-\frac{\gamma_sh^s}{m}\right)\right)
    + h^s\left(\phi_0(\pi_T)-\frac{\alpha(1-\varepsilon)}{m}-F\right)\right\}.
\end{align}
In other words, recalling \eqref{eq:specs_prob-1} and \eqref{r2}, the
investors' optimization problem has the representation
\begin{equation}\label{eq:investors_problem_propo}
\Pi^s = \sup_{h^s\in\R} \big\{u_s(\alpha,h^s)-h^sF\big\}
      = -u_s^*(\alpha,F),
\end{equation}
where the function $u_s(\alpha,h^s)$ admits the explicit expression
\begin{align}\label{eq:us-Levy}
u_s(\alpha,h^s) = \left\{
\begin{array}{ll}
 - \frac{T}{\gamma_s} \left(
        \widetilde{\kappa}_1^\mathpzc{s}(\eta_*)
       +\kappa_2\left(-\frac{\gamma_sh^s}{m}\right)\right)
    + h^s\left(\phi_0(\pi_T)-\frac{\alpha(1-\varepsilon)}{m}\right),
& \hbox{if $-\frac{\gamma_sh^s}{m}u_2\in\UUU_Z$,} \\
 - \infty, & \hbox{otherwise.} \\
\end{array}%
\right.
\end{align}

\begin{remark}\label{rem:usc}
Using the upper semicontinuity and the smoothness of the cumulant generating 
function together with the inverse function theorem, it follows from the 
explicit expression \eqref{eq:us-Levy} that the function $h^s\mapsto 
u_s(\alpha,h^s)$ is upper semicontinuous. Thus, condition \ref{cond:COE} is 
automatically satisfied in the current setting (provided that \ref{cond:EM} and 
\ref{cond:FE} hold).
\end{remark}

\begin{remark}
Let us also discuss for which \lev \procs conditions \ref{cond:NALevy} and 
\ref{cond:FE} are satisfied. \ref{cond:NALevy} is rather mild since it requires 
the existence of an equivalent martingale measure (EMM) with finite entropy 
under which the random variable $X=\scal{u_2}{Z_T}$ has finite first moment. 
Explicit constructions of EMMs for \lev processes are studied in 
\citet{EberleinJacod97} and in \citet{Cherny_Shiryaev_2002}. \ref{cond:FE} is 
also standard in the literature related to exponential utility maximization and 
entropic hedging. \citet{Hubalek_Sgarra_2006} provide explicit parameter regimes 
for this condition to be satisfied, which fit well with empirical data.
\end{remark}

\begin{remark}\label{rem:NA condition}
Condition \ref{cond:NALevy} implies that the investors' indifference price for 
the commodity is bounded from above. More precisely, the (buyer's) indifference 
price for a random payoff $C_T$ is defined as the solution $\mathrm{p}(C_T)$ of 
the equation
$$\sup_{G\in\mathcal{G}}
\UU_s\big(G-\mathrm{p}(C_T)+C_T\big)=\sup_{G\in\mathcal{G}}\UU_s(G).$$
According to \citet[\S5.2]{Delbaen_etal_2002} or 
\citet[\S4]{Fujiwara_Miyahara_2003} (see also \citet{LaeSta14}), the 
indifference price of an agent with exponential utility and risk aversion equal 
to $\gamma_s$ admits the following representation
\begin{equation}\label{eq:duality_of_indifference}
\mathrm{p}(C_T)
 = \underset{\QQ\in\MM_f}{\inf} \Big\{
    \EE_{\QQ}[C_T]+\frac{1}{\gamma_s}\HH(\QQ|\PP) \Big\}
  -\frac{1}{\gamma_s}\HH(\QQ_*|\PP),
\end{equation}
where $\QQ_*$ is the martingale measure minimizing the entropy with respect to 
$\PP$. With this at hand, \eqref{eq:fin_price} yields the assertion.
\end{remark}

We conclude this subsection with a statement analogous to Proposition 
\ref{pro:producers_problem} for the investors' side, thereby strengthening the 
results of Proposition \ref{pro:investors_problem}. More specifically, we show 
that the investors' optimization problem admits a maximizer for every 
$\alpha\in[0,\pi_0]$ and every forward price in the no-arbitrage interval, which 
is defined by
\begin{align*}
\NA := \bigg(\underset{\QQ\in\MM_f}{\inf}
         \EE_{\QQ}[P_T],\underset{\QQ\in\MM_f}{\sup}\EE_{\QQ}[P_T]\bigg).
\end{align*}

\begin{proposition}\label{pro:speculator's_problem}
Assume that conditions \ref{cond:EM}, \ref{cond:FE} and \ref{cond:NALevy}
hold. Then, for every $F\in\NA$ and $\alpha\in[0,\pi_0]$ there exists a
maximizer $\hat{h}^s\in\R$ for the producers' problem $\Pi^s$ such that
$-\frac{\gamma_s}{m}\hat{h}^su_2\in\UUU_Z$.
\end{proposition}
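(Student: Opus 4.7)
The plan is to show that the function $f(h^s):=u_s(\alpha,h^s)-h^s F$ is concave, upper semicontinuous, and coercive on $\mathbb{R}$; then existence of a maximizer follows from standard arguments of convex analysis. The hypothesis $F\in\NA$ will be the essential ingredient driving coercivity. The effective domain $A:=\{h^s\in\R: -\tfrac{\gamma_s h^s}{m}u_2\in\UUU_Z\}$ is a convex set containing $0$ (by \ref{cond:EM} and the convexity of $\UUU_Z$), and $u_s(\alpha,\cdot)$ is finite on $A$ and $-\infty$ elsewhere.

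Concavity of $h^s\mapsto u_s(\alpha,h^s)$ is inherited from the primal formulation $u_s(\alpha,h^s)=\sup_{\theta\in\Theta}\UU_s\!\bigl(h^s P_T+G_T(\theta)\bigr)$, which is a supremum of jointly concave functions. Upper semicontinuity is exactly the content of Remark \ref{rem:usc}. The key step is the dual representation
\begin{align*}
u_s(\alpha,h^s)
 = \inf_{\QQ\in\MM_f}\Bigl\{ h^s \EE_{\QQ}[P_T]
   + \tfrac{1}{\gamma_s}\HH(\QQ|\PP)\Bigr\},
\end{align*}
which, following the line of reasoning already used in the proof of Proposition \ref{specs_fin_market} (applying \cite{Fujiwara_2006} and the entropic utility--entropy duality of \cite{Delbaen_etal_2002}), can be derived once one observes via \ref{cond:NALevy} that $P_T\in L^1(\QQ)$ for at least one $\QQ\in\MM_f$. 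Coercivity then follows from $F\in\NA$: choose $\QQ_+,\QQ_-\in\MM_f\cap\mathcal{Q}_X$ with $\EE_{\QQ_+}[P_T]>F>\EE_{\QQ_-}[P_T]$, which is possible by the very definition of $\NA$. Plugging $\QQ_-$ (resp.\ $\QQ_+$) into the dual representation yields, for $h^s\to+\infty$ (resp.\ $h^s\to-\infty$),
\begin{align*}
u_s(\alpha,h^s)-h^s F
 \le h^s\bigl(\EE_{\QQ_\mp}[P_T]-F\bigr)
   + \tfrac{1}{\gamma_s}\HH(\QQ_\mp|\PP) \;\longrightarrow\; -\infty.
\end{align*}

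Being concave, upper semicontinuous and coercive, $f$ attains its supremum at some $\hat{h}^s\in\R$. Since $f(0)$ is finite under \ref{cond:EM} and \ref{cond:FE}, the supremum value is finite, forcing $\hat{h}^s\in A$, i.e.\ $-\tfrac{\gamma_s}{m}\hat{h}^s u_2\in\UUU_Z$. The main obstacle is the rigorous derivation of the dual formula for the unbounded payoff $h^s P_T$ combined with the continuously traded gains $G_T(\theta)$: one must argue that the integrability of $P_T$ under a single measure in $\MM_f\cap\mathcal{Q}_X$ suffices to identify the conjugate of the investors' exponential indirect utility, which follows from the results in \cite{Delbaen_etal_2002} or \cite{Fujiwara_Miyahara_2003}. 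Once the dual representation is in place, the remaining steps are a short exercise in convex analysis.
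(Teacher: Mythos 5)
Your proof is correct and takes essentially the same route as the paper's: both establish concavity, upper semicontinuity (via Remark \ref{rem:usc}) and coercivity of $h^s\mapsto u_s(\alpha,h^s)-h^sF$, with coercivity driven by the entropic dual representation (your formula is exactly the paper's \eqref{eq: indifference} combined with \eqref{eq:duality_of_indifference}, the $\HH(\QQ_*|\PP)$ terms cancelling) together with $F\in\NA$, and both lean on \cite{Delbaen_etal_2002} and \cite{Fujiwara_Miyahara_2003} for the duality with the unbounded payoff $h^sP_T$. The only, harmless, difference is in the coercivity step: you bound the infimum from above by a single fixed measure $\QQ_\mp\in\MM_f\cap\mathcal{Q}_X$, whereas the paper factors out $h^s$ and passes to the limit of the rescaled infimum as $h^s\to\pm\infty$ --- your variant is marginally more direct.
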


\begin{proof}
By the definition of indifference valuation and the cash invariance property
of the utility functional $\UU_s$, we have that
\begin{equation}\label{eq: indifference}
u_s(\alpha,h^s) = \sup_{\theta\in\Theta} \UU_s \big(G_T(\theta)\big)
        + \mathrm{p}(h^s P_T).
\end{equation}
Building on the above representation, it suffices to show that
$\mathrm{p}(h^sP_T)-h^sF$ is concave, upper semicontinuous and coercive.
Concavity is readily implied by \eqref{eq:duality_of_indifference}, while upper
semicontinuity follows from the fact that $u_s$ is upper semicontinuous; cf.
Remark \ref{rem:usc}. As for coercivity, using again
\eqref{eq:duality_of_indifference} we get that for every $h^s>0$
\begin{eqnarray*}
  \mathrm{p}(h^s P_T) -h^sF &=& \underset{\QQ\in\MM_f}{\inf} \Big\{
    \EE_{\QQ}[h^sP_T]+\frac{1}{\gamma_s}\HH(\QQ|\PP) \Big\}
  -\frac{1}{\gamma_s}\HH(\QQ_*|\PP) -h^sF \\
   &=& h^s\left(\underset{\QQ\in\MM_f}{\inf} \Big\{
    \EE_{\QQ}[P_T]+\frac{1}{h^s\gamma_s}\HH(\QQ|\PP) \Big\}
  -\frac{1}{h^s\gamma_s}\HH(\QQ_*|\PP)-F\right).
\end{eqnarray*}
Moreover, it holds that
\[ \underset{\QQ\in\MM_f}{\inf} \Big\{
    \EE_{\QQ}[P_T]+\frac{1}{h^s\gamma_s}\HH(\QQ|\PP) \Big\}
  - \frac{1}{h^s\gamma_s}\HH(\QQ_*|\PP)
  \xrightarrow[h^s\rightarrow+\infty]{}
  \underset{\QQ\in\MM_f}{\inf}\{\EE_{\QQ}[P_T]\},\]
hence $\mathrm{p}(h^sP_T)-h^sF$ goes to $-\infty$ as $h^s\to+\infty$, for every
$F\in\NA$. The limit as $h^s\rightarrow-\infty$ follows by similar
argumentation and using the payoff $-P_T$ instead of $P_T$.
\end{proof}

\begin{remark}
Proposition \ref{pro:speculator's_problem} states that for every fixed pair of 
parameters $(\alpha,F)\in[0,\pi_0]\times\NA$, the individual problem of the 
investors admits a finite solution. This solution is unique if the indifference 
price $\mathrm{p}(h^s P_T)$ is strictly concave as a function of $h^s$. In view 
of representation \eqref{eq:duality_of_indifference}, strict concavity is 
guaranteed if \{$\E_{\QQ}[X]:\QQ\in\MM_f\}$ is not a singleton, meaning that the 
variate $X$ determining the consumers' demand is not a replicable payoff.
\end{remark}

\subsection{The equilibrium revisited}

Finally, we can further strengthen the result on the existence of an 
equilibrium in the current setting, by showing that the equilibrium forward 
price is unique and belongs to the no-arbitrage interval. 

\begin{proposition}\label{thm:levy}
Assume that conditions \ref{cond:EM}, \ref{cond:COE}, \ref{cond:USC}, 
\ref{cond:NALevy} and \eqref{cond:boundedF} hold. Then there exists an
equilibrium $(\hat\alpha,\hat{h},\hat{F})$, where $\hat{F}\in\NA$ is unique.
\end{proposition}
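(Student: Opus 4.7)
The strategy is to deduce the existence claim from Theorem \ref{thm:main} and then pin down the location of $\hat F$ in $\NA$ and its uniqueness by exploiting the explicit Lévy structure developed in this section.

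First, I would verify that the hypotheses of Theorem \ref{thm:main} are in force in the present setting. With $\mathcal{G}=\{G_T(\theta):\theta\in\Theta\}$, each admissible $\theta\in\Theta$ renders $G(\theta)$ a true $\QQ$-martingale for every $\QQ\in\MM_f$, so $\E_\QQ[G]=0$ for every $G\in\mathcal{G}$. Consequently $\MM_f\subseteq\MM_{\mathcal{G}}$, which shows that \ref{cond:NALevy} implies \ref{cond:NA}; the remaining conditions \ref{cond:EM}, \ref{cond:COE}, \ref{cond:USC} and \eqref{cond:boundedF} are assumed directly. Theorem \ref{thm:main} then produces an equilibrium triple $(\hat\alpha,\hat h,\hat F)$.

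Second, I would establish $\hat F\in\NA$ by contradiction. Combining the indifference-price identity \eqref{eq: indifference} with the dual representation \eqref{eq:duality_of_indifference}, the investor's objective $u_s(\hat\alpha,h^s)-h^s\hat F$ differs from $\mathrm{p}(h^s P_T)-h^s\hat F$ by the finite constant $\sup_{\theta\in\Theta}\UU_s(G_T(\theta))$, and the former behaves like $h^s\big(\inf_{\QQ\in\MM_f}\E_\QQ[P_T]-\hat F\big)+o(|h^s|)$ as $h^s\to+\infty$, with the symmetric expression involving $\sup_{\QQ}\E_\QQ[P_T]$ as $h^s\to-\infty$. Exactly as in the coercivity argument in the proof of Proposition \ref{pro:speculator's_problem}, this drives the objective to $+\infty$ whenever $\hat F$ lies outside the open interval $\NA$, contradicting the fact that $\hat h=-h^s(\hat\alpha)$ finitely realizes the supremum in $\Pi^s$.

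Third, for the uniqueness of $\hat F$ I would appeal to Remark \ref{rem:uniq}. The explicit formulas \eqref{eq:up-Levy} and \eqref{eq:us-Levy} show that the maps $h^p\mapsto u_p(\hat\alpha,h^p)$ and $h^s\mapsto u_s(\hat\alpha,h^s)$ are smooth in the interior of their effective domain, since the cumulant generating function of the Lévy process $Z_1$ is real analytic on $\UUU_Z^\circ$. It follows that the sup-convolution $u(\hat\alpha,\cdot)$ is differentiable at $0$, so that $\partial u(\hat\alpha,0)$ is a singleton. Strict concavity of $\alpha\mapsto u_p(\alpha,h^p)$, guaranteed by the quadratic term $-\alpha^2(1+R+(1-\varepsilon)^2)/m$ in \eqref{eq:QQQ} together with the convexity of $\kappa_2$ and the linearity of $\ell$ in $\alpha$, pins down $\hat\alpha$ uniquely; hence $\hat F$ is also unique.

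The main obstacle is the middle step: translating the abstract equilibrium supplied by Theorem \ref{thm:main} into one that respects the no-arbitrage bounds requires careful coercivity estimates that lean on the indifference-price representation and the specific Lévy structure, rather than on purely convex-analytic arguments as in Section \ref{sec:main}.
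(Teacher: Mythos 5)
Your proposal is correct and follows essentially the same route as the paper: existence from Theorem \ref{thm:main} (your observation that $\MM_f\subseteq\MM_{\mathcal G}$, so that \ref{cond:NALevy} implies \ref{cond:NA}, is left implicit there), the location $\hat F\in\NA$ via the indifference-price representations \eqref{eq: indifference} and \eqref{eq:duality_of_indifference} contradicting the market-clearing condition, and uniqueness via smoothness of the cumulant generating function together with Remark \ref{rem:uniq}. One small caution: at a boundary point such as $\hat F=\inf_{\QQ\in\MM_f}\EE_{\QQ}[P_T]$ the investor's objective need not tend to $+\infty$ (it can stay bounded), but the supremum is then not attained at any finite $h^s$ --- which is precisely the contradiction with finite $\hat h$ that you, like the paper's $\mathop{\rm argmax}=+\infty$ argument, actually invoke.
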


\begin{proof}
In view of Theorem \ref{thm:main}, we only need to show that $\hat{F}\in\NA$ 
and is unique. Assume, for instance, that $\hat{F}\leq 
\underset{\QQ\in\MM_f}{\inf}\EE_{\QQ}[P_T]$. Taking into account the proof of 
Theorem \ref{thm:main} as well as representations 
\eqref{eq:duality_of_indifference} and \eqref{eq: indifference}, we get that
\begin{eqnarray*}
\hat{h}^s
 &=& \underset{h^s\in\mathbb{R}}{\mathop{\rm argmax}}
      \left\{ u_s(h^s,\hat{\alpha})-h^s \hat{F}\right\}
  = \underset{h^s\in\mathbb{R}}{\mathop{\rm argmax}}
      \left\{ \mathrm{p}(h^s P_T(\hat{\alpha}))-h^s\hat{F}\right\} \\
 &=& \underset{h^s\in\mathbb{R}}{\mathop{\rm argmax}}
      \left\{\underset{\QQ\in\MM_f}{\inf}
    \Big\{h^s\left(\EE_{\QQ}[P_T(\hat{\alpha})]
    -\hat{F}\right)+\frac{1}{\gamma_s} \HH(\QQ|\PP) \Big\} \right\}
  =+\infty
\end{eqnarray*}
The last statement contradicts the fact that $\hat{h}^s+\hat{h}^p=0$ and 
$\hat{h}^p\in\mathbb{R}$. The uniqueness of $\hat F$ follows from the smoothness 
of the cumulant generating function and the inverse function theorem, together 
with Remark \ref{rem:uniq}. 
\end{proof}

\begin{remark}
Assumption \ref{cond:FE} guarantees that there exists an optimal trading
strategy for the investors and is necessary in deriving the explicit
expression \eqref{eq:us-Levy}. However, it is not a necessary condition for the
existence of an equilibrium.
\end{remark}

\section{Examples, numerical illustrations and discussion}
\label{sec:illu}

In this final section, we consider two specific models for the evolution of the 
financial market and the consumers' demand. The first one is driven by 
correlated Brownian motions and the second one incorporates dependent jumps in 
addition. In the first case, we derive explicit expressions for the optimal 
storage policy and the optimal forward volume, and then the equilibrium price 
follows by the market clearing condition 
\eqref{eq:static_equilibrium_condition}. In the second case, we derive 
semi-explicit expressions for the optimal storage policy and the optimal forward 
volume, and the equilibrium price is then computed numerically. Thereafter, we 
study the effect of the various parameters, in particular the risk aversion 
coefficients of both agents and the production levels, in the formation of spot 
and forward prices.

\subsection{A model driven by Brownian motion}\label{subsec:bm model}

In the first example, the dynamics of the variates $X$ and $Y$ determining the
consumers demand and the financial market are driven by correlated Brownian
motions. Specifically
\begin{align}\label{eq:model-BM}
Y_t = b_1t + \sigma_1 W^1_t
 \quad \text{ and } \quad
X_t = \sigma_2 W_t^2,
\end{align}
where $W^1,W^2$ are standard Brownian motions with correlation $\rho\in[-1,1]$.
Moreover, using \eqref{eq:fin_price}, the mean and variance of the spot price
are given by
\begin{align}\label{eq: Exp}
\E[P_T] =  \phi_0(\pi_T) - \frac{\alpha(1-\varepsilon)}{m}
 \quad \text{ and } \quad
\Var[P_T] = \frac{\sigma_2^2 T}{m^2}.
\end{align}
The ensuing result provides an explicit expression for the optimal inventory
policy and the optimal investment in the forward contract.

\begin{proposition}\label{prop:BM}
Assuming the model dynamics provided by \eqref{eq:model-BM}, the optimal
strategy $(\hat\alpha,\hat{h}^p)$ for the producers' problem is given by
\begin{equation}\label{eq:producers_optimal_brown}
\hat\alpha
  = \left(\frac{\ud_3\ud_5-2\ud_2\ud_4}{4\ud_1\ud_4-\ud_3^2}\vee
    0\right)\wedge\pi_0
 \quad \text{ and } \quad
\hat{h}^p = -\frac{\hat{\alpha}\ud_3 + \ud_5}{2\ud_4},
\end{equation}
while the optimal position $\hat{h}^s$ for the investors' problem equals
\begin{align}
\hat{h}^s = \frac{\E[P_T]-F}{\bar{\gamma}_s\Var[P_T]}
      - \frac{\lambda\rho\sqrt{T}}{\bar{\gamma}_s\sqrt{\Var[P_T]}}.
\end{align}
Here, the constants $\ud_1,\dots,\ud_5$ are provided by
\eqref{eq:constants_prod_prob} and $\bar{\gamma}_s=\gamma_s(1-\rho^2)$.
\end{proposition}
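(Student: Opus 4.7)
I would substitute the explicit Brownian dynamics into the generic expressions \eqref{eq:up-Levy} and \eqref{eq:us-Levy}, which reduces each agent's optimization to the maximization of a strictly concave quadratic, handled by first-order conditions (plus projection in the producers' case).

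For the producers, one has $X = \sigma_2 W^2_T$, hence $\kappa_X(v) = \tfrac{1}{2}\sigma_2^2 T v^2$, which is finite on all of $\R$; conditions \ref{cond:EM} and \ref{cond:COE} are automatic, and $\widetilde\UUU_X = \R^2$. Plugging this into \eqref{eq:up-Levy} together with \eqref{eq:QQQ} and \eqref{eq:ell} makes $u_p(\alpha, h^p)$ a strictly concave quadratic. I would first eliminate $h^p$ by the inner FOC $\partial_{h^p} u_p = 0$, which yields $\hat h^p = -(\hat\alpha \ud_3 + \ud_5)/(2 \ud_4)$. Substituting back produces a strictly concave quadratic in the single variable $\alpha$ whose unconstrained maximizer is the ratio displayed in \eqref{eq:producers_optimal_brown}; the projection $(\,\cdot\,\vee 0)\wedge\pi_0$ then enforces the box constraint $\alpha \in [0, \pi_0]$.

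For the investors, I would begin by computing $\widetilde\kappa_1^\sss(\eta_*)$ explicitly under the Brownian dynamics. Applying Lemma \ref{lem:triplet} with $\xi = -\gamma_s h^s u_2/m$ shows that under $\P_\sss$ the process $Y$ remains Brownian with unchanged variance $\sigma_1^2$ but with drift shifted by $-\rho \sigma_1 \sigma_2 \gamma_s h^s/m$, producing a $\P_\sss$-Sharpe ratio $\lambda^\sss = \lambda - \rho \sigma_2 \gamma_s h^s/m$ with $\lambda := (b_1 + \sigma_1^2/2)/\sigma_1$. The standard Esscher computation in the Brownian case then gives $-\tfrac{T}{\gamma_s}\widetilde\kappa_1^\sss(\eta_*) = T(\lambda^\sss)^2/(2\gamma_s)$, which together with $\kappa_2(-\gamma_s h^s/m) = \tfrac{1}{2}\sigma_2^2 \gamma_s^2 (h^s)^2/m^2$ turns $u_s(\alpha, h^s) - h^s F$ into a quadratic in $h^s$. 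Its first-order condition is linear and delivers the announced formula for $\hat h^s$, with the moments computed via \eqref{eq: Exp}.

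\textbf{Main obstacle.} The only delicate point is the bookkeeping on the investors' side: the quadratic coefficient of $h^s$ combines $-\tfrac{1}{2}\gamma_s \sigma_2^2 T/m^2$ coming from $\kappa_2$ with $+\tfrac{1}{2}\rho^2 \gamma_s \sigma_2^2 T/m^2$ coming from the $\rho^2(h^s)^2$ term in $(\lambda^\sss)^2$, producing the partial cancellation $\gamma_s(1-\rho^2) = \bar\gamma_s$ that yields the reduced effective risk aversion. The linear coefficient likewise combines $\E[P_T] - F$ with a hedging contribution $-T\lambda \rho \sigma_2/m$ coming from the cross term in $(\lambda^\sss)^2$; after dividing by the quadratic coefficient and using $\Var[P_T] = \sigma_2^2 T/m^2$, this produces exactly the two summands of the announced $\hat h^s$.
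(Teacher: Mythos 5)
Your proposal is correct and follows essentially the same route as the paper's proof: embed the Brownian model into the L\'evy framework of Section \ref{model-ctdm}, note that \ref{cond:EM}, \ref{cond:COE} and \ref{cond:FE} hold trivially (since $\UUU_Z=\R^2$ and $\nu\equiv 0$), and reduce both agents' problems to strictly concave quadratics whose first-order conditions give \eqref{eq:producers_optimal_brown} and the formula for $\hat{h}^s$ --- your Esscher computation reproduces the paper's $\eta^*$ and the minimal entropy $\tfrac{T}{2}\big(\lambda-\rho\sigma_2\gamma_sh^s/m\big)^2$, including the $\gamma_s(1-\rho^2)$ cancellation behind $\bar\gamma_s$. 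The only cosmetic difference is that you eliminate $h^p$ via the inner first-order condition before maximizing over $\alpha$, whereas the paper solves the two first-order conditions simultaneously and then projects; for this jointly concave quadratic the two are equivalent, and your nested version even justifies the projection $(\,\cdot\,\vee 0)\wedge\pi_0$ slightly more transparently.
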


The proof of the preceding Proposition is postponed for Appendix
\ref{app-proofs}.\smallskip

The equilibrium forward price $\hat{F}$ will be derived endogenously via the 
clearing condition \eqref{eq:static_equilibrium_condition}, where we should note 
that $\hat\alpha,\hat{h}^p$ and $\hat{h}^s$ all depend on $\hat{F}$. Thereafter, 
the equilibrium spot price of the commodity at the initial time is provided by
\begin{align}\label{eq:equil_spot_price}
P_0(\hat{F}) = \phi_0(\pi_0) + \frac{\hat{\alpha}(\hat{F})}{m}.
\end{align}
In this example, both the forward price and the optimal forward position are 
unique; this follows from Remark \ref{rem:uniq}, Proposition \ref{thm:levy} and 
the fact that $u_p(\hat\alpha,\cdot)$ and $u_s(\hat\alpha,\cdot)$ are strictly 
concave; see their explicit forms in \eqref{eq:producers_maximization_brown} and 
\eqref{eq:investors_maximization_brown}.

Figures \ref{fig:BM-1}, \ref{fig:BM-2} and \ref{fig:BM-3} exhibit how the 
storage amount, the forward volume, the spot price, the forward premium and the 
convenience yield at the equilibrium depend on the correlation between the 
consumers' demand and the financial market, as well as on the producers' and 
investors' risk aversion coefficients; see also the discussion in subsection 
\ref{discussion}.

\begin{remark}
Let us consider the case $\alpha^*=0$. Then, the optimal position for the
producers simplifies to
\begin{align}
\hat{h}^p(F) = \frac{\E[P_T]-F}{\gamma_p\Var[P_T]} - \pi_T
\end{align}
and the clearing condition \eqref{eq:static_equilibrium_condition} yields that
the equilibrium forward price is provided by
\begin{align}\label{eq:F_zero_alpha}
\hat{F} = \E[P_T]
 - \frac{\gamma_p\bar\gamma_s}{\gamma_p+\bar\gamma_s}\Var[P_T]
   \left( \frac{\lambda\rho\sqrt{T}}
      {\bar\gamma_s\sqrt{\Var[P_T]}} + \pi_T\right) .
\end{align}
\end{remark}

\begin{remark}\label{nf}
In case there does not exist a forward contract that the producers could use 
for hedging---hence, there are also no investors in the market---the producers' 
optimization problem takes the form
\begin{equation}\label{eq:producers_maximization_brown_NF}
\Pi^p_{\rm nf}
 = \underset{\alpha\in[0,\pi_0]}{\max}
   \big\{ \ud_1\alpha^2 + \ud_2\alpha + \ud_3^\prime \big\},
\end{equation}
where $\ud_1,\ud_2$ are given by \eqref{eq:constants_prod_prob}. Therefore,
the optimal storage strategy equals
\begin{align}
\hat{\alpha} = (\alpha^*\vee 0)\wedge\pi_0
    \quad \text{ with } \quad
\alpha^* = -\frac{\ud_2}{2\ud_1},
\end{align}
and the spot price of the commodity is
\begin{align*}
P_0(\hat{\alpha}) = \phi_0(\pi_0) + \frac{\hat{\alpha}}{m}.
\end{align*}
\end{remark}

\begin{figure}[h!]
 \centering
  \begin{minipage}{0.49\textwidth}
  \includegraphics[trim = 10mm 0mm 10mm 5mm, clip,width=7.6cm]{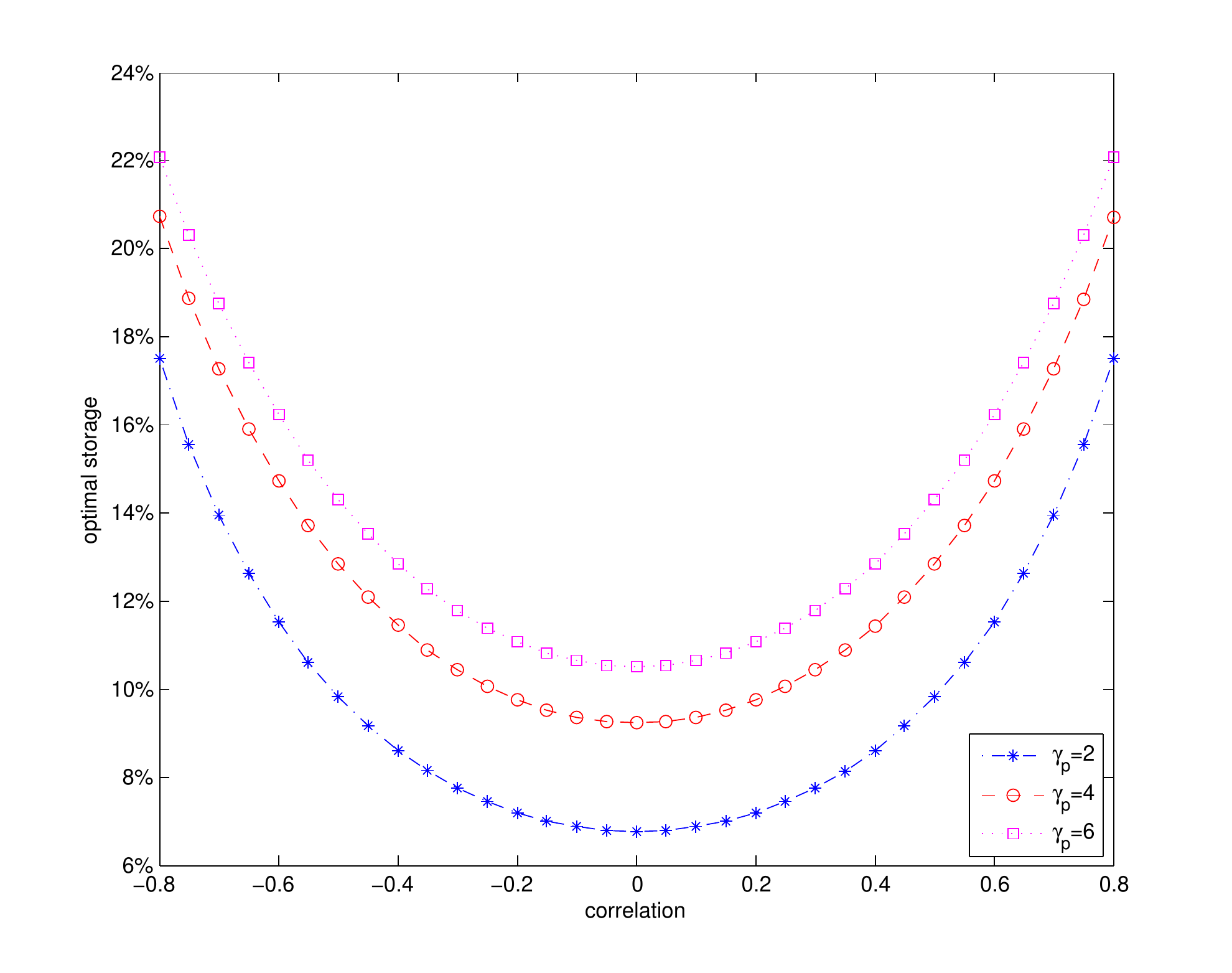}
  \end{minipage}
  \begin{minipage}{0.49\textwidth}
  \includegraphics[trim = 10mm 0mm 10mm 5mm, clip,width=7.6cm]{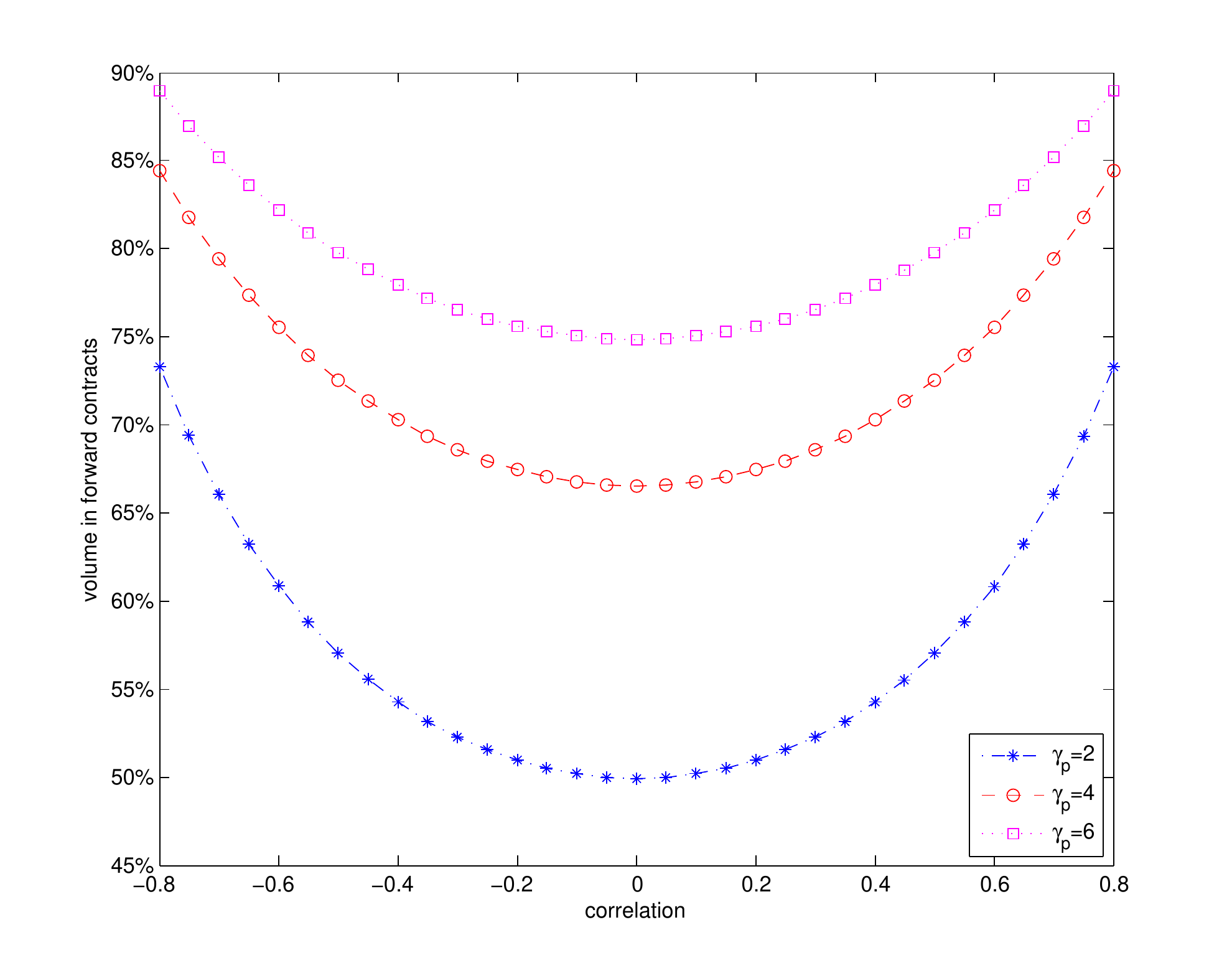}
  \end{minipage}
  \begin{minipage}{0.49\textwidth}
  \includegraphics[trim = 10mm 0mm 10mm 5mm, clip,width=7.6cm]{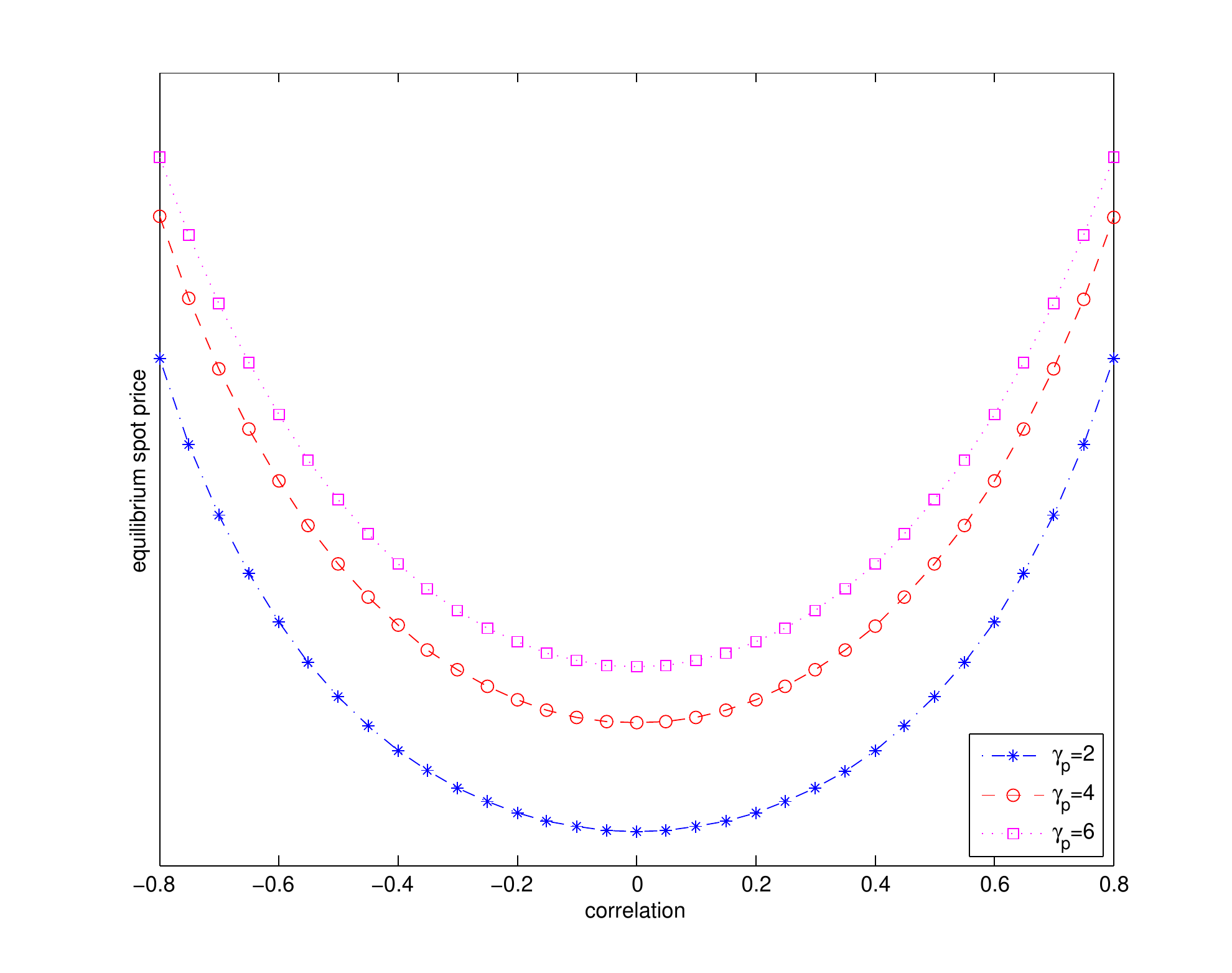}
  \end{minipage}
  \begin{minipage}{0.49\textwidth}
  \includegraphics[trim = 10mm 0mm 10mm 5mm, clip,width=7.6cm]{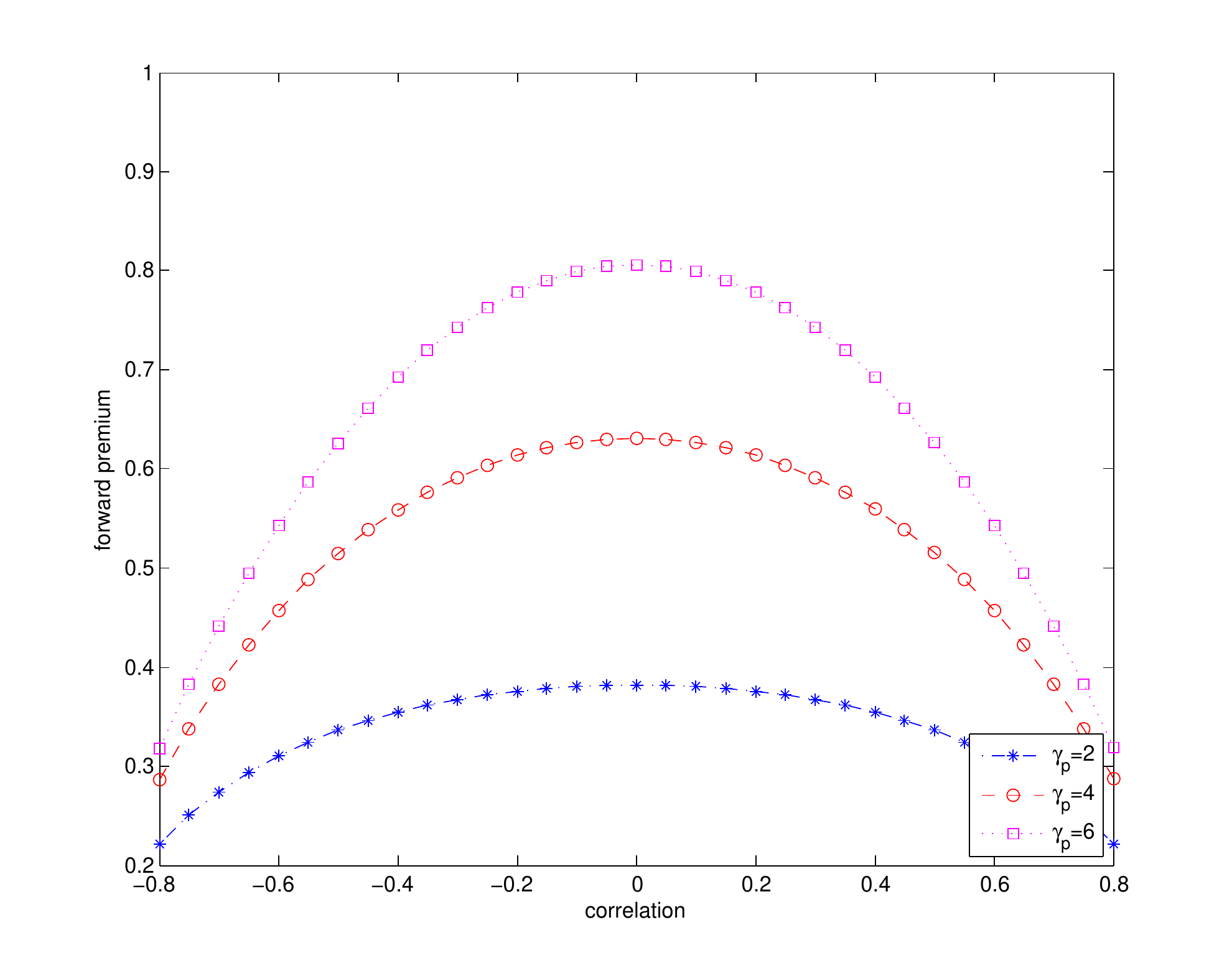}
  \end{minipage}
 \caption{Equilibrium storage amount (top-left), volume in forward contracts
      (top-right), spot price (bottom-left) and forward premium
      (bottom-right) as a function of correlation for different values of
      the producers' risk aversion $\gamma_p$.}
 \label{fig:BM-1}
\end{figure}

\begin{figure}[h!]
 \centering
  \begin{minipage}{0.49\textwidth}
  \includegraphics[trim = 10mm 0mm 10mm 5mm, clip,width=7.6cm]{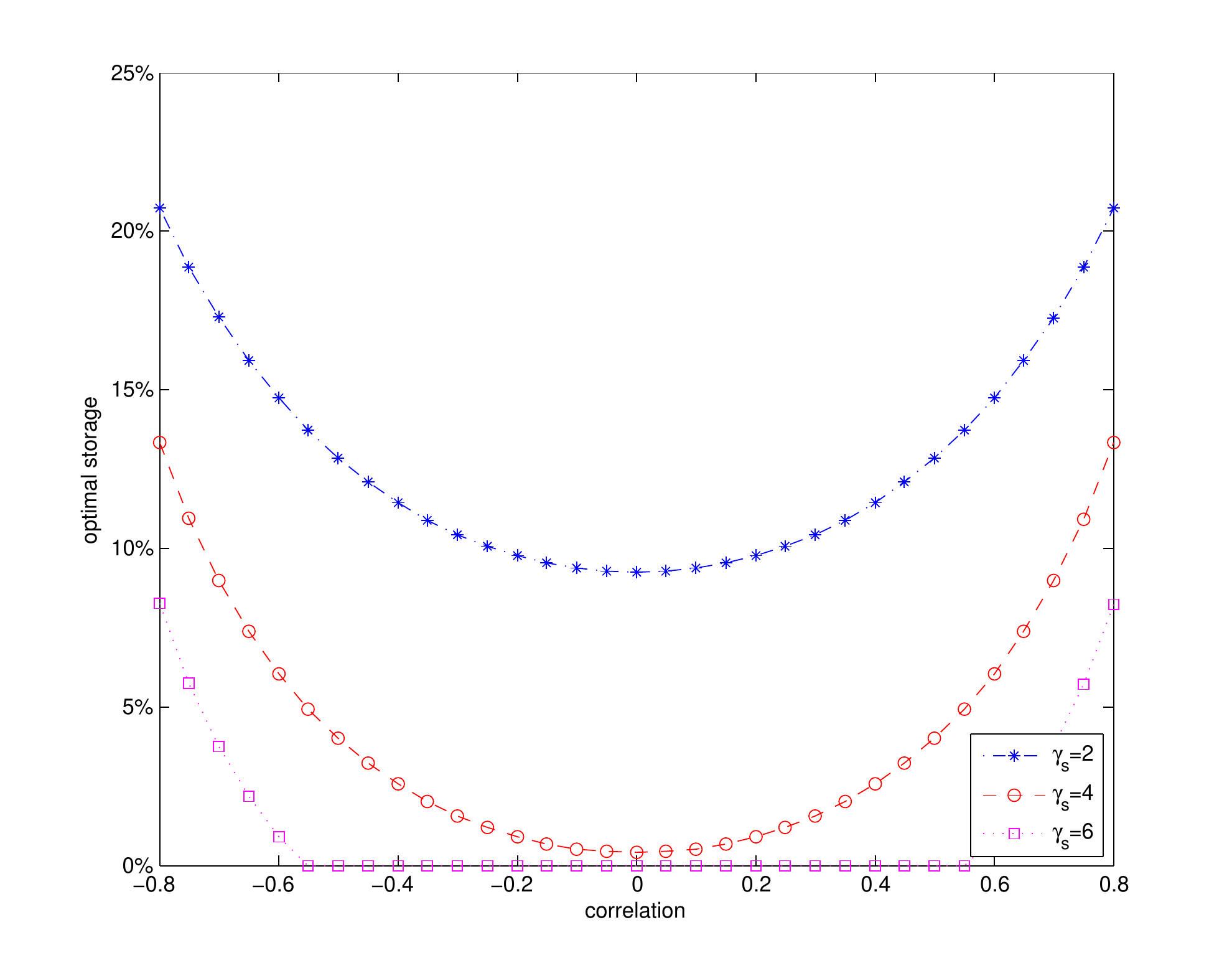}
  \end{minipage}
  \begin{minipage}{0.49\textwidth}
  \includegraphics[trim = 10mm 0mm 10mm 5mm, clip,width=7.6cm]{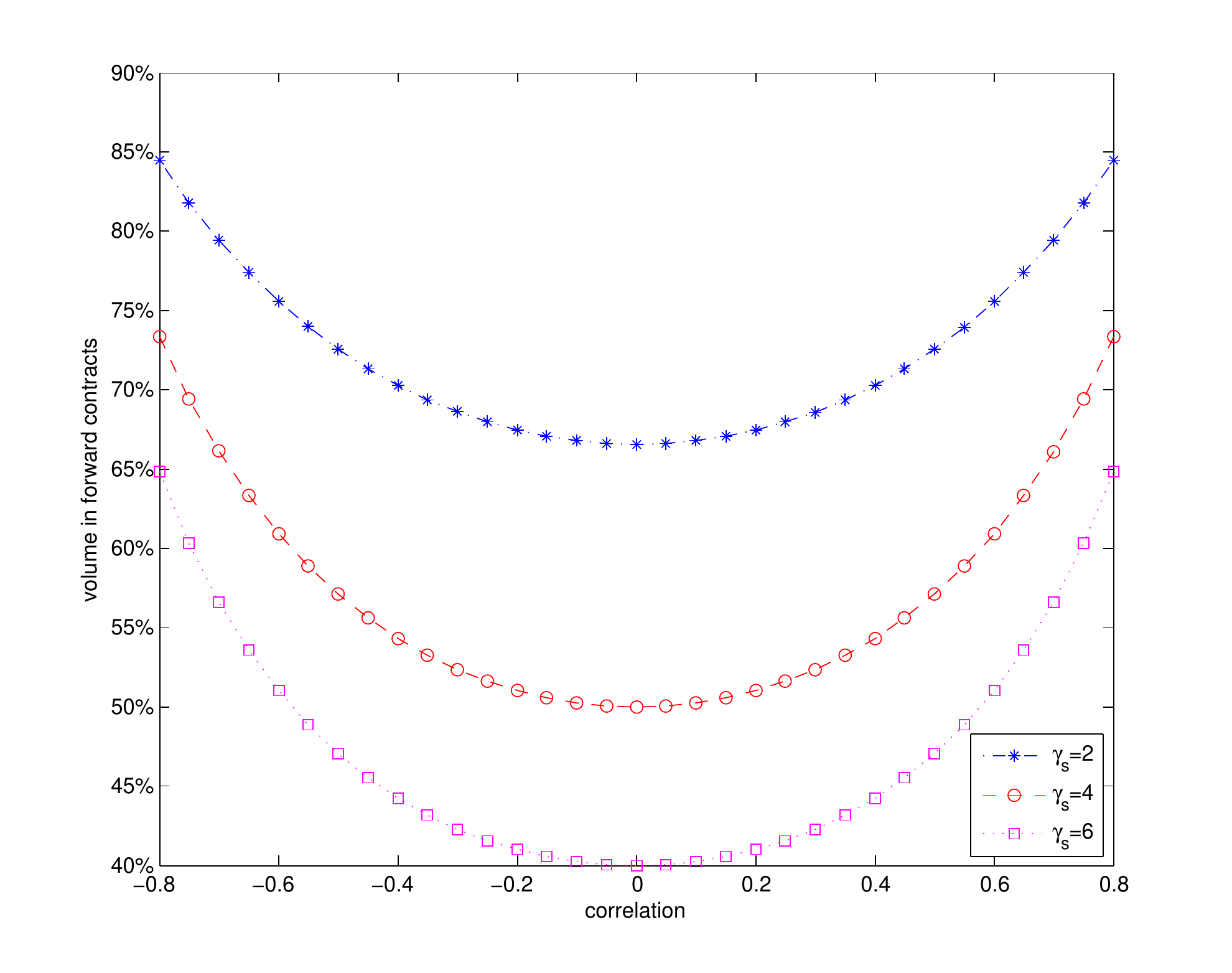}
  \end{minipage}
  \begin{minipage}{0.49\textwidth}
  \includegraphics[trim = 10mm 0mm 10mm 5mm, clip,width=7.6cm]{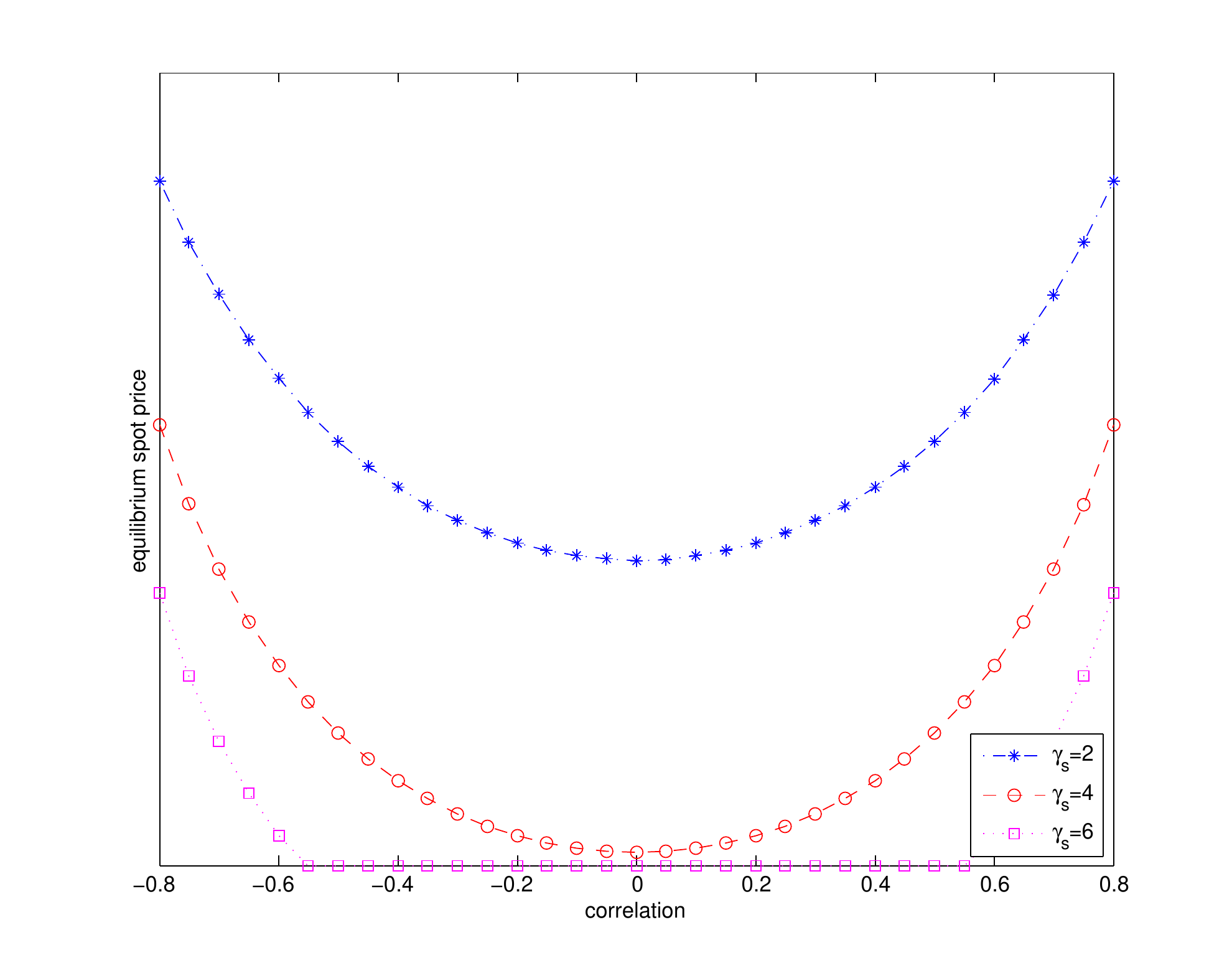}
  \end{minipage}
  \begin{minipage}{0.49\textwidth}
  \includegraphics[trim = 10mm 0mm 10mm 5mm, clip,width=7.6cm]{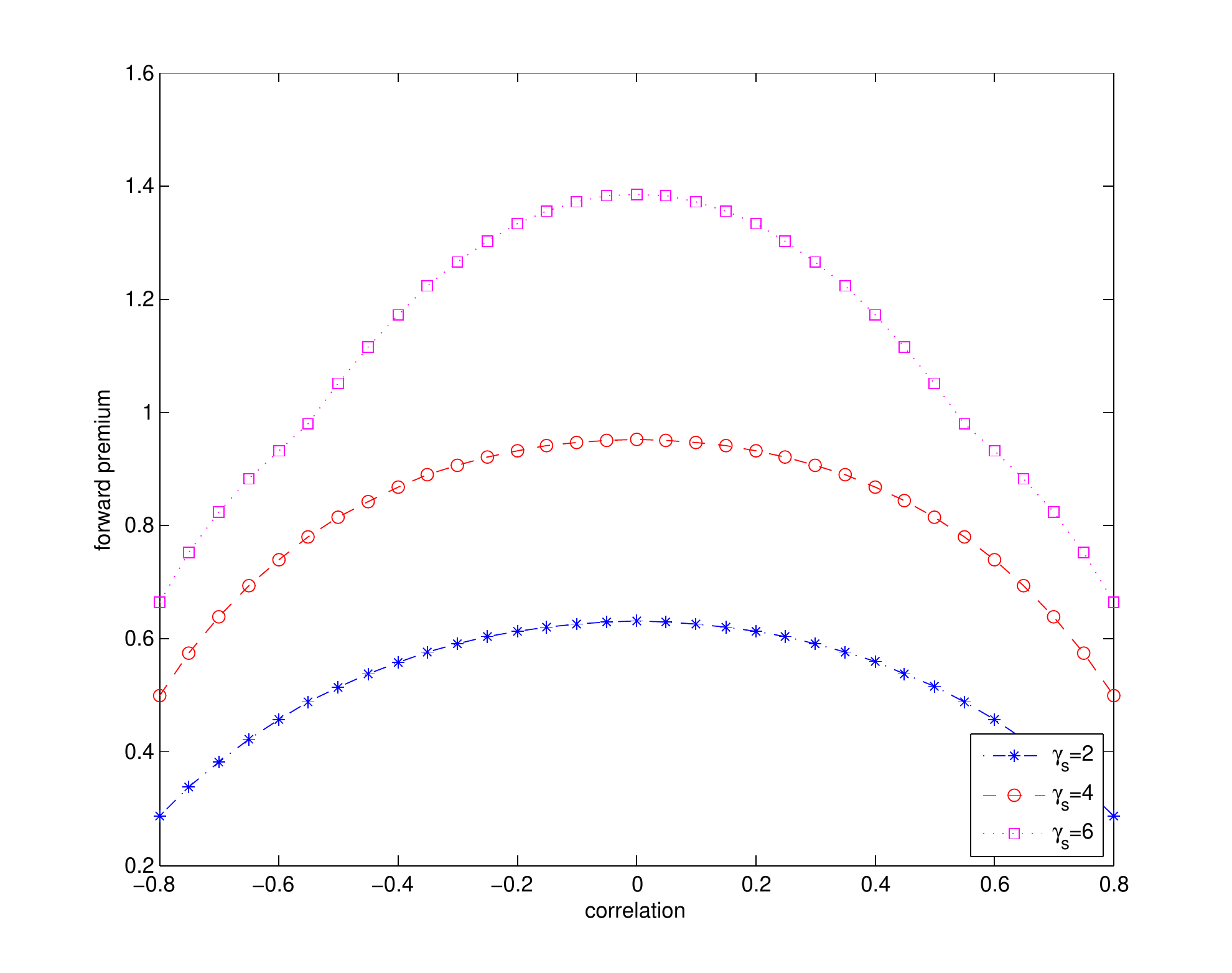}
  \end{minipage}
 \caption{Equilibrium storage amount (top-left), volume in forward contracts
      (top-right), spot price (bottom-left) and forward premium
      (bottom-right) as a function of correlation for different values of
      the investors' risk aversion $\gamma_s$.}
 \label{fig:BM-2}
\end{figure}

\begin{figure}[h!]
 \centering
  \begin{minipage}{0.49\textwidth}
  \includegraphics[trim = 10mm 0mm 10mm 5mm, clip,width=7.6cm]{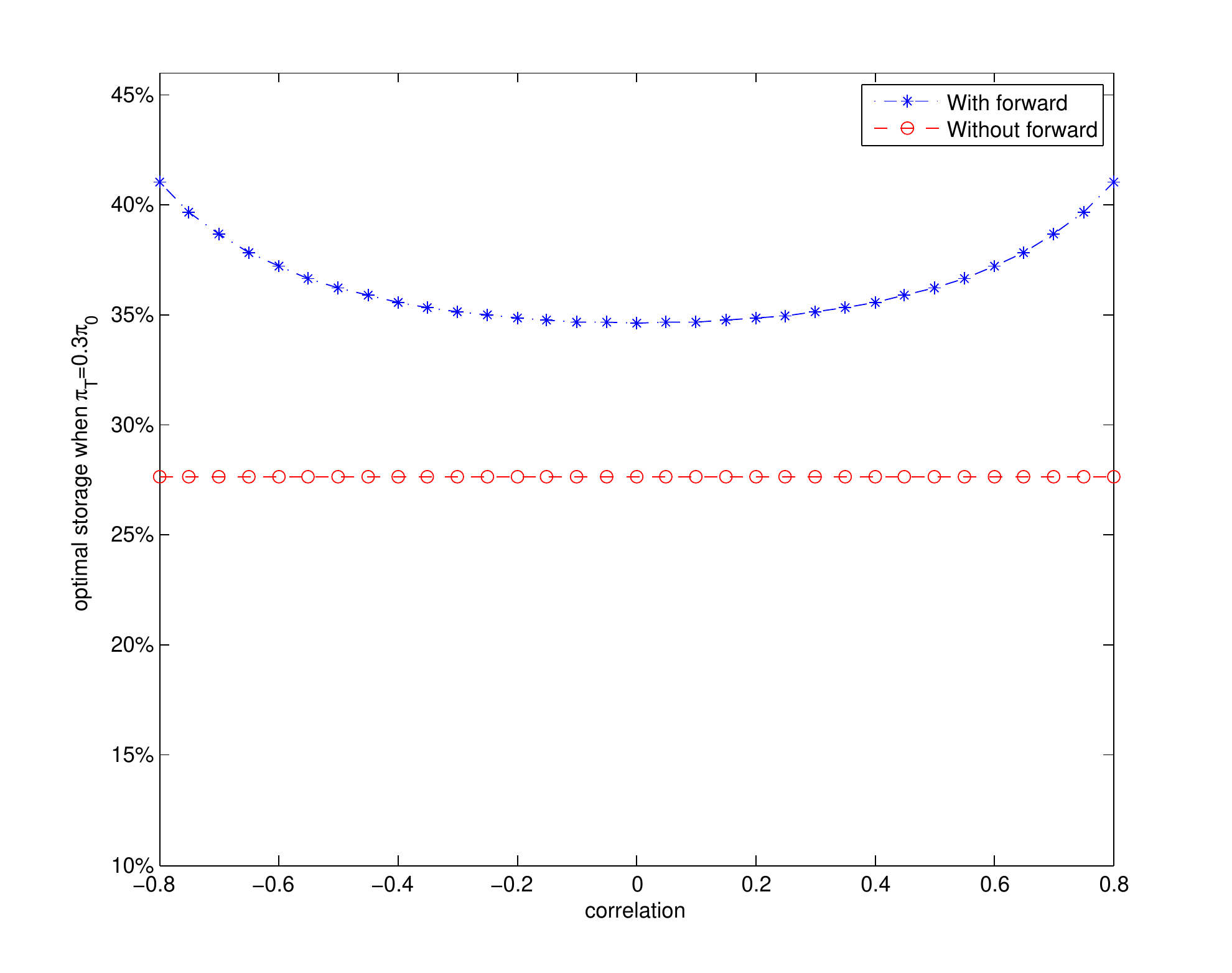}
  \end{minipage}
  \begin{minipage}{0.49\textwidth}
  \includegraphics[trim = 10mm 0mm 10mm 5mm, clip,width=7.6cm]{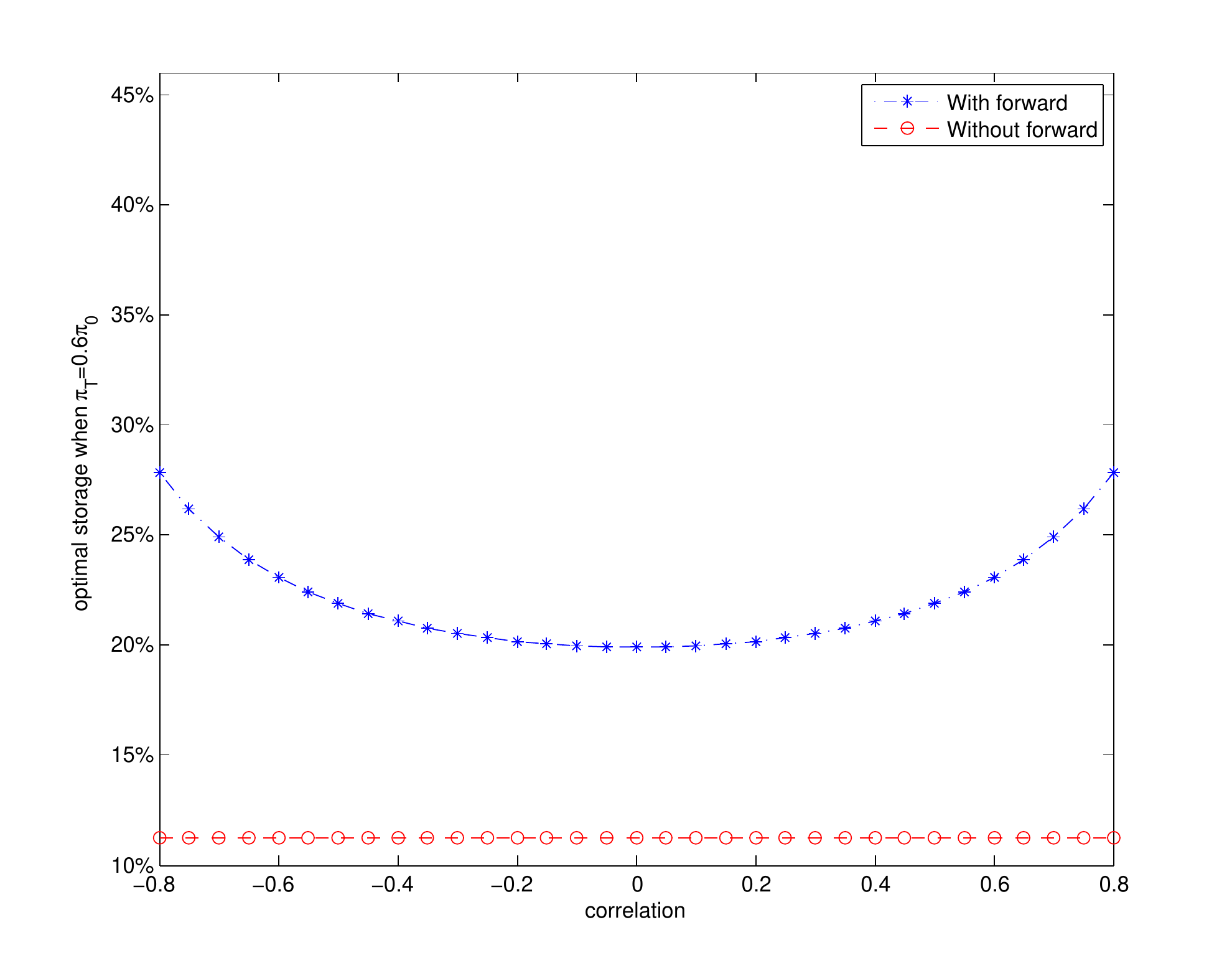}
  \end{minipage}
\caption{Equilibrium storage amount as a function of correlation for a market 
with and without forward contract. On the left $\pi_T=0.3\pi_0$ and on the right 
$\pi_T=0.6\pi_0$.}
 \label{fig:BM-4}
\end{figure}

\begin{figure}[h!]
 \centering
  \begin{minipage}{0.49\textwidth}
  \includegraphics[trim = 10mm 0mm 10mm 5mm, clip,width=7.6cm]{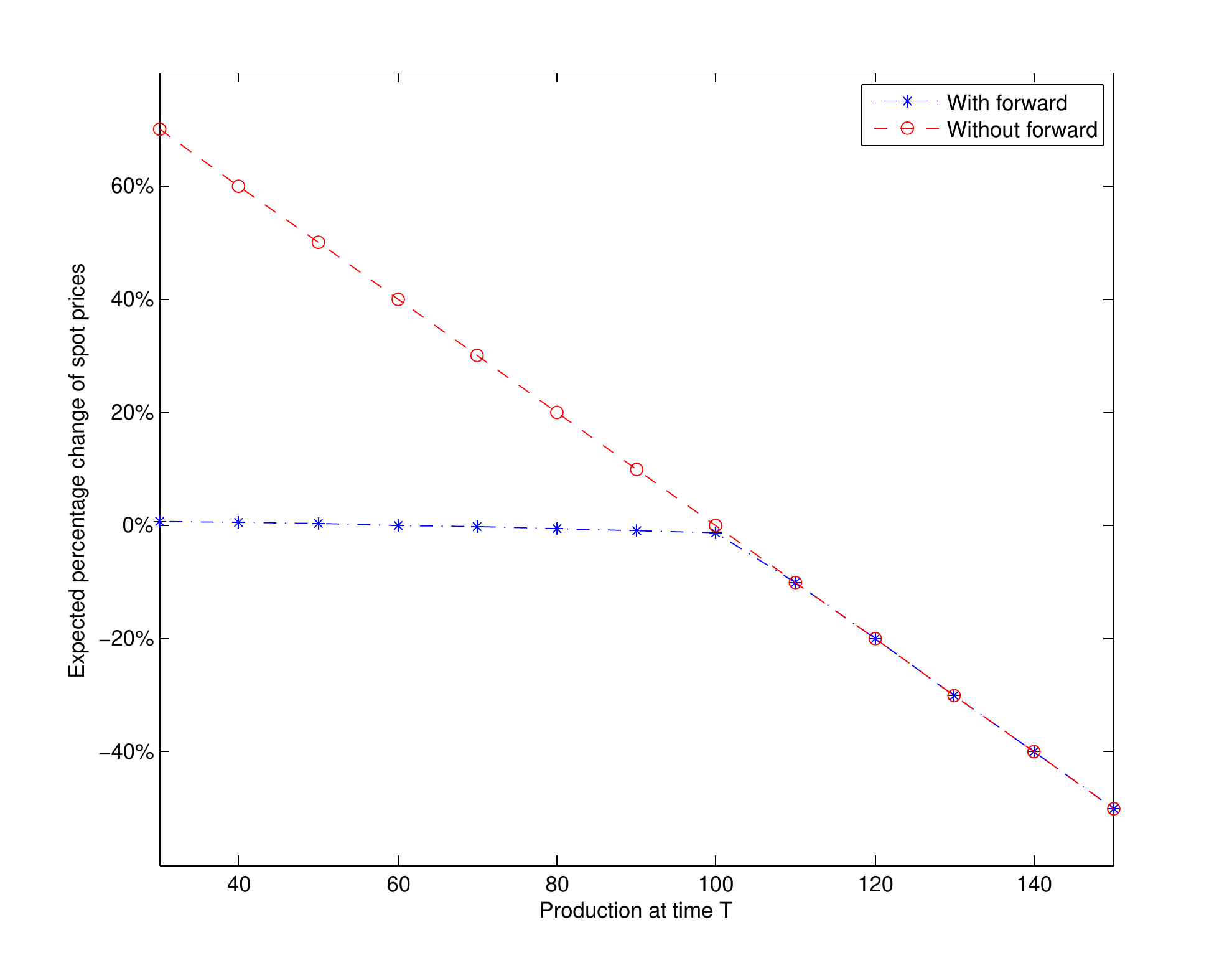}
  \end{minipage}
  \begin{minipage}{0.49\textwidth}
  \includegraphics[trim = 10mm 0mm 10mm 5mm, clip,width=7.6cm]{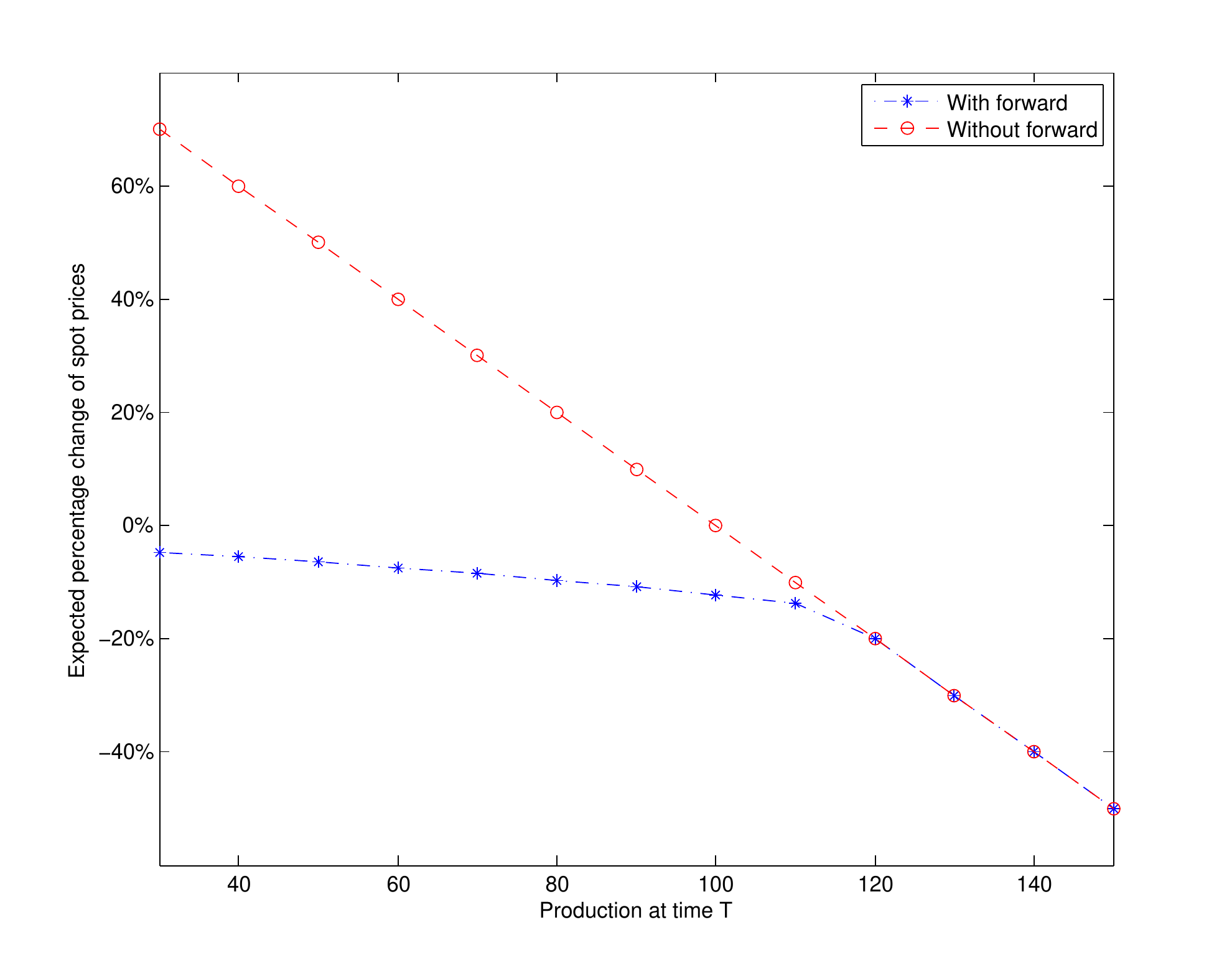}
  \end{minipage}
\caption{Expected percentage price changes 
$(\EE[\hat{P}_T]-\hat{P}_0)/\hat{P}_0$ as a function of the production $\pi_T$ 
(given that $\pi_0=100$) with and without forward contract. On the left, the 
correlation $\rho=0.2$ and on the right $\rho=0.7$.}
 \label{fig:BM-5}
\end{figure}

\subsection{A jump-diffusion model}

In the next example, the dynamics of the variates that determine the consumers'
demand and the financial market are driven by a \lev jump-diffusion process,
where the Brownian motion represents the `normal' market behavior while the
jumps appear simultaneously and represent some `shocks', e.g. news
announcements, that affect both the financial asset price and the demand for
the commodity. More precisely, the dynamics of the processes $Y$ and $X$ are
described by
\begin{align}\label{ljd-X}
Y_t = b_1t + \sigma_1 W_t^1 + \eta_1 N_t
 \quad \text{ and } \quad
X_t = b_2t + \sigma_2 W_t^2 + \eta_2 N_t,
\end{align}
where the drift term equals $b_i=\bar{b}_i-\lambda\eta_i$ with
$\bar{b}_i,\eta_i\in\R$ and $\sigma_i\in\R_+$, $i=1,2$. Furthermore, $W^1,W^2$
are standard Brownian motions with correlation $\rho$, while $N$ is a
univariate Poisson process with intensity $\lambda\in\R_+$. Hence, the constants
$\eta_1$ and $\eta_2$ represent the effect of a jump in the financial market and
the demand for the commodity, respectively.

Moreover, assuming $\bar{b}_2=0$ as in the previous example, the expectation of
$X_T$ equals zero and using \eqref{eq:fin_price} we get that
\begin{align}\label{ljd-exp}
\E[P_T] =  \phi_0(\pi_T) - \frac{\alpha(1-\varepsilon)}{m}
 \quad \text{ and } \quad
\Var[P_T] = \frac{\sigma_2^2 + \lambda\eta^2_2 }{m^2}T.
\end{align}
Observe that the presence of jumps, either negative or positive, increases the
variance of the spot price $P_T$ relative to the Brownian motion example. The
next result provides an expression for the optimal inventory policy and the
optimal investment in the forward contract.

\begin{proposition}\label{prop:LJD}
Assuming the model dynamics provided by \eqref{ljd-X}, the optimal strategy
$\hat\alpha,\hat{h}^p$ for the producers' problem is provided by $\hat\alpha =
(\alpha^*\vee0)\wedge\pi_0$ and $\hat{h}^p=h^{p,*}(\hat\alpha)$ where
$(\alpha^*,h^{p,*})$ solve the system of equations
\begin{align}\label{eq:jump_system_01}
\begin{cases}
 2\ud_1\alpha + \ud_2 + \ud_3h^p +
    \frac{\lambda\eta_2T(1-\varepsilon)}{m}\e^{-\gamma_p\eta_2\ell(\alpha,h^p)}
  &=0,\\
 \ud_3\alpha + 2\ud_4h^p + \ud_5 +
    \frac{\lambda\eta_2T}{m}\e^{-\gamma_p\eta_2\ell(\alpha,h^p)}
  &=0.
\end{cases}
\end{align}
Here $\ud_1,\dots,\ud_5$ are given by \eqref{eq:constants_prod_prob} by
replacing $\Var[P_T]$ with $\frac{\sigma_2^2T}{m^2}$. The optimal investment
for the investors' problem $\hat{h}^s$ is provided by the solution to the
equation 
\begin{align}\label{eq:jump_system_02}
\frac{\partial}{\partial h^s} \left\{ -\frac{T}{\gamma_s} \left[
    \widetilde{\kappa}^\sss_1(\eta_*)
  + \kappa_2\left(-\frac{\gamma_sh^s}{m}\right) \right] \right\} = F - \E[P_T],
\end{align}
where $\eta_*$ is given by \eqref{eq:jump_system_2}.
\end{proposition}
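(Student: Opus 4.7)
The plan is to obtain both optimal strategies from first-order conditions applied to the explicit representations \eqref{eq:up-Levy} and \eqref{eq:us-Levy}, with strict concavity ensuring that stationarity is both necessary and sufficient. The only preliminary is the cumulant generating function of the jump-diffusion variate $X_1=b_2+\sigma_2 W_1^2+\eta_2 N_1$ with $b_2=-\lambda\eta_2$; from the L\'evy--Khintchine formula \eqref{def:cgf},
\[
\kappa_2(v)=\tfrac{1}{2}\sigma_2^2 v^2+\lambda\bigl(\e^{v\eta_2}-1-v\eta_2\bigr),
\]
which is real-analytic and strictly convex on $\R$.

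For the producers, I would substitute this $\kappa_2$ into \eqref{eq:up-Levy}. Because $\ell(\alpha,h^p)$ is affine, $u_p$ decomposes into a jointly concave quadratic in $(\alpha,h^p)$ whose coefficients are precisely the $\ud_1,\dots,\ud_5$ from the Brownian case of Proposition \ref{prop:BM} after the prescribed substitution, plus a jump-induced term of the form $-\tfrac{\lambda T}{\gamma_p}\e^{-\gamma_p\eta_2\ell(\alpha,h^p)}$. Strict convexity of $\kappa_2$ together with the negative-definite quadratic part yields strict concavity of $u_p$, while Corollary \ref{corr:spec_fun_properties} supplies upper semicontinuity and coercivity. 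Hence the unconstrained maximizer $(\alpha^*,h^{p,*})\in\R^2$ exists, is unique, and is characterized by $\partial u_p/\partial\alpha=\partial u_p/\partial h^p=0$; carrying out the differentiation via the chain rule (with $\partial_\alpha\ell=(1-\varepsilon)/m$ and $\partial_{h^p}\ell=1/m$) produces exactly the system \eqref{eq:jump_system_01}. The box constraint is handled in the standard way: if $\alpha^*\in[0,\pi_0]$ the unconstrained optimum is admissible; otherwise concavity forces the constrained maximum to the nearer endpoint. In either case $\hat\alpha=(\alpha^*\vee 0)\wedge\pi_0$, and $\hat h^p=h^{p,*}(\hat\alpha)$ is recovered from the second equation of \eqref{eq:jump_system_01} with $\alpha$ frozen at $\hat\alpha$.

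For the investors, I would apply the first-order condition $\partial u_s/\partial h^s=F$ directly to the explicit expression \eqref{eq:us-Levy}. The delicate point is that $\widetilde\kappa_1^\sss$ depends on $h^s$ through the measure $\P_\sss$, and $\eta_*$ is itself an implicit function of $h^s$; however, by the envelope theorem applied to the inner maximization \eqref{def:theta*} defining $\eta_*$, the total $h^s$-derivative of $\widetilde\kappa_1^\sss(\eta_*(h^s))$ reduces to the direct partial derivative at $\eta_*$. Rearranging the stationarity condition with $\E[P_T]=\phi_0(\pi_T)-\alpha(1-\varepsilon)/m$ from \eqref{ljd-exp} produces exactly \eqref{eq:jump_system_02}. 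Existence and uniqueness of the maximizer $\hat h^s$ follow from Proposition \ref{pro:speculator's_problem} together with the non-replicability of $X=\scal{u_2}{Z_T}$ in the jump-diffusion market, which yields strict concavity of $h^s\mapsto\mathrm{p}(h^s P_T)$ and hence of $u_s(\alpha,\cdot)$.

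The hard part will be the producers' bookkeeping: tracking how the linear compensator $-\lambda\eta_2 v$ in $\kappa_2$ combines with $\QQQ(\alpha,0)$ and $-h^p\ell(\alpha,-\mu)$ so that the quadratic and linear coefficients collapse into the claimed $\ud_i$, leaving the exponential $\e^{-\gamma_p\eta_2\ell}$ as the only nonlinear residual in the FOCs. Beyond this algebraic bookkeeping, the argument is routine calculus on explicit cumulants.
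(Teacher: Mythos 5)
Your overall route coincides with the paper's appendix proof: embed the model as a two--dimensional \lev process, compute the explicit cumulants, substitute into \eqref{eq:up-Levy} to get a quadratic--plus--exponential producers' objective whose first-order conditions give \eqref{eq:jump_system_01} with the box constraint handled by clamping, and for the investors compute the $\P_\sss$-characteristics of $Y$, identify the minimal entropy measure via the Esscher parameter $\eta_*$ of \eqref{eq:jump_system_2}, and differentiate to get \eqref{eq:jump_system_02}. Your envelope-theorem remark is in fact a small improvement over the paper, which differentiates through $\eta_*(h^s)$ without comment: since $\eta_*$ is defined by $\partial_v\widetilde{\kappa}^\sss_1(v)|_{v=\eta_*}=0$, the implicit dependence indeed drops out, exactly as you argue.

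There is, however, a concrete problem at the step you yourself flagged as ``the hard part'': the linear compensator does \emph{not} collapse into the stated $\ud_i$. With $\kappa_2(v)=\tfrac{1}{2}\sigma_2^2v^2+\lambda(\e^{v\eta_2}-1-v\eta_2)$ one obtains
\begin{align*}
-\frac{T}{\gamma_p}\,\kappa_2\big(-\gamma_p\ell(\alpha,h^p)\big)
 = -\frac{\gamma_p\sigma_2^2T}{2}\,\ell(\alpha,h^p)^2
   -\frac{\lambda T}{\gamma_p}\Big(\e^{-\gamma_p\eta_2\ell(\alpha,h^p)}-1\Big)
   -\lambda\eta_2 T\,\ell(\alpha,h^p),
\end{align*}
and the last term, being linear in $(\alpha,h^p)$, contributes the constants $-\frac{\lambda\eta_2T(1-\varepsilon)}{m}$ and $-\frac{\lambda\eta_2T}{m}$ to the two first-order conditions. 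Hence the honest stationarity system is \eqref{eq:jump_system_01} with $\e^{-\gamma_p\eta_2\ell(\alpha,h^p)}$ replaced by $\e^{-\gamma_p\eta_2\ell(\alpha,h^p)}-1$, or equivalently with $\ud_2,\ud_5$ shifted by those constants --- the $\ud_i$ obtained from \eqref{eq:constants_prod_prob} by the prescribed substitution do not absorb them. To be fair, the paper's own proof commits the same slip: its function $f$ is assembled from \eqref{eq:up-Levy} omitting the term $-\lambda\eta_2T\ell$ (and its proof also writes $\sigma_2^2T/m$ where the statement has $\sigma_2^2T/m^2$; the latter is the correct replacement). So your derivation matches the paper's intent, but your assertion that the bookkeeping leaves ``the exponential as the only nonlinear residual in the FOCs'' with unchanged $\ud_2,\ud_5$ would fail if executed. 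One further shared overstatement: joint strict concavity of $u_p$ is not automatic --- the Hessian of the quadratic part has determinant $\frac{2\gamma_p\sigma_2^2T(1+R)}{m^3}-\frac{(1-\varepsilon)^2}{m^2}$, whose sign is parameter-dependent --- and the clamping formula $\hat\alpha=(\alpha^*\vee0)\wedge\pi_0$ implicitly needs concavity of the partially maximized map $\alpha\mapsto\sup_{h^p}u_p(\alpha,h^p)$; the paper glosses over this as well, whereas what is unconditionally true (and what justifies $\hat{h}^p=h^{p,*}(\hat\alpha)$) is strict concavity and coercivity in $h^p$ for each fixed $\alpha$.
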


The proof of the preceding Proposition is postponed for Appendix
\ref{app-proofs}.\smallskip

Similarly to the previous example, the unique equilibrium forward price 
$\hat{F}$ is derived endogenously via the clearing condition
\eqref{eq:static_equilibrium_condition}, by noting again that 
$\hat\alpha,\hat{h}^p$ and $\hat{h}^s$ depend on $\hat{F}$, and the equilibrium 
spot price of the commodity at the initial time is again given by 
\eqref{eq:equil_spot_price}. Therefore, in order to determine the equilibrium we 
need to solve equations \eqref{eq:jump_system_01} and \eqref{eq:jump_system_02}. 
To this end, we have used numerical techniques, and have subsequently examined 
the impact of jumps on equilibrium quantities; see Figure \ref{fig:LJD-1} and 
the discussion in subsection \ref{discussion}.

\begin{remark}
Using an independent Brownian motion instead of the Poisson process in 
\eqref{ljd-X}, we can get the same first and second moments for $P_T$ as the 
ones in \eqref{ljd-exp}. This will also result in higher forward premia. 
However, jump processes are more appropriate models for the shocks that occur in 
random times and, in addition, jumps (in contrast to another Brownian motion) 
allow for asymmetries in the distributions, like fat tails and skewness. See 
also the discussion in the introduction of Section 4.
\end{remark}

\begin{figure}[h!]\label{fig:eta2}
 \centering
  \begin{minipage}{0.49\textwidth}
  \includegraphics[trim = 10mm 0mm 10mm 5mm, clip,width=7.6cm]{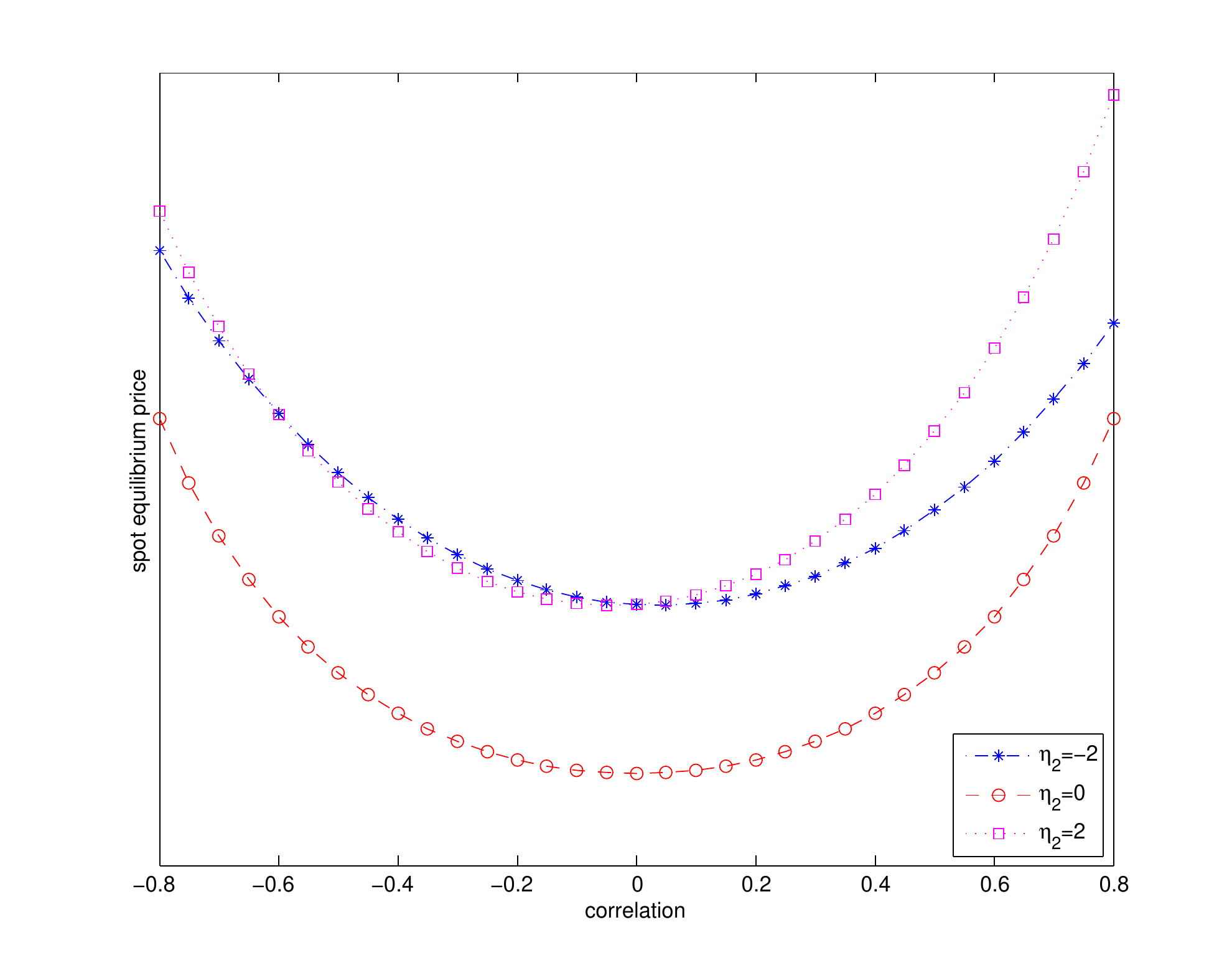}
  \end{minipage}
  \begin{minipage}{0.49\textwidth}
  \includegraphics[trim = 10mm 0mm 10mm 5mm, clip,width=7.6cm]{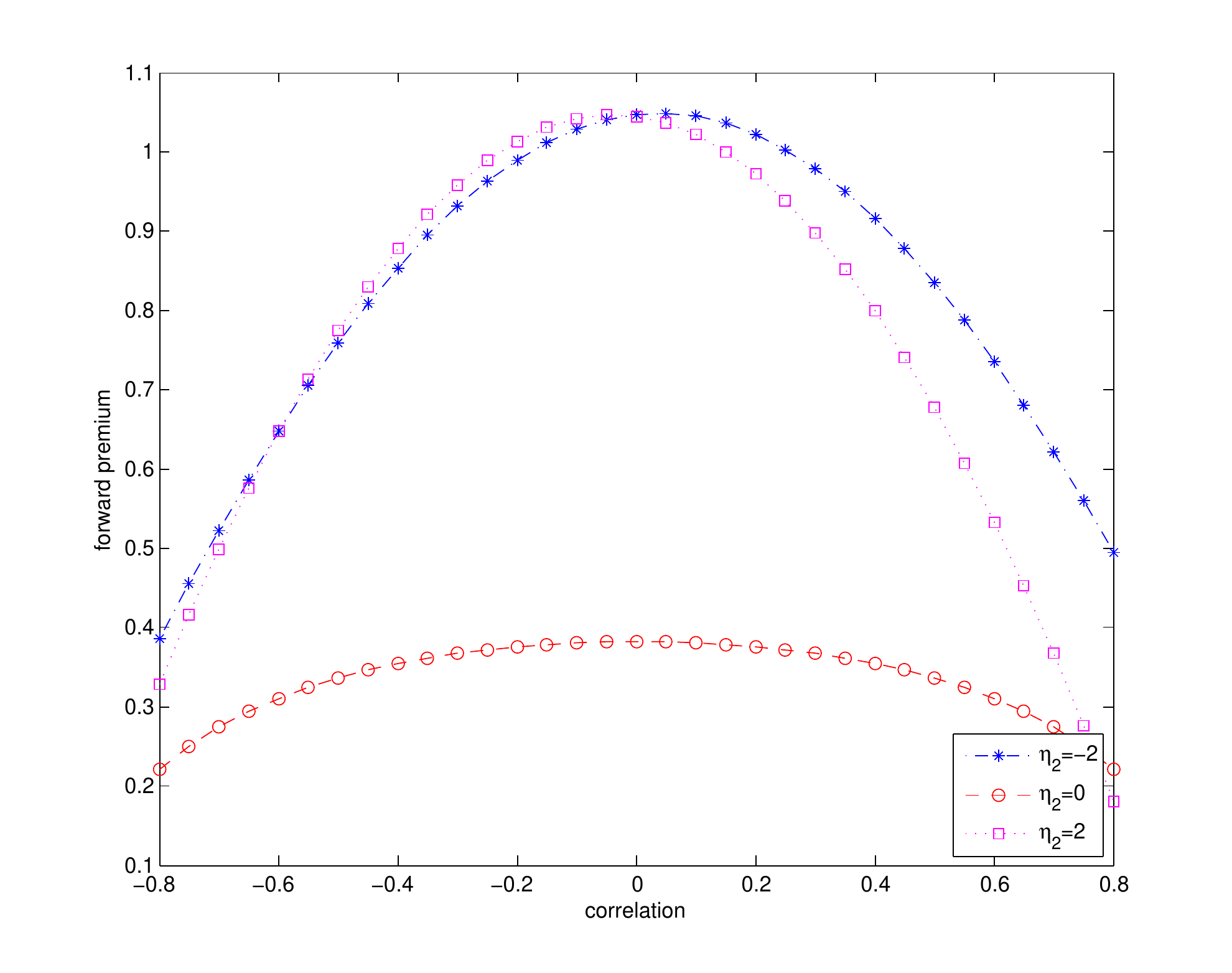}
  \end{minipage}
 \caption{Equilibrium spot price (left) and forward premium (right) as a
      function of correlation for different values of the demand shock
      effect $\eta_2$ (in this example $\eta_1=0$).}
 \label{fig:LJD-1}
\end{figure}

\subsection{Discussion of the results}
\label{discussion}

\subsubsection*{Producers' risk aversion and spot/forward prices}

We can use the results above to create several figures that illustrate the 
effect of the model parameters on the equilibrium quantities. We first examine 
the producers' side. The quantities that the producers have to consider are 
provided by $\underline{w}(\alpha,h^p)$ in \eqref{eq:producers_position}. We may 
split the terms into deterministic and stochastic ones. The deterministic part 
consists of the spot revenues from selling $\pi_0-\alpha$ units of the commodity 
at the spot price $P_0$, the expected future revenues from selling 
$\pi_T+\alpha(1-\varepsilon)$ units at the price $\EE[P_T]$, and the expected 
payoff of the short position $h^p$ in forward contracts. The stochastic term 
stems from the randomness of the future price $P_T$ and equals 
$[\alpha(1-\varepsilon)+h^p+\pi_T]X/m$. One can readily see that the 
deterministic term is decreasing with respect to $\alpha$. However, the risk in 
the stochastic term is also reduced for decreasing storage amounts. Assuming 
that $\E[X]=0$, this risk is minimized when the quantity 
$\alpha(1-\varepsilon)+h^p+\pi_T$ vanishes, that is, when all the future sales 
are hedged\footnote{Note that if the expectation of the random term $X$ is 
large, then producers are encouraged to increase the storage and supply more 
units at the terminal time. This speculative move explains how the storage may 
amplify the effects of a positive shock in demand which increases the spot 
price.}. Hence, a large amount of the commodity in storage implies also a large 
position to be hedged and vice versa (all else equal).

Considering only the deterministic term, and assuming that $\mu$ is 
sufficiently large, producers have motive to store their production only if 
$\pi_T$ is relatively smaller than $\pi_0$ (recall the discussion in Remark 
\ref{rem: storage}). In any case, storing part of their production now increases 
the spot price of the commodity. In addition, because producers are risk averse, 
in order to hedge their future risk exposure, they are willing to share some of 
their future revenues by taking a short position in the forward contract. 
Naturally, the higher the risk aversion the larger the short position in the 
forward contract (see the top-right of Figure \ref{fig:BM-1}) and the higher the 
forward premium paid to the investors (see the bottom-right of Figure 
\ref{fig:BM-1}). Moreover, a larger position in forward contracts implies an 
increasing tendency for storage, thus higher risk aversion leads to increased 
storage amounts (see the top-left of Figure\ref{fig:BM-1}). Summarizing, even 
when the production levels at time 0 and $T$ are close, producers with higher 
risk aversion tend to store more of their production when they can hedge the 
risk of future sales, a result which is consistent with the theory of storage, 
and this strategy increases the spot price of the commodity (see the bottom-left 
of Figure \ref{fig:BM-1}).

This result is further supported by the model without a forward contract in the 
market, see Remark \ref{nf}. There, we observe that the only motive for the 
producers to store the commodity stems from the possible uneven productions 
(i.e. the difference between $\pi_0$ and $\pi_T$). This motive to store is 
increased when partial hedging is possible through the trading in forward 
contracts. In fact, as illustrated in Figure \ref{fig:BM-4}, the optimal storage 
is always higher in the model with forward contract, for every level of uneven 
productions, while for $\pi_T$ close to or higher than $\pi_0$, the optimal 
storage without forward contract is zero. Thus, spot prices in the model without 
forward contract are always lower compared to the model with forward. However, 
higher storage implies that the future expected spot price decreases (see for 
instance relation \eqref{eq: Exp}), assuming that there is no rolling of the 
position in the forward contracts. Hence, while forward contracts tend to 
increase the spot commodity price, they also tend to decrease the future spot 
price. Therefore, the presence of forward contracts in the commodity market 
stabilizes prices when the production levels are uneven. This is apparent in 
Figure \ref{fig:BM-5}, where the expected price changes 
$(\EE[\hat{P}_T]-\hat{P}_0)/\hat{P}_0$ are illustrated for different values of 
$\pi_T$. In this example, we note that when there is scarcity of the commodity 
at time $T$, forward contracts serve to stabilize commodity spot prices. On the 
contrary, when the production at initial time is lower than that at terminal 
time, then the expected price difference remains the same with and without the 
forward contract.

Let us also discuss the effect of jumps in the equilibrium quantities. Figure 
\ref{fig:LJD-1} illustrates the effect of a possible side shock in the 
consumers' demand stemming from a jump. This jump not only increases the risk of 
the future price but it is also unhedgeable, since it is independent from the 
evolution of the stock market (we have assumed $\eta_1=0$). Therefore, the 
forward premium paid to the investors is higher, irrespective of the sign of the 
jump (see the right part of Figure \ref{fig:LJD-1}). Moreover, when the future 
price is riskier, recalling the discussion above, we conclude that the more risk 
averse the producers are the more they increase the amount they store and hence 
they also increase the spot price of the commodity (see the left part of Figure 
\ref{fig:LJD-1}). In addition, note that the sign of the jump makes little 
difference in the equilibrium quantities (if the expectation of the future 
demand shock is kept equal to zero).

The effect of the producers' risk aversion on market equilibrium can be used to
examine how the number of producers affects the equilibrium commodity prices.
In the present framework of CARA preferences, the parameter $1/\gamma_p$
measures the producers' \textit{aggregate} risk tolerance. Therefore, if the
number of producers increases, the parameter $\gamma_p$ decreases and the
analysis above implies that equilibrium spot prices are lower, as expected.

\subsubsection*{Investors' risk aversion and spot/forward prices}

Let us now examine the investors' side. When they become more risk averse, they 
are less willing to undertake the risk of a forward position. This is 
illustrated in Figure \ref{fig:BM-2} (top-right), where the percentage 
$\hat{h}/(\pi_T+\hat\alpha)$ (i.e. the percentage of forward contracts with 
respect to the total supply at time $T$) is plotted. Also, as the theory of 
normal backwardation states, more risk averse investors would require higher 
forward premium to enter into the forward contract. This premium is usually 
measured by the fraction $(\E[\hat{P}_T]-\hat{F})/\hat{F}$ which is plotted in 
Figure \ref{fig:BM-2} (bottom-right). On the other hand, a higher forward 
premium implies that hedging is more expensive for the producers, hence they 
intend to supply more in the spot market and store less; note that the 
optimal storage amount even equals zero in some cases as the top-left of Figure 
\ref{fig:BM-2} shows). Summarizing, when investors are more risk averse they 
invest less in forward contracts, which reduces the amount that producers can 
use for hedging; thus, producers offer more on the spot market, rendering 
equilibrium spot prices lower (see the bottom-left of Figure \ref{fig:BM-2}).

Turning our attention to the effect of the correlation between the consumers' 
demand and the financial markets' return, we note that the equilibrium 
quantities mainly depend on the square of $\rho$; this is basically because 
investors can go both long and short in the stock market. When $\rho^2$ 
increases, the effective risk aversion of the investors', which is 
$\bar{\gamma}_s=\gamma_s(1- \rho^2)$, decreases. Therefore, an increase of 
$\rho^2$ is eventually equivalent to a decrease of $\gamma_s$. This is expected 
because when the financial and the commodity markets are correlated, the 
investors can partially hedge the risk they undertake on a forward commodity 
contract by adjusting their investment strategy in the stock market accordingly. 
Hence, they become more risk tolerant. The dependence of the equilibrium 
quantities on the correlation coefficient $\rho$ is illustrated in Figure 
\ref{fig:BM-2}.

The effect of the investors' risk aversion on market equilibrium can be used to 
examine how the number of investors affects the equilibrium commodity prices. In 
the present framework of CARA preferences, the parameter $1/\gamma_s$ measures 
the investors' \textit{aggregate} risk tolerance. Hence, if the number of 
investors increases, the parameter $\gamma_s$ considered in the above analysis 
decreases. As we have seen, the latter implies, among other things, higher 
equilibrium spot prices. This theoretical result is consistent with the observed 
comovement of the amounts invested in the commodity forward contracts and the 
commodity spot prices (see the related discussion in the introduction).

\subsubsection*{Convenience yield, correlation and uneven productions}

As mentioned in the introduction, the convenience yield is a measure of the
implicit benefit that inventory holders receive. Positivity of the convenience
yield is consistent with the theory of storage. In our model, the convenience
yield denoted by $y$ solves the equation
\begin{equation}\label{eq:yield}
F=P_0\frac{1+R}{1-\varepsilon}-yP_0,
\end{equation}
see e.g. \cite{AchaLochRama13}. The relation of the yield with respect to the 
risk aversion coefficients of the producers and the investors is illustrated in 
Figure \ref{fig:BM-3}. As expected, $y$ is increasing with respect to both risk 
aversion coefficients (all else equal). The relation for the producers' side 
follows readily from Figure \ref{fig:BM-1}, since higher producers' risk 
aversion implies higher spot equilibrium price and higher forward premium (and 
also lower equilibrium forward price). Similarly, as the risk tolerance of the 
investors decreases, the cost of hedging increases, which makes producers sell 
more at the spot rather than storing and selling at a future date (see, in 
particular, the bottom-right of Figure \ref{fig:BM-2}).

The relation of the yield with respect to the correlation coefficient is more 
involved. When $\rho^2$ increases, there are two effects of opposite directions 
on the convenience yield. The first is negative and stems from the decrease of 
the investors' effective risk aversion, while the second is positive and comes 
from the corresponding increase of the spot price (see the bottom-left of Figure 
\ref{fig:BM-1}). The final outcome depends on the level of the risk aversions 
and the difference of production levels (see Figures \ref{fig:BM-3} 
and\ref{fig:BM-6}). In particular, assuming that production levels are close to 
each other, when producers are sufficiently risk averse (tolerant), $y$ is 
decreasing (increasing) in $\rho^2$. Note also that the steep increase of the 
convenience yield when $\rho$ approaches zero (right graph on Figure 
\ref{fig:BM-3}) occurs when the storage is zero (compare with the top-left of 
Figure \ref{fig:BM-2}), since in this case only the negative effect of $\rho^2$ 
in the convenience yield occurs (when the storage is zero, the spot price does 
not increase).

On the other hand, the difference between the production levels $\pi_0$ and 
$\pi_T$ could change the monotonicity of the convenience yield with respect to 
the correlation coefficient. Indeed, when production at time $T$ is sufficiently 
larger than the initial production, storage is getting lower and hence the 
negative effect of $\rho$ in the convenience yield prevails. As the difference 
$\pi_T-\pi_0$ decreases, the positive effect that stems from the increased spot 
price is getting more influential, especially when producers are less risk 
averse (right of Figure \ref{fig:BM-6}).  

Finally and as expected, for any correlation level, scarcity of commodity at 
initial time implies higher convenience yield (see both sides of Figure 
\ref{fig:BM-6}). In particular, when $\pi_T$ is sufficiently larger than 
$\pi_0$, the influence of the correlation $\rho$ on the convenience yield 
increases, a fact that reflects the producers' benefit from satisfying their 
increased hedging through the forward contract (the latter is more intense when 
producers are more risk averse).

\begin{figure}[h!]\label{fig:yield}
 \centering
  \begin{minipage}{0.49\textwidth}
 \includegraphics[trim = 10mm 0mm 10mm 5mm, clip,width=7.6cm]{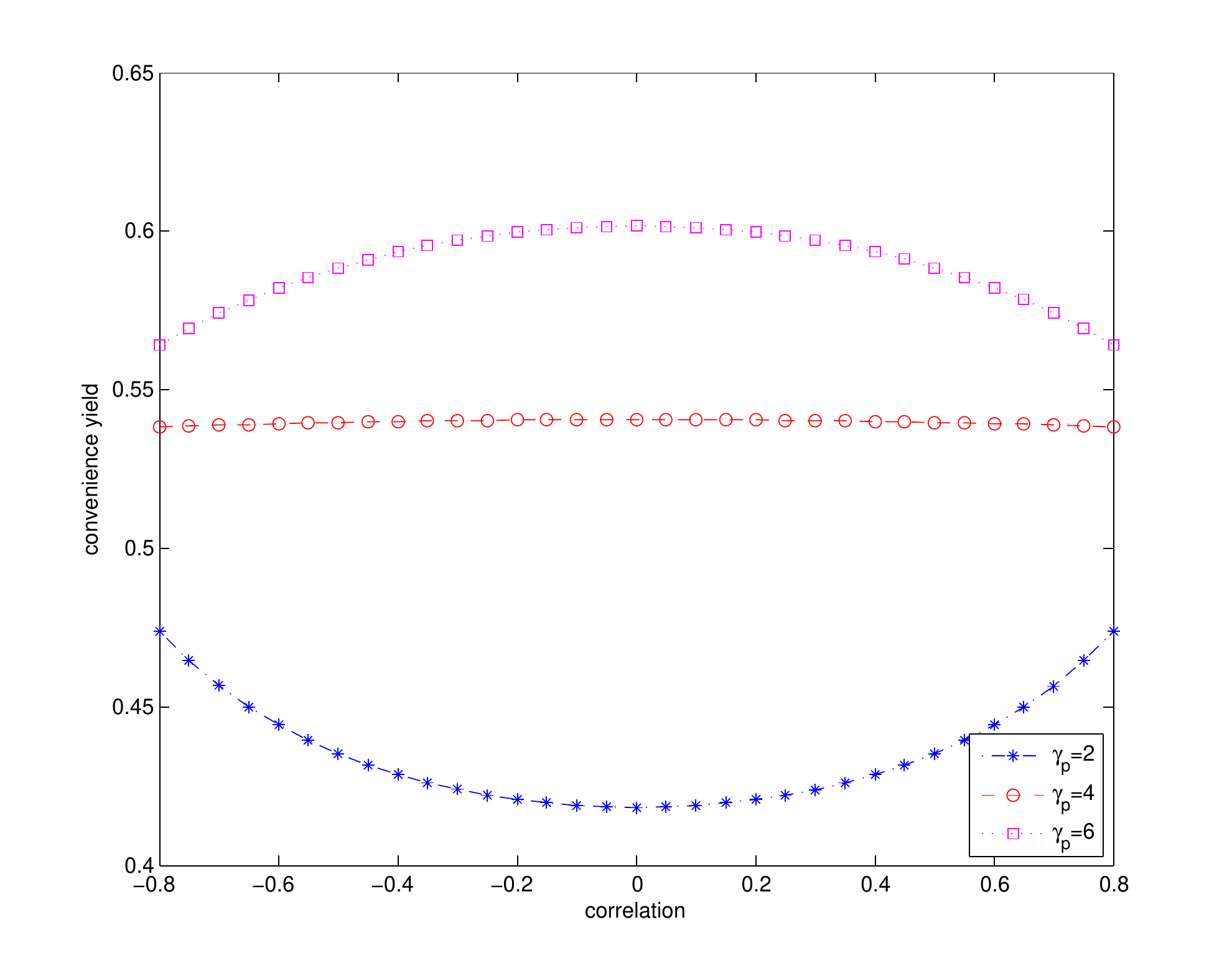}
\end{minipage}
 \begin{minipage}{0.49\textwidth}
 \includegraphics[trim = 10mm 0mm 10mm 5mm, clip,width=7.6cm]{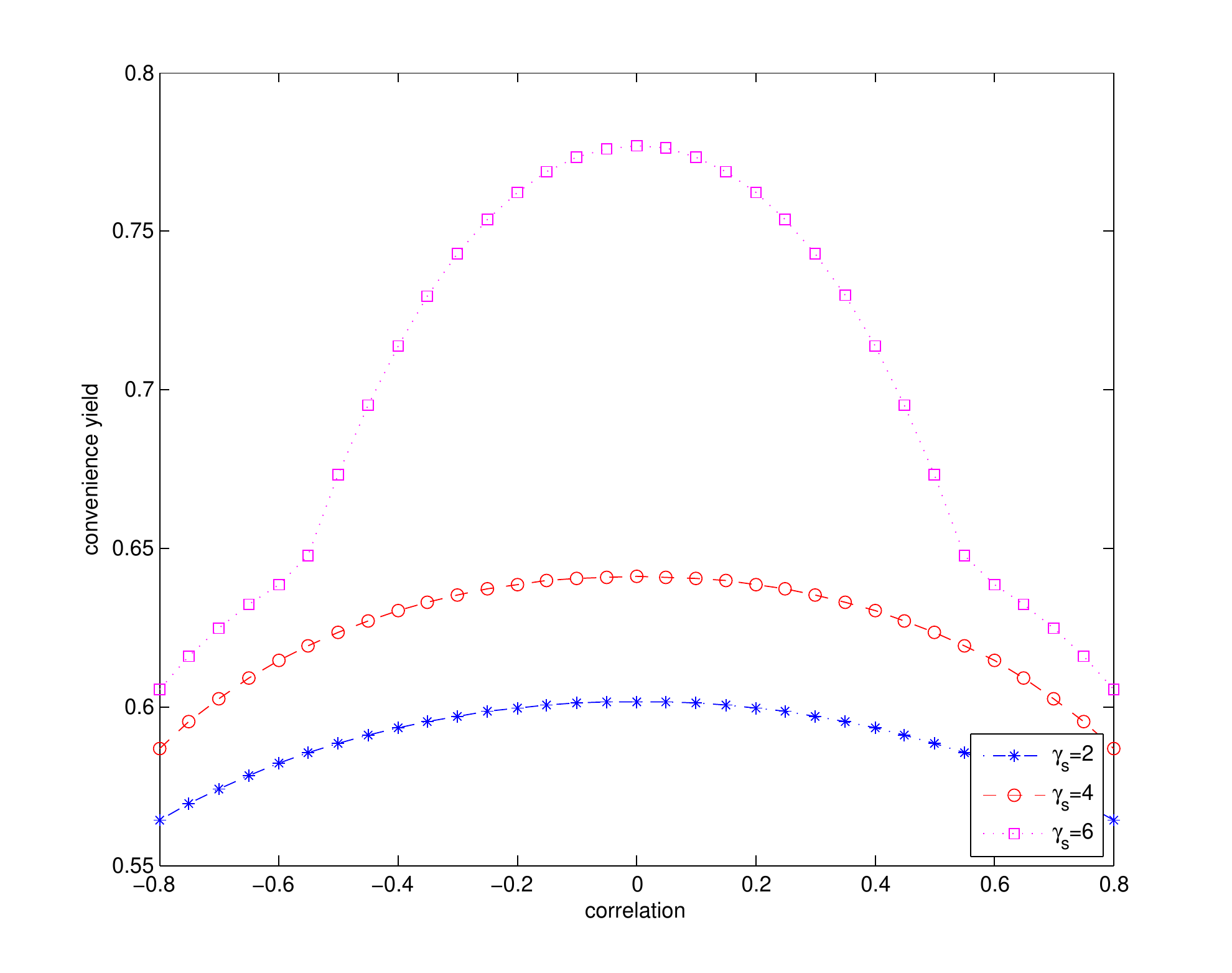}
\end{minipage}
\caption{Equilibrium convenience yield as a function of correlation for      
different values of producers' risk aversion $\gamma_p$ (left) and     
investors' risk aversion $\gamma_s$ (right), when the production levels are 
equal ($\pi_0=\pi_T$).}
 \label{fig:BM-3}
\end{figure}

\begin{figure}[h!]
 \centering
  \begin{minipage}{0.49\textwidth}
 \includegraphics[trim = 10mm 0mm 10mm 5mm, clip,width=7.6cm]{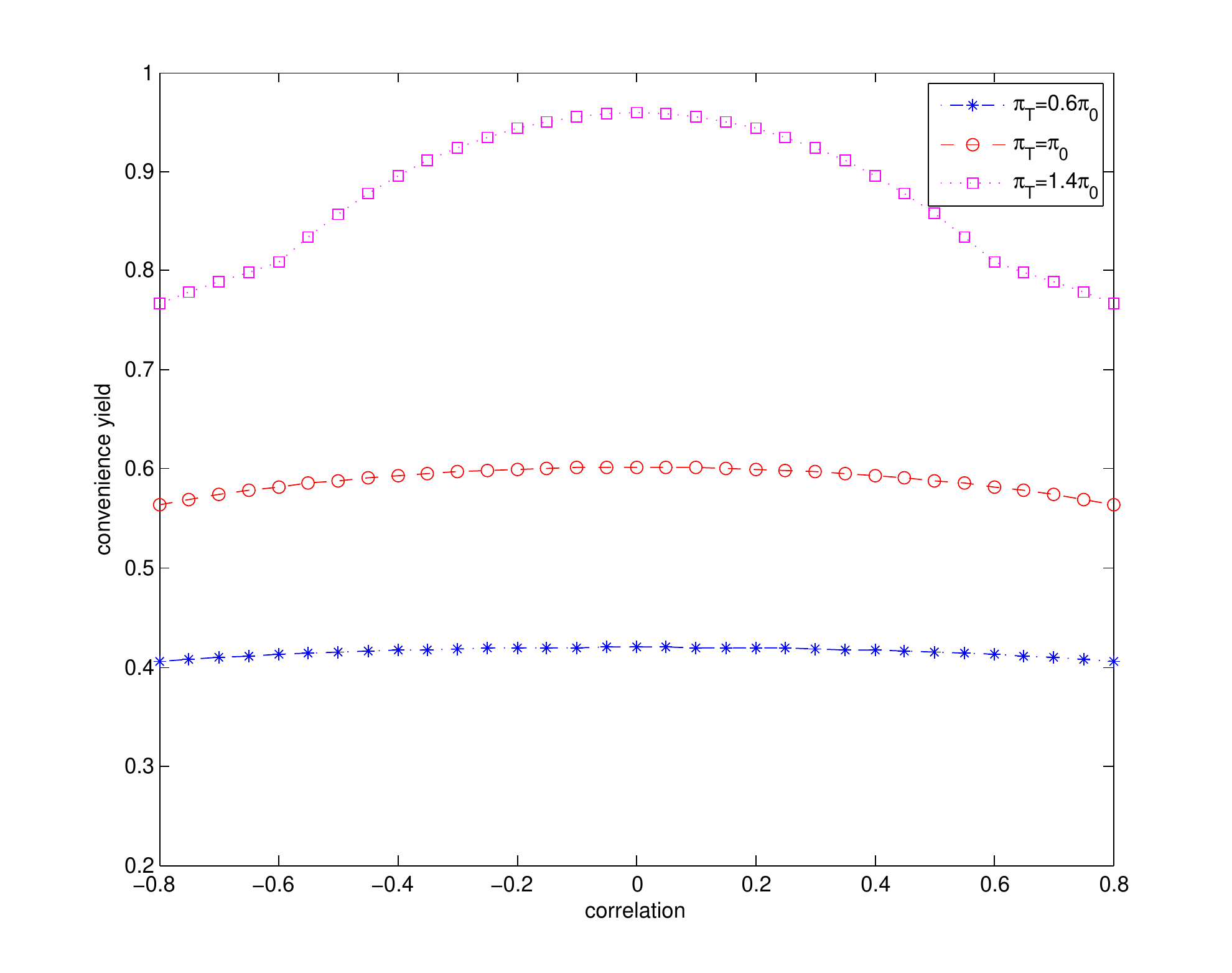}
\end{minipage}
 \begin{minipage}{0.49\textwidth}
 \includegraphics[trim = 10mm 0mm 10mm 5mm, clip,width=7.6cm]{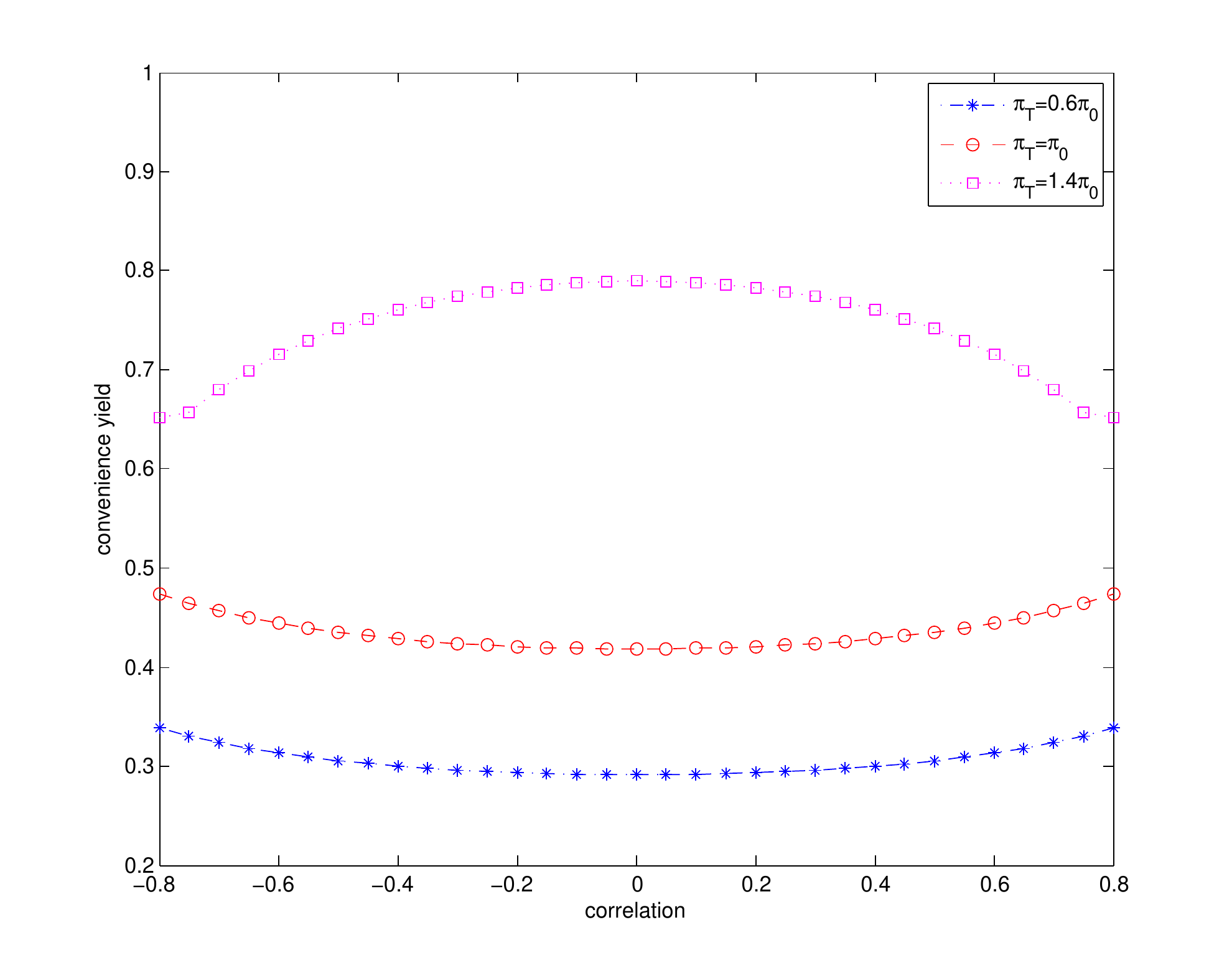}
\end{minipage}
\caption{Equilibrium convenience yield as a function of correlation for
     different values of production levels, when producers are more risk averse (left) and
     less risk averse (right).}
 \label{fig:BM-6}
\end{figure}

\begin{remark}
In \citet{AchaLochRama13}, the authors present an extensive empirical analysis, 
based on an equilibrium model simpler than the one we have established and 
developed above. In Section 3 therein, data from spot and future markets of oil 
and gas is used for testing the predictions of the model. Our model offers a 
much richer set-up, not only regarding the families of probability 
distributions, but also because it includes in the analysis the relation of 
commodity and stock markets. One could apply similar methodology in order to 
test the predictions of our model, in particular the relation between the 
correlation of the stock market and the commodity demand with the forward 
premia, the volume in forward contracts and the optimal storage amount. We leave 
this interesting task as a subject for future research. 
\end{remark}

\subsection{The effect of an existing hedge from a previous cycle}

Aside from uneven production levels, another factor that decreases the 
producers' tendency for storing is an already undertaken hedging position from a 
previous production and trading cycle. In this respect, we consider the 
situation where a forward contract with maturity $T$ was already issued during 
the previous cycle, and assume that producers have already taken a long position 
on it. We then examine how this existing hedge affects the spot and forward 
equilibrium quantities. More precisely, the total position of the producers' 
takes the following form (compare to \eqref{eq:producers_position})
\begin{align}\label{eq:producers_position_dh}
\underline{w}(\alpha,h^p)
 &= P_0(\pi_0-\alpha)(1+R)+P_T(\pi_T+\alpha(1-\varepsilon))+h^p(P_T-F) + h'(P_T-F'),
\end{align}
where $h'$ denotes the position in the forward contract with maturity at time 
$T$ that was bought with strike price $F'$ at the previous cycle. The 
optimization problem for the producers has again the same form as in 
\eqref{eq:producers_problem}, and in particular for the model of subsection 
\ref{subsec:bm model}, it takes the form of the following quadratic programming 
problem:
\begin{equation}\label{eq:producers_maximization_brown_dh}
\Pi^p
 = \underset{h^p\in\R,\alpha\in[0,\pi_0]}{\max}
   \big\{ \ud_1\alpha^2 + \ud_2'\alpha + \ud_3\alpha h^p
            + \ud_4(h^p)^2 + \ud_5'h^p+\ud_6\big\},
\end{equation}
where $\ud_1, \ud_3, \ud_4$ remain the same and are provided by 
\eqref{eq:constants_prod_prob}, while  
\begin{align*}
\ud_2' &= \frac{2(1+R)\pi_0-(1-\varepsilon)(2\pi_T+h')-(R+\varepsilon)\mu}{m}
            - \gamma_p(1-\varepsilon)(\pi_T+h') \Var[P_T] \nonumber\\
\ud_5' &= -\Big(F-\frac{\mu-\pi_T}{m}\Big) - \gamma_p(\pi_T+h') \Var[P_T].
\end{align*}

We observe that an existing long position in the forward contract, i.e.~$h'>0$, 
has the same impact as an increase of future production $\pi_T$. Therefore, as 
in Figures \ref{fig:BM-4}, positive $h'$ implies less storage and hence lower 
spot equilibrium price. In addition, as in Figure \ref{fig:BM-6}, when producers 
have already hedged some of their risk, the convenience yield increases, a fact 
that reflects better inventory management. Similarly and as expected, a gradual 
hedging effectively serves to stabilize commodity prices (see Figure 
\ref{fig:BM-5}). This is because the tendency to increase spot prices by forward 
contracts is gradually applied to prices. Intuitively, when producers hedge the 
future price uncertainty by rolling their position in the forward contract, the 
effect on the spot prices is spread through time. Note, however, that an 
intimate analysis on the gradually optimal hedging requires a dynamic version of 
our equilibrium model, which is left open for future research.

\bigskip

\appendix
\section{Proofs of Section \ref{sec:illu}}
\label{app-proofs}

\begin{proof}[Proof of Proposition \ref{prop:BM}]
The model with dynamics \eqref{eq:model-BM} fits in the framework of Section
\ref{model-ctdm} by considering a 2-dimensional Brownian motion $Z$ whose
characteristic triplet has the form
\begin{equation}\label{eq:example1}
b = \left(%
\begin{array}{c}
  b_1 \\ 0
\end{array}%
\right), \quad
c = \left(%
\begin{array}{cc}
  \sigma_1^2 & \rho \sigma_1 \sigma_2 \\
  \rho \sigma_1 \sigma_2 & \sigma_2^2
\end{array}%
\right)\quad \text{ and } \quad
\nu \equiv 0,
\end{equation}
where $b\in\R$, $\sigma_1,\sigma_2\in\R_+$ and $\rho\in[-1,1]$, while the
vectors $u_1,u_2\in\R^2$ have the form
\begin{equation}\label{eq:example2}
u_1 = \left(%
\begin{array}{c}
  1 \\ 0 \\
\end{array}%
\right)\quad \text{ and } \quad
u_2 = \left(%
\begin{array}{c}
  0 \\ 1 \\
\end{array}%
\right).
\end{equation}
The cumulant generating functions of $Y_1=\scal{u_1}{Z_1}$ and
$X_1=\scal{u_2}{Z_1}$ are given by
\begin{align}
\kappa_1(v) = vb_1 + \frac{v^2\sigma_1^2}{2}
 \quad \text{ and } \quad
\kappa_2(v) = \frac{v^2\sigma_2^2}{2}.
\end{align}
The set $\UUU_Z$ equals $\R^2$, thus Assumption \ref{cond:EM} and
\eqref{cond:boundedF} are trivially satisfied, while the same is true for
Assumption \ref{cond:FE} since $\nu\equiv0$. Assumption \ref{cond:COE} is also
satisfied due to $\kappa_2$ being quadratic in $u$, while Assumption
\ref{cond:NALevy} is fulfilled since we can construct a martingale measure under
which $Z$ remains a Brownian motion.

Starting with the producers' side, the optimal hedging and storage positions are
determined by Proposition \ref{pro:producers_problem} and \eqref{eq:up-Levy},
leading to the following quadratic programming problem:
\begin{equation}\label{eq:producers_maximization_brown}
\Pi^p
 = \underset{h^p\in\R,\alpha\in[0,\pi_0]}{\max}
   \big\{ \ud_1\alpha^2 + \ud_2\alpha + \ud_3\alpha h^p
            + \ud_4(h^p)^2 + \ud_5h^p+\ud_6\big\},
\end{equation}
where
\begin{align}\label{eq:constants_prod_prob}
\ud_1 &= - \left( \frac{1+R+(1-\varepsilon)^2}{m}
            + \frac{\gamma_p(1-\varepsilon)^2}{2} \Var[P_T] \right) \nonumber\\
\ud_2 &= \frac{2(1+R)\pi_0-2(1-\varepsilon)\pi_T-(R+\varepsilon)\mu}{m}
            - \gamma_p(1-\varepsilon)\pi_T \Var[P_T] \nonumber\\
\ud_3 &= -\frac{1-\varepsilon}{m} - \gamma_p(1-\varepsilon)\Var[P_T] \\
\ud_4 &= -\frac{\gamma_p}{2} \Var[P_T] \nonumber\\\nonumber
\ud_5 &= -\Big(F-\frac{\mu-\pi_T}{m}\Big) - \gamma_p\pi_T \Var[P_T].
\end{align}
The first order conditions yield the following solutions
\begin{align}
h^{p,*} = -\frac{\alpha \ud_3 + \ud_5}{2\ud_4}
 \quad \text{ and } \quad
\alpha^* = \frac{\ud_3\ud_5-2\ud_2\ud_4}{4\ud_1\ud_4-\ud_3^2}.
\end{align}
Therefore, the optimal strategy $(\hat\alpha,\hat{h}^p)\in[0,\pi_0]\times\R$
for the producers' problem is provided by
\begin{equation}\label{eq:producers_optimal_brown}
\hat\alpha = (\alpha^*\vee 0)\wedge\pi_0
 \quad \text{ and } \quad
\hat{h}^p = -\frac{\hat{\alpha}\ud_3 + \ud_5}{2\ud_4}.
\end{equation}

Next, we turn our attention to the investors' problem and follow the strategy 
outlined in subsection \ref{subs:spec_revisited}. The cumulant generating 
function of
$Z$ under $\P_\sss$ is provided by
\begin{align}
\kappa^\sss(v) = \scal{v-\xi}{b} + \frac{\scal{v-\xi}{c(v-\xi)}}{2},
\end{align}
where $\xi=-\frac{\gamma_s h^s}{m}u_2$. In particular, the characteristics of
$Y$ under $\P_\sss$ are
\begin{align}
b_1^\sss = b_1 - \rho\sigma_1\sigma_2 \frac{\gamma_s h^s}{m}
 \quad \text{ and } \quad
c_1^\sss = \sigma_1^2,
\end{align}
thus the characteristics of the exponential transform $\widetilde{Y}$ under
$\P_\sss$ are
\begin{align}
\widetilde{b}_1^\sss
 = b_1 - \rho\sigma_1\sigma_2 \frac{\gamma_s h^s}{m}
 + \frac{\sigma_1^2}{2}
 \quad \text{ and } \quad
\widetilde{c}_1^\sss = \sigma_1^2.
\end{align}
The cumulant generating function of $\widetilde{Y}$ simply has the form
\begin{align}
\widetilde{\kappa}_1^\sss (v)
 = v\Big(b_1 - \rho\sigma_1\sigma_2 \frac{\gamma_s h^s}{m} +
    \frac{\sigma_1^2}{2} \Big)
 + \frac{v^2\sigma_1^2}{2},
\end{align}
and its derivative obviously equals
\begin{align}
\frac{\partial }{\partial v} \widetilde{\kappa}_1^\sss (v)
 = b_1 - \rho\sigma_1\sigma_2 \frac{\gamma_s h^s}{m}
 + \frac{\sigma_1^2}{2} + v\sigma_1^2.
\end{align}
Therefore, the solution to equation \eqref{def:theta*} is
\begin{align}
\eta^* = \rho\frac{\sigma_2}{\sigma_1} \frac{\gamma_s h^s}{m}
       - \frac{b_1}{\sigma_1^2} - \frac12,
\end{align}
and the minimal entropy equals
\begin{align}
\mathcal{H}(\P_*|\P_\sss)
 = \frac{T}2 (\eta^*\sigma_1)^2
 = \frac{T}2 \Big( \lambda - \rho\sigma_2\frac{\gamma_s h^s}{m} \Big)^2.
\end{align}
Here $\lambda$ denotes the `market price of risk' for the asset $S$, i.e.
$\lambda=\frac{\mu_1-r}{\sigma_1}$, with $\mu_1$ being the expected rate of
return of $S$ and $r$ the continuously compounded interest rate, while we have
also used that $b_1=\mu_1-r-\sigma_1^2/2$.

The investors' optimal position in the forward contract is determined by
\eqref{eq:us-Levy}, leading to the following quadratic optimization problem:
\begin{align}\label{eq:investors_maximization_brown}
\Pi^s = \max_{h^s\in\R} \big\{ \ud_7(h^s)^2  + \ud_8h^s + \ud_9\big\},
\end{align}
where
\begin{align}
\ud_7 &= -\frac{\gamma_s}{2} (1-\rho^2) \Var[P_T]\\
\ud_8 &= \E[P_T] - F - \lambda\rho\sqrt{T}\sqrt{\Var[P_T]}.
\end{align}
Applying the first order conditions once again, we arrive at the optimal
position for the investors
\begin{align}
\hat{h}^s = \frac{\E[P_T]-F}{\bar{\gamma}_s\Var[P_T]}
      - \frac{\lambda\rho\sqrt{T}}{\bar{\gamma}_s\sqrt{\Var[P_T]}},
\end{align}
where $\bar{\gamma}_s=\gamma_s(1-\rho^2)$.
\end{proof}

\medskip

\begin{proof}[Proof of Proposition \ref{prop:LJD}]
The model with dynamics \eqref{ljd-X} fits in the framework of Section
\ref{model-ctdm} by considering a 2-dimensional \lev process $Z$ whose
characteristic triplet has the form
\begin{equation}\label{ljd-triplet-X}
b=\left(%
\begin{array}{c}
  b_1 \\
  b_2
\end{array}%
\right),\,\,\,\, c=\left(%
\begin{array}{cc}
  \sigma_1^2 & \rho\sigma_1\sigma_2 \\
  \rho\sigma_1\sigma_2 & \sigma_2^2
\end{array}%
\right)\,\,\,\,\text{ and }\,\,\,\,
\nu(\dx_1,\dx_2)= \lambda 1_{\{\eta_1,\eta_2\}}(\dx_1,\dx_2),
\end{equation}
while the vectors $u_1,u_2\in\R^2$ are provided by \eqref{eq:example2}. The
cumulant generating function of $Y_1=\scal{u_1}{Z_1}$ and
$X_1=\scal{u_2}{Z_1}$ is given by
\begin{align}\label{ljd-cumulant-12}
\kappa_i(v)
 &= vb_i + \frac{v^2\sigma_i^2}{2} + \lambda\left(\e^{v\eta_i}-1\right)
 \quad i=1,2.
\end{align}
The set $\UUU_Z$ equals $\R^2$, thus Assumption \ref{cond:EM} and
\eqref{cond:boundedF} are trivially satisfied, while the same is true for
Assumption \ref{cond:FE}. Assumption \ref{cond:COE} is also satisfied due to
$\kappa_2$ being quadratic in $v$, while Assumption \ref{cond:NALevy} is
satisfied since we can construct a martingale measure under which $X$ has
finite first moment.

Starting with the producers side, the optimal hedging and storage positions are
provided by Proposition \ref{pro:producers_problem} and \eqref{eq:up-Levy},
leading to the following optimization problem:
\begin{equation}\label{ljd-producers_maximization_brown}
\Pi^p
 = \underset{h^p\in\R,\alpha\in[0,\pi_0]}{\max} f(\alpha,h^p)
\end{equation}
where
\begin{align}
f(\alpha,h^p)
 := \ud_1\alpha^2 + \ud_2\alpha + \ud_3\alpha h^p + \ud_4(h^p)^2
  + \ud_5h^p + \ud_6 + j(\alpha,h^p),
\end{align}
with
\begin{align}
j(\alpha,h^p)
 := - \lambda \left( \e^{-\gamma_p\eta_2\ell(\alpha,h^p)} -1
      \right)\frac{T}{\gamma_p}.
\end{align}
The coefficients $\ud_1,\dots,\ud_5$ are provided by
\eqref{eq:constants_prod_prob} by replacing $\Var[P_T]$ with
$\frac{\sigma_2^2T}{m}$. The first order optimality conditions lead to the
system of non-linear equations \eqref{eq:jump_system_01}, i.e.
\begin{align}\label{eq:jump_system_1}
\begin{cases}
\frac{\partial }{\partial \alpha} f(\alpha,h^p)
 &= 2\ud_1\alpha + \ud_2 + \ud_3h^p +
    \frac{\lambda\eta_2T(1-\varepsilon)}{m}\e^{-\gamma_p\eta_2\ell(\alpha,h^p)}
  =0,\\
\frac{\partial }{\partial h^p} f(\alpha,h^p)
 &= \ud_3\alpha + 2\ud_4h^p + \ud_5 +
    \frac{\lambda\eta_2T}{m}\e^{-\gamma_p\eta_2\ell(\alpha,h^p)}
  =0,
\end{cases}
\end{align}
and its solution is denoted by $(\alpha^*,h^{p,*})$, where the relation of
$\alpha^*$ and $h^{p,*}$ is given by the following linear equation
\begin{align*}
\alpha^*
  = \frac{2(1-\varepsilon)\ud_4-\ud_3}{2\ud_1-(1-\varepsilon)\ud_3}h^{p,*}
  + \frac{(1-\varepsilon)\ud_5-\ud_2}{2\ud_1-(1-\varepsilon)\ud_3}.
\end{align*}
Therefore, the optimal strategy $(\hat\alpha,\hat{h}^p)\in[0,\pi_0]\times\R$
for the producers problem is provided by
\begin{equation}\label{ljd-producers_optimal_brown}
\hat\alpha = (\alpha^*\vee 0)\wedge\pi_0
 \quad \text{ and } \quad
\hat{h}^p = h^{p,*}(\hat\alpha).
\end{equation}

Next, we turn our attention to the investors problem and follow again the 
strategy of subsection \ref{subs:spec_revisited}. The characteristics of $Y$ 
under $\P_\sss$ are provided by Lemma \ref{lem:triplet}, thus using 
\eqref{ljd-triplet-X} we get that
\begin{align}
b_1^\sss &= b_1 - \rho\sigma_1\sigma_2\zeta
      + \lambda\eta_1(\e^{-\eta_2\zeta}-1) \nonumber\\
c_1^\sss &= \sigma_1^2 \\\nonumber
1_E(y)*\nu_1^\sss &= 1_E(\scal{u_1}{x}) \e^{\scal{\xi}{x}} * \nu,
\end{align}
where $\zeta:=\frac{\gamma_sh^s}{m}$, $E\in\mathcal{B}(\R)$ and ``$*$'' denotes
integration. Therefore, the cumulant generating function of $Y$ under $\P_\sss$
takes the form
\begin{align}
\kappa_1^\sss (v)
  = vb_1^\sss + \frac{v^2\sigma_1^2}{2} + \lambda (\e^{v\eta_1}-1)
    \e^{-\zeta\eta_2}.
\end{align}
Moreover, the characteristics of the exponential transform $\widetilde{Y}$ of
$Y$ are provided by \eqref{eq:exp-trans-triplet}, thus we obtain that
\begin{align}
\widetilde{b}^\sss_1 &= \kappa_1^\sss(1) \nonumber\\
\widetilde{c}_1^\sss &= \sigma_1^2 \\\nonumber
1_E(z)*\widetilde{\nu}_1^\sss &= 1_E(\e^y-1)*\nu_1^\sss.
\end{align}
Hence, the cumulant generating function of $\widetilde{Y}$ under $\P_\sss$
equals
\begin{align}
\widetilde{\kappa}^\sss_1 (v)
 &= v\widetilde{b}^\sss_1 + \frac{v^2\sigma_1^2}{2}
  + \lambda \e^{-\eta_2\zeta}
    \left(\e^{v(\e^{\eta_1}-1)}-1\right)
\end{align}
and its derivative with respect to $v$ equals
\begin{align}
\frac{\partial}{\partial v} \widetilde{\kappa}^\sss_1 (v)
 &= \widetilde{b}^\sss_1 + v\sigma_1^2
  + \lambda \e^{-\eta_2\zeta}\e^{v(\e^{\eta_1}-1)}(\e^{\eta_1}-1).
\end{align}
The minimal entropy martingale measure is determined by the solution $\eta_*$
to the non-linear equation
\begin{align}\label{eq:jump_system_2}
\widetilde{b}^\sss_1 + \eta_* \sigma_1^2
+ \lambda \e^{-\eta_2\zeta}\e^{\eta_*(\e^{\eta_1}-1)}(\e^{\eta_1}-1) = 0,
\end{align}
and then the minimal entropy equals
\begin{align}
\mathcal{H}(\P_*|\P_\sss)
  = -\frac{T}{\gamma_s} \widetilde{\kappa}^\sss_1 (\eta_*).
\end{align}
Now, putting the pieces together, the investors optimization problem takes the
form
\begin{align}
\Pi^s = \max_{h^s} g(h^s)
\end{align}
where
\begin{align}\label{eq:jump_system_3}
g(h^s)
 := -\frac{T}{\gamma_s} \left\{ \widetilde{\kappa}^\sss_1(\eta_*)
  + \kappa_2\left(-\frac{\gamma_sh^s}{m}\right) \right\}
  + h^s\big( \EE[P_T] -F \big),
\end{align}
and the maximizer is determined by the first order conditions, leading to
\eqref{eq:jump_system_02}.
\end{proof}

\bibliographystyle{plainnat}
\bibliography{references}

\end{document}